\newcommand\numberthis{\addtocounter{equation}{1}\tag{\theequation}}
\numberwithin{equation}{section}
\numberwithin{table}{section}
\definecolor{tolblue}{HTML}{4477AA}
\DeclareMathOperator*{\argmin}{arg\,min}
\renewcommand{\P}{\mathbb{P}}
\newcommand{\E}{\mathbb{E}}
\newcommand{\V}{\mathbb{V}}
\newcommand{\g}{}
\newcommand{\R}{\mathbb{R}}
\newcommand{\NN}{\mathcal N}
\newcommand{\DD}{\mathcal D}
\newcommand{\I}{\mathbbm{1}}
\newcommand{\rmise}{\sqrt{\text{MISE}}}
\newcommand{\rise}{\sqrt{\text{ISE}}}
\newcommand{\1}{\mathbbm{1}}
\renewcommand{\d}{\mathrm{d}}
\newcommand{\ind}{\perp\!\!\!\!\perp} 
\newcommand{\diam}{\operatorname{diam}}
\newtheorem{cor}{Corollary}
\newtheorem{lemma}{Lemma}
\newtheorem{theorem}{Theorem}
\newtheorem{proposition}{Proposition}
\theoremstyle{remark}
\newtheorem{example}{Example}
\theoremstyle{definition}
\newtheorem{remark}{Remark}
\newtheorem{assumption}{Assumption}
\newcommand\CoAuthorMark{\footnotemark[\arabic{footnote}]}
\title{
\bf
Extremal Random Forests
}
\author[1, 2]{Nicola Gnecco\thanks{Authors contributed equally.}}
\author[2, 3]{Edossa Merga Terefe\protect\CoAuthorMark}
\author[2]{Sebastian Engelke}
\affil[1]{Department of Mathematical Sciences, University of Copenhagen, Denmark}
\affil[2]{Research Center for Statistics, University of Geneva, Switzerland}
\affil[3]{Statistics Department, Hawassa University, Ethiopia}
\date{\today}
\begin{document}

\maketitle

\section*{Abstract}
Classical methods for quantile regression fail in cases where the quantile of interest is extreme and only few or no training data points exceed it. Asymptotic results from extreme value theory can be used to extrapolate beyond the range of the data, and several approaches exist that use linear regression, kernel methods or generalized additive models.
Most of these methods break down if the predictor space has more than a few dimensions or if the regression function of extreme quantiles is complex. We propose a method for extreme quantile regression that combines the flexibility of random forests with the theory of extrapolation. Our extremal random forest (ERF) estimates the parameters of a generalized Pareto distribution, conditional on the predictor vector, by maximizing a local likelihood with weights extracted from a quantile random forest.  We penalize the shape parameter in this likelihood to regularize its variability in the predictor space. Under general domain of attraction conditions, we show consistency of the estimated parameters in both the unpenalized and penalized case.
Simulation studies show that our ERF outperforms both classical quantile regression methods and existing regression approaches from extreme value theory.
We apply our methodology to extreme quantile prediction for U.S.~wage data.

{\it Keywords:}   extreme quantiles; local likelihood estimation; quantile regression; random forests; threshold exceedances.

\section{Introduction} \label{sec:intro}

Quantile regression is a well-established technique to model statistical quantities that go beyond the conditional expectation that is used for standard regression analysis \citep{koen1978}.
This is particularly valuable in applications such as economics, survival analysis, medicine, and finance~\citep{angrist2009b, yang1999censored, heagerty1999semiparametric, taylor1999quantile,keming2003},
where one needs to model the heteroscedasticity of the response or conditional quantiles such as the median.

In this paper, we consider the problem of estimating high conditional quantiles of a response variable $Y \in \R$ given a set of predictors $X \in \R^p$ in large dimensions, an important task in risk
assessment for rare events \citep{chernozhukov2005}.
For a fixed predictor value $x$, define $Q_x(\tau)$ as the quantile at level $\tau\in(0, 1)$ of the conditional distribution of $Y \mid X = x$.
We are interested in estimating extreme quantiles where $\tau \approx 1$ is close to one. This estimation problem exhibits two fundamental challenges that are illustrated in Figure~\ref{fig:intro}, which shows a simulation similar to \citet[][Figure~2]{athe2019}. The predictor space has $p=40$ dimensions, and only the first variable $X_1$ has a signal corresponding to a scale shift in $Y$; see Example~\ref{ex:gen-mod} in Section~\ref{sec:algo} for details.

The first challenge in estimating $Q_x(\tau)$ relates to the fact that for an extreme probability level, say $\tau = 0.9995$ as in Figure~\ref{fig:intro}, there are typically only a few or no observations in the sample that exceed the corresponding conditional $\tau$-quantiles. Indeed, for a sample of size $n$, the expected number of exceedances above the conditional $\tau$-quantile is $n(1-\tau)$, which becomes smaller than one if $\tau > 1 - 1/n$. Therefore, using an empirical estimator based on quantile loss leads to a large bias and variance.
A second challenge stems from the possibly large dimension of the
predictor space $\mathbb R^p$, where there might be no training observations close to $\g x$; note that the Figure~\ref{fig:intro} only shows the first of the 40 dimensions of $\g X$. Too simple regression models may then introduce additional bias.

The first challenge can be addressed by relying on tail approximations motivated by extreme value theory \citep[e.g.,][]{deh2006a}, which allow the extrapolation to quantile levels beyond the range of the data. Existing methods that use extrapolation in the presence of predictors rely on (transformations of) linear \citep{chernozhukov2005, WangTsai2009, huixia2012, huixia2013} functions, additive models \citep{CDD05, benjamin2019}, non-parametric regression \citep{beirlant2004, martins2015} and local smoothing methods \citep{abdelaati2011, girard2012, gar2014, goe2014, goe2015, GardesStupfler2019, Velthoenetal2019, girard2022}.
However, these approaches are either not flexible enough to model complex response surfaces or do not scale well in larger dimensions $p$ of the predictor space.

Regarding the second challenge, several quantile regression methods have been proposed in the statistical and machine learning literature that can cope with predictor spaces in large dimensions and complex regression surfaces \citep{taylor2000quantile, frie2001}.
In particular, here exist several forest-based approaches for quantile regression \citep{mein2006, athe2019}.
These methods are based on the random forest originally developed by~\cite{brei2001} and can estimate flexible quantile regression functions. Compared to methods such as gradient boosting and neural networks, the main advantage of forest-based approaches is that they require little tuning and that their statistical properties are relatively well understood \citep{athe2019}. They scale well with the dimension of the predictor space as opposed to approaches based on generalized additive models \citep{koen2011} and kernel-based methods \citep{yu1998}.
While these methods work well for the estimation of quantiles inside the data range, such as $\tau_n = 0.8$ 
in Figure~\ref{fig:intro}, their performance deteriorates for quantile estimation at extreme levels $\tau \approx 1$ close to the upper endpoint of the response distribution.

In this paper, we bring together ideas from extreme value theory and forest-based methods to tackle the challenges of extreme quantile regression in large predictor dimensions $p$.
To extrapolate beyond the data range, we rely on the approximation by the generalized Pareto distribution (GPD) of the exceedances over an intermediate threshold~$u$; see the triangles in Figure~\ref{fig:intro}. Under mild assumptions, the conditional distribution of $Y \mid X = x$, given that $Y > u$ can be approximated by~\citep{balk1974,pick1975}
         \begin{align}\label{eq:gpd-extrap}
            \P\left( Y - u \leq   z \mid Y > u, X = x\right) \approx  1 - \left(1 + \frac{\xi(x)z}{\sigma_u(x)}\right)_+^{-1/\xi(x)}, \qquad z \geq 0,
          \end{align}
          where $\sigma_u(x) > 0$ and $\xi(x)\in\R$ are the conditional scale and shape parameters of the GPD, respectively. This includes responses with heavy tails ($\xi(x) > 0$), light tails ($\xi(x) = 0$) and with finite upper end points ($\xi(x)<0$).
In practice, the threshold $u$ is typically an estimate of the intermediate quantile $Q_x(\tau_n)$, where $\tau_n$ is chosen small enough such that this conditional quantile can be estimated by classical regression methods, that is, the expected number of exceedances $n(1-\tau_n) \to \infty$. At the same time, it should be large enough so that the approximation in~\eqref{eq:gpd-extrap} by the GPD is accurate, that is, $\tau_n \to 1$. By inverting the distribution function of the GPD, we readily obtain an approximation that allows us to extrapolate to extreme quantiles at levels~$\tau > \tau_n$.

To cope with complex response surfaces and large predictor spaces dimensions, we rely on ideas from the random forest literature \citep{mein2006, athe2019}. Our new extremal random forest (ERF) localizes the estimation of the GPD parameter vector $\theta(x) = (\sigma_u(x), \xi(x))$ around the predictor value $x$ using forest-based weights. Since only a few extreme observations are typically available for training, the simple tuning of random forests is a great advantage.
We further propose a penalized version of the local GPD estimation that regularizes the variability of the shape parameter in the predictor space. 

While our approach can be applied for arbitrary shape parameters $\xi(x)\in\mathbb R$, for the theoretical study we concentrate on the heavy-tailed case with positive shapes.
Under general domain of attraction conditions on the conditional response $Y \mid X=x$, we show the consistency of the ERF estimator $\hat\theta(x)$ and its penalized version $\hat\theta_{\mathrm{pen}}(x)$ for the true parameter vector $\theta(x)$.
Since our loss function, namely the GPD log-likelihood, is non-convex and misspecified, i.e., the sample follows a GPD distribution only approximately, 
the proof strategy of \cite{athe2019} cannot be used. Instead, we rely on a careful analysis of the first order conditions of the GPD likelihood; see \cite{zhou2009} for the unconditional case.
As a side result, we establish the consistency of a random forest Hill estimator, a localized, predictor-dependent version of the classical estimator by \cite{hill1975}.

Our ERF algorithm combines the advantages of accurate tail extrapolation at levels $\tau \approx 1$ with a flexible regression method that scales well with predictor dimension. In simulations, we show that ERF outperforms extreme value theory and quantile regression techniques to estimate extreme quantiles. Moreover, it is competitive with the recent gradient boosting by \cite{velthoen2021} and has the advantage of significantly easier tuning and the theoretical guarantee of our consistency result. Finally, we apply~our methodology to extreme quantile prediction for U.S.~wage data \citep{angrist2009}.
The ERF algorithm is available as an \texttt{R} package at \url{https://github.com/nicolagnecco/erf}.

\begin{figure}[!tb]
  \centering
  \includegraphics[scale=1]{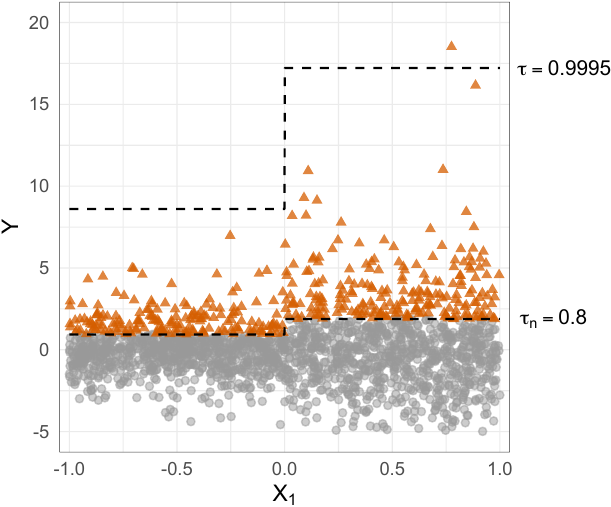}
  \caption{Realization of $n = 2000$ samples from the generative model in Example~\ref{ex:gen-mod} in Section~\ref{sec:algo}. Response $Y$ is plotted against the first predictor $X_1$.
    Dashed lines represent the quantile functions associated to the intermediate $\tau_n = 0.8$ and high $\tau = 1 - 1/n = 0.9995$ quantile levels. Triangles are observations above the intermediate threshold.}
  \label{fig:intro}
\end{figure}

\section{Background}\label{sec:background}
\subsection{Extreme Value Theory} \label{sec:evt_background}

The first challenge of extreme quantile regression is that only a few or even no data points exceed the quantiles of interest. This section considers the classical case of unconditional extremes without predictors. Let $Y_1, \dots, Y_n$ be $n$ independent copies of a real-valued random variable $Y$. The notion of an extreme quantile $\tau = \tau_n$ is typically expressed relative to the sample size $n$. The expected number of observations in the sample that exceed the $\tau_n$-quantile is then $n(1-\tau_n)$. A quantile with level $\tau_n\to 1$ such that $n(1-\tau_n) \to \infty$ is called an intermediate quantile. Empirical estimation in this case still works well since the effective sample size, that is, the number of exceedances, grows to infinity \citep{deh2006a}.
For risk assessment, the most critical case is if the quantile of interest is eventually beyond the range of the data, that is, $(1 - \tau_n) n \to 0$ as $n\to \infty$. Then, we can no longer rely on empirical estimators but must resort to asymptotically motivated approximations from extreme value theory.

Let $u^* \in (-\infty,\infty]$ be the upper endpoint of the distribution of $Y$. Under mild regularity assumptions on the tail of $Y$, the Pickands--Balkema--De Haan theorem~\citep{balk1974,pick1975} states that there exists a normalizing function $\sigma_u > 0$ with
\begin{align}\label{gpd_def}
  \lim_{u \to u^*} \mathbb P \left( \frac{Y-u}{\sigma_u} \leq z \mid Y > u\right)  = G(z; (1,\xi)),
\end{align}
where the limit on the right-hand side is the distribution function of the generalized Pareto distribution (GPD)~\citep{pick1975} given by
\begin{equation}\label{eqpot2}
  G(z; \theta) = 1-\left(1 + \frac{\xi}{\sigma} z \right)_+^{-1/\xi},\quad z > 0,
\end{equation}
and $\theta =(\sigma, \xi) \in (0, \infty)\times\R$ is the parameter vector consisting of scale and shape, respectively.
The shape parameter $\xi \in \mathbb R$, also known as the extreme value index \citep{beir2005}, characterizes the decay of the tail of $Y$. If $\xi > 0$, then $Y$ is heavy-tailed; if $\xi = 0$, then $Y$ is light-tailed; if $\xi < 0$ then $Y$ has a finite upper endpoint. Moreover, the GPD is a natural model for the distribution tails since it is the only possible limit of threshold exceedances as in~\eqref{gpd_def}. 
Note that the convergence of exceedances is equivalent to the classical result of extreme value theory that states the convergence of the suitably normalized maximum of $n$ i.i.d.~copies of $Y$ to the generalized extreme value distributions \citep{fish1928, gned1943}. 

The GPD approximation can be directly translated into an approximation for the small probability of $Y$ exceeding a high threshold $y$. By Bayes' theorem and~\eqref{gpd_def} we obtain
\begin{equation}\label{eq:bayes}
  \begin{split}
    \P(Y > y)
    & = \P(Y > u)\ \P(Y > y \mid Y > u)
    \approx \P(Y > u) \left\{1 - G(y-u; \sigma_u, \xi)\right\},
  \end{split}
\end{equation}
where $u < y$ denotes an intermediate threshold. In applications, the scale and shape parameters of the GPD have to be estimated from independent observations $Y_1,\dots, Y_n$ of $Y$. We fix an intermediate quantile level $\tau_n$ and define the exceedances $Z_i = (Y_i - \hat Q(\tau_n))_+$, $i=1,\dots, n$, where $\hat Q(\tau_n)$ denotes the empirical $\tau_n$ quantile.
We obtain estimates $\hat \theta = (\hat\sigma, \hat \xi)$ of the GPD parameter vector $\theta$ by maximum-likelihood, where the
negative log-likelihood (or deviance) contribution of the $i$th exceedance $Z_i$ is
\begin{equation}\label{eq:loglik}
  \ell_\theta(Z_i) =   \log \sigma + \left(1 + \frac{1}{\xi}\right)\log \left(1 + \frac{\xi}{\sigma}Z_i\right),\quad \theta \in (0, \infty)\times\R,
\end{equation}
if $Z_i > 0$, and zero otherwise. Combining approximation~\eqref{eq:bayes} with~\eqref{eqpot2} and letting $\P(Y > y) = 1 - \tau$ and $\P(Y > u) = 1 - \tau_n$, we obtain an approximation for the quantile of $Y$ at level $\tau > \tau_n$ as
\begin{equation}\label{eq:unc-quant}
  \hat Q(\tau) \approx \hat Q(\tau_n) + \frac{\hat \sigma}{\hat \xi} \left[\left(\frac{1 - \tau}{1 - \tau_n}\right) ^{-\hat \xi} - 1\right].
\end{equation}

\subsection{Quantile Regression and Generalized Random Forests}\label{sec:quant-reg}

Given a pair $(\g X, Y)$ of predictor vector $X \in \R^p$ and response variable $Y \in \R$, quantile regression deals with modeling the conditional $\tau$-quantile $Q_x(\tau)$ of the conditional distribution of $Y$ given that $X = \g x$ for a particular predictor value $x \in \mathbb R^p$. The main challenge is that the dimension $p$ of the predictor space may be large and that the quantile surface $Q_x(\tau)$ as a function $x$ may be a complex, highly non-linear function.

Let $(\g X_1, Y_1),\dots, (\g X_n, Y_n) $ be $n$ independent copies of the random vector $(\g X, Y)$. In contrast to the setting in Section~\ref{sec:evt_background}, classical methods for quantile regression consider a fixed quantile level $\tau_n \equiv \tau$ that does not change with the sample size. On a population level, these methods exploit the fact that the conditional quantile function is the minimizer of the expectation of the quantile loss $\rho_{\tau}(c) = c(\tau-\I{\{ c<0\}})$, $c \in \mathbb R$,~\citep[][]{koen1978}, that is $Q_{\g x}(\tau) = \argmin_{q \in \R} \E[\rho_{\tau}(Y - q) \mid \g X = \g x]$.
The previous expectation cannot be estimated directly on the sample level since the observed predictor values do not typically include the value $x$. A natural estimator is
\begin{equation}\label{eq:weight1}
  \hat{Q}_{\g x}(\tau) = \argmin_{q \in \R}\sum_{i=1}^{n}w_n(\g x, X_i) \rho_{\tau}(Y_i - q),
\end{equation}
where $x' \mapsto w_n(x, x')$ is a set of localizing similarity weights around the predictor value of interest. The weights can for instance be obtained by a kernel approach~\citep{yu1998}, but this is limited to moderately large dimensions \citep{stone1980optimal, stone1982optimal}.

In order to model more complex quantile surfaces in larger dimensions, \citet{mein2006} and~\cite{athe2019} propose to use estimator~\eqref{eq:weight1} with similarity weights $w_n(\cdot, \cdot)$ obtained from a random forest.
Random forests~\citep{brei2001} are an ensemble method used for both regression and classification tasks and consist of fitting $B$ decision trees to the training data.
In regression settings, each decision tree predicts a test point $x\in\R^p$  by $\mu_b(x) := \sum_{i = 1}^{n} \I\{X_i \in L_b(x)\} Y_i /|L_b(x)|$, for all $b = 1, \dots, B$,
where $L_b(x) \subset \R^p$ denotes the rectangular region containing $x$ in the tree $b$ and $|L_b(x)|$ the number of observations in $L_b(x)$. With similarity weights $w_{n, b}(x, X_i) := \I\{X_i \in L_b(x)\} / |L_b(x)|$, the random forest predictions are $\mu(x) := \frac{1}{B} \sum_{b = 1}^B \mu_b(x) = \sum_{i = 1}^{n} w_n(x, X_i) Y_i$,
where $w_{n}(x, X_i) = \sum_{b=1}^B w_{n, b}(x, X_i) / B$  is the average weight across $B$ trees.

The original idea of~\cite{mein2006} is to use the weights estimated by this standard regression random forest for quantile regression in~\eqref{eq:weight1}. Since trees are grown by minimizing the mean squared error loss, this leads to the fact that $w_n(x, X_i)$ takes large values for those observations $i$ such that $\E[Y \mid X = X_i] \approx \E[Y \mid X = x]$. In many situations the conditional expectation is not representative of the whole conditional distribution of $Y\mid X = x$, and it may happen that $w_n(x, X_i)$ is large but $Q_{X_i}(\tau) \not \approx Q_{x}(\tau)$; see \citet[][Figure~2]{athe2019} or our Figure~\ref{fig:intro} where the conditional expectation is constant over the predictor space. In these cases, the similarity weights estimated with standard random forest do not capture the heterogeneity of the quantile function and are thus not well-suited for quantile regression tasks.
\citet{athe2019} introduced generalized random forests (GRF), a method designed to fit random forests with custom loss functions and retaining the appealing features of classical random forests.
An important application of GRF is quantile regression, where the trees of the forest are grown to minimize the quantile loss.
In this work, we rely on GRF with quantile loss to estimate similarity weights $w_n(\cdot, \cdot)$ that capture the variation of the entire conditional distribution of $Y \mid X = x$ in the predictor space.
In practice, the GRF algorithm estimates simultaneously conditional quantiles at levels $\tau = 0.1, 0.5, 0.9$ as a proxy for the conditional distribution of $Y \mid X = x$.
For simplicity, in the sequel, we refer to GRF with quantile loss as GRF.

\section{Extremal Random Forest}\label{sec:erf}

\subsection{The Algorithm}
\label{sec:algo}

In this work we study a method for estimation of the conditional GPD parameters in~\eqref{eq:gpd-extrap} and flexible extreme quantile regression where both challenges described in Sections~\ref{sec:evt_background} and~\ref{sec:quant-reg} occur simultaneously. Consider the random vector $(X, Y)$ of predictors $X \in \mathcal X \subset \R^p$ and response $Y \in \R$, with $\mathcal X$ compact.
 Let $(\g X_1, Y_1), \dots, (\g X_n, Y_n)$ be independent copies of $(\g X, Y)$.
In many applications in risk assessment, the goal is to estimate the quantile function
$\g x \mapsto Q_{\g x}(\tau)$,
at an extreme level $\tau$, where the expected number of observations in the sample that exceed their conditional quantiles is small and possibly tends to $0$ as $n \to \infty$; see Section~\ref{sec:evt_background}.
To illustrate the challenges of this estimation problem, we consider an example where the scale of the response variable $Y$ is modeled as a step function of the covariates $\g X$.
This corresponds to \citet[][Figure~2]{athe2019}, except that we assume that the noise of the response variable is heavy-tailed instead of Gaussian.
\begin{example}\label{ex:gen-mod}
  Let $\g X  \sim U_p$ be a uniform distribution on the cube $[-1, 1]^p$ in dimension $p$ and
  $Y \mid \g X = \g x \sim s(\g x)\ T_{4}$, where $T_{\nu}$ denotes a Student's $t$-distribution with $\nu > 0$ degrees of freedom.
  The shape parameter of the conditional distribution $Y\mid X = x$ is then constant $\xi(\g x) = 1/\nu(x) \equiv 0.25$ and we choose the $s(\g x) = 1 + \I\{x_1 > 0\}$ for $\g x \in \R^p$. The GPD scale parameter $\sigma_u(x)$ of $Y \mid \g X = \g x$ and therefore also the quantile function $Q_{\g x}(\tau)$ only depend on $X_1$. The other predictors are noise variables.
\end{example}

As discussed in the introduction, the estimation of tail probabilities and quantiles exhibits the two difficulties of localization of predictors and extrapolation in the direction of the response variable. 
Our methodology accurately addresses both of these challenges. For effective localizing in the predictor space, even when the dimension is large, we use the weights emerging from GRF \citep{athe2019}. For correct extrapolation in the tail of the conditional response variable, we rely on the asymptotic theory of extremes and fit a localized generalized Pareto distribution; see Section~\ref{sec:evt_background}. More precisely, we assume that the distribution function of $Y  - u$, conditional on the exceedance $Y > u$ over a high threshold $u$, is approximately generalized Pareto \citep{balk1974} with scale and shape parameters depending on the predictor value $\g x$.
\begin{assumption}[Domain of attraction]\label{ass:doa}
  For every $x \in \mathcal{X}$, we let $u^*(x) \in (-\infty,\infty]$ be the upper endpoint of the conditional distribution function $F_x$ of $Y \mid X=x$, and assume that it is continuous and strictly monotonically increasing.
  We further assume that $F_x$ is in the domain of attraction of an extreme value distribution with shape parameter $\xi(x) \in \mathbb R$, that is, there exists a function $(x, u) \mapsto \sigma_u(x) > 0$ such that for all $y>0$
          \begin{align}\label{eq:gpd-approx}
            \lim_{u \to u^*(x)} \P\left(\frac{Y - u}{\sigma_u(x)} \leq  z \mid Y > u, X = x\right) =  1- (1 + \xi(x)z)_+^{-1/\xi(x)},
          \end{align}
          where we call $\theta(x) = (\sigma_u(x),\xi(x))$ the conditional GPD parameters.
\end{assumption}

\begin{remark}
  In the conditional framework, the scale and shape parameters are functions $\sigma_u(\cdot): \mathcal X \to (0, \infty)$ and $\xi: \mathcal X \to \R$ on the predictor space, respectively. As in the unconditional case, the scale function depends on the threshold $u$, but we often drop the subscript for notational simplicity.
  The convergence~\eqref{eq:gpd-approx} is equivalent to several other conditions, such as the convergence of the normalized maxima of independent copies of $Y \mid X =x$ to a generalized extreme value distribution.
\end{remark}

Assumption~\ref{ass:doa} is a conditional version of~\eqref{gpd_def} and means that the GPD approximation~\eqref{eq:bayes} and the quantile approximation~\eqref{eq:unc-quant} hold for the distribution of $Y \mid X=x$ for any $x \in \mathcal X$. It is satisfied by most data-generating processes as, for instance, in Example~\ref{ex:gen-mod}.

To use this approximation in practice, we have to choose a threshold $u$ that depends on the $n$ training observations.
To show the pointwise consistency of the estimators of the conditional GPD parameters in Section~\ref{sec:consistency}, it will be crucial to guarantee that at each point $x\in\mathcal X$ in the predictor space, there are approximately the same amount of expected exceedances. The threshold $u(x) = \hat Q_{x}(\tau_n)$ is therefore usually taken to be a predictor-dependent estimator of the intermediate quantile function. 
Here, $\tau_n \in (0,1)$ is an intermediate probability level that is chosen such that $\hat Q_{x}(\tau_n)$ can be obtained by classical quantile regression techniques; see Section~\ref{sec:quant-reg}.
In principle, any quantile regression method can be used to fit $\hat Q_x(\tau_n)$. We choose GRF with quantile loss \citep{athe2019} since it is a method suitable for flexible quantile regression problems and it requires little tuning.

In order to formulate our estimators of the conditional GPD parameters $\theta(x)$ and the extreme quantile $ Q_{x}(\tau)$, we define the exceedances in the training data as
\begin{equation}\label{def_exc}
  Z_i := (Y_i - \hat Q_{\g X_i}(\tau_n))_+, \quad i = 1, \dots, n;
\end{equation}
see the triangles in Figure~\ref{fig:intro}. The limit relation~\eqref{eq:gpd-approx} implies that the distribution of $Z_i$ can be well approximated by a GPD with parameter vector $\theta(X_i)$. 
For estimation of the GPD parameter vector $\theta(x) = (\sigma(\g x), \xi(\g x))$ we rely on those exceedances that carry most information on the tail of $Y \mid \g X = \g x$. 
  Such a localization can be achieved by assigning to each exceedance $Z_i$ a suitable weight $w_n(\g x, X_i)$ that reflects the importance for estimating $\theta(x)$; see Section~\ref{sec:quant-reg} for a similar rationale in the context of quantile regression.
To do so, we use the localizing weight functions $w_n(\g x, X_i)$ estimated from a GRF \citep{athe2019} whose tuning parameters are optimized for the purpose of estimating the conditional GPD parameters; this GRF can therefore be \emph{different} from the GRF used for the intermediate quantile $\hat Q_{x}(\tau_n)$.
We would like to define the estimator of the conditional GPD parameter $\hat\theta(x)$ as the minimizer of the weighted (negative) log-likelihood
\begin{equation}\label{eq:weighted-loglik}
  L_{n}(\theta; x) =  \sum_{i =1}^{n} w_n(\g x, X_i) \ell_\theta(Z_i) 1\{Z_i > 0\}, \quad x \in \mathcal {X},
\end{equation}
where $\ell_\theta$ is defined in~\eqref{eq:loglik}.
In practice, the parameter space $\theta(\mathcal X) = \{\vartheta \in (0, \infty) \times \R: \vartheta = \theta(x) \text{ for some } x \in \mathcal X \}$ is unknown.
As explained by~\citet{dombry2015}, it is not guaranteed that the log-likelihood of the generalized extreme value distribution has a global optimum over the parameter space $(0, \infty) \times \R$. In fact, \citet{smith1985} shows no maximum likelihood estimator exists when $\xi \leq -1$. Analogous results apply to the GPD log-likelihood $L_n(\theta; x)$ \citep{dre2004}. We therefore define $\hat\theta(x)$ as the optimizer of $L_n(\theta; x)$ over an arbitrarily large compact set $\Theta \subset (0, \infty) \times (-1, \infty)$ such that $\theta(\mathcal X) \subset \mathrm{Int}\ \Theta$, that is,
\begin{equation}\label{eq:weighted-mle}
  \hat\theta(x) \in \argmin_{\theta \in \Theta}L_{n}(\theta; x).
\end{equation}
In practice, the minimizer is obtained by solving the first order conditions $\nabla L_n(\theta; x) = 0$, which are given in~(A.10) %
in the Appendix. 
The estimated pair $(\hat Q_{\g x}(\tau_n), \hat \theta(x))$ of  intermediate quantile and conditional GPD parameters can be plugged into extrapolation formula~\eqref{eq:unc-quant}
to obtain an estimate  $\hat Q_{\g x}(\tau)$ of the extreme conditional quantile at level $\tau > \tau_n$.

In Algorithm~\ref{alg:erf}, we describe our prediction method, which we call the extremal random forest (ERF).
The algorithm consists of two subroutines, namely \textsc{ERF-Fit} and \textsc{ERF-Predict}. The \textsc{ERF-Fit} subroutine estimates a similarity weight function $(x, y) \mapsto w_n(x, y)$ and an intermediate quantile function $x \mapsto \hat Q_{x}(\tau_n)$ from the training data, for $x, y \in \mathcal X$.
The similarity weight function $w_n(\cdot, \cdot)$ is estimated with a generalized quantile random forest (GRF) from \citep{athe2019}, whereas the intermediate quantile function $\hat Q_{\cdot}(\tau_n)$ can be estimated with any quantile regression technique of choice. 
The \textsc{ERF-Predict} subroutine predicts the extreme $\tau$-quantile $\hat Q_{x}(\tau)$, with $\tau > \tau_n$, at point $x\in\mathcal X$ by estimating the GPD parameter vector $\theta(x)$ as in~\eqref{eq:weighted-mle}. 
  We note that the localized likelihood in~\eqref{eq:weighted-loglik} can be seen as a nearest-neighbor or kernel approach \citep[e.g.,][]{abdelaati2011, GardesStupfler2019}, where the weight for each observation is estimated adaptively by the tree splitting of the random forest.
\begin{algorithm}
  \caption{Extremal random forest (ERF)}
  \label{alg:erf}
  Denote by $\DD = \{(\g X_i, Y_i)\}_{i = 1}^{n}$ the training data.
  Let $\g x \in \mathbb R^p$ be a test predictor value.
  Specify the intermediate quantile level $\tau_n$ and the extreme quantile level $\tau$, with $\tau_n < \tau$.
  Let $\alpha$ be a vector of hyperparameters supplied to \textsc{GRF}.

  \begin{algorithmic}[1]
    \Procedure{ERF-Fit}{$\DD, \tau_n, \alpha$}

    \State $w_n(\cdot, \cdot) \gets$ \Call{GRF}{$\DD, \alpha$}

    \State $\hat Q_{\cdot}(\tau_n) \gets$ \Call{QuantileRegression}{$\DD$}

    \State \textbf{output} \texttt{erf}  $\gets [\DD, w_n(\cdot, \cdot),  \hat Q_{\cdot}(\tau_n)] $

    \EndProcedure
  \end{algorithmic}

  \begin{algorithmic}[1]
    \Procedure{ERF-Predict}{\texttt{erf}, $\g x, \tau$}

    \State $Z_i \gets (Y_i - \hat Q_{\g X_i}(\tau_n))_+$, with $i = 1, \dots, n$

    \State $\hat \theta(x)\gets \argmin_{\theta} L_n(\theta; x)$ as in~\eqref{eq:weighted-loglik}

      \State $\hat Q_{\g x}(\tau)\gets \Call{GPD}{\hat Q_{x}(\tau_n),\hat \theta(x)}$ 

    \State \textbf{output} $\hat \theta(x)$ and $\hat Q_{\g x}(\tau)$
    \EndProcedure
  \end{algorithmic}

  The subroutine GRF estimates the similarity weight function $w_n(\cdot, \cdot)$  using the generalized random forest of~\citet{athe2019}.
  The subroutine \textsc{QuantileRegression} fits the intermediate conditional quantile function $\hat Q_{\cdot}(\tau_n)$ using a quantile regression technique of choice.
  The object \texttt{erf} returned by \textsc{ERF-Fit} 
  is a list containing the training data $\DD$,
  the fitted intermediate quantile $\hat Q_{\cdot}(\tau_n)$,
  and the estimated similarity weight function $w_n (\cdot, \cdot)$.
\end{algorithm}

Appendix~B %
shows the estimated GRF weights $w_n(\g x, X_i)$ used in the likelihood in~\eqref{eq:weighted-loglik} for Example~\ref{ex:gen-mod} and specific values of $x$. It can be seen that the weights are large for training observations $X_i$ where the distribution of $Y \mid X = X_i$ is equal to the one of $Y \mid X =x$.

Generalized random forests have several tuning parameters, such as the number of predictors selected at each split and the minimum node size.
Appendix~C %
presents a cross-validation scheme to tune such hyperparameters within our algorithm. For large values of $\tau \approx 1$, the quantile loss is not a reliable evaluation metric since there might be few or no test observations above this level.
In our case, we instead rely on the tail approximation in~\eqref{eq:gpd-approx} and use the deviance of the GPD as a reasonable metric for cross-validation.

\subsection{Consistency}\label{sec:consistency}

For sample size $n$ and intermediate quantile level $\tau_n$ with $\tau_n \to 1$ and $n(1-\tau_n) \to \infty$, ERF provides an estimate $\hat\theta(x) = (\hat\sigma(x), \hat\xi(x))$ of the conditional GPD parameter $\theta(x)$ that describes the distribution of $(Y \mid Y > \hat Q_x(\tau_n), X =x)$. This estimate is obtained in~\eqref{eq:weighted-mle} as the maximizer of the localized GPD likelihood, which takes as input the exceedances defined in~\eqref{def_exc}. The latter requires an estimator of the intermediate quantile function, and as already noted in Section~\ref{sec:algo}, any existing method can be used. We assume in the sequel that this method is uniformly consistent.
In the asymptotic theory of extreme values it is common to denote by  $k = n(1-\tau_n)$ the expected number of exceedances and thus the effective sample size for GPD estimation. The requirement for $\tau_n$ to be an intermediate quantile level is equivalent to $k/n\to 0$ and $k \to\infty$.

\begin{assumption}[Uniform consistency of intermediate quantile estimator]
  \label{ass:hatq_x-uniform-consistent}
  The estimated intermediate quantile function is uniformly consistent
  at level $\tau_n = 1-k/n$ with $k/n\to 0$ and $k \to\infty$, in the sense that
    $\sup_{x \in \mathcal{X}} |\hat Q_{x}(\tau_n)/ Q_{x}(\tau_n)| \stackrel{\P}{\to} 1$ as $n\to \infty.$
\end{assumption}
This assumption is weaker than requiring that the estimated quantiles converge to the true counterparts since only the ratio needs to be close to one.
For instance, a possible choice for such a uniformly consistent method is given in~\citet{huixia2013}.

The ERF method is at the interface of random forests and extreme value theory, and both fields have their challenges related to the analysis of asymptotic properties.
Consistency and asymptotic normality of classical \citep{mein2006, biau12a, scornet2015, wager2018estimation}
and generalized random forests \citep{athe2019} have only recently been established.
The results by~\citet{athe2019} require regularity conditions (see Assumptions 1--6 of their paper) that are not satisfied in our setting. In particular, the negative GPD log-likelihood $\theta \mapsto \ell_\theta(z)$ that we consider is not a convex function and, therefore, it does not satisfy Assumption~6 in \citet{athe2019}.
  An additional challenge arises from the fact that
the theory in \citet{athe2019} is developed for data that come from a fixed distribution. Since we work under the domain of attraction condition in Assumption~\ref{ass:doa} our model is misspecified, in the sense that the sample follows a GPD distribution only approximately. Moreover, with changing thresholds, the distribution of the exceedances changes. This pre-limit approximation is the reason why the asymptotic analysis of extreme value estimators is notoriously difficult even in the i.i.d.~case \citep{dre2004,zhou2009}.

We thus require assumptions from both fields, namely on how the forest is grown and the tail behavior of the response as a function of the predictors.
Similarly to \cite{WangTsai2009}, \citet{gar2014} and \cite{goe2015}, 
we focus on the heavy-tailed case where $\xi(x) > 0$ for all $x \in \mathcal X$, where the tail and the quantile functions of the conditional distribution of $Y \mid X=x$ can be written as 
$1 - F_x(y) = y^{-1/\xi(x)}\tilde{\ell}_x(y)$, $Q_x(\tau) = (1-\tau)^{-\xi(x)}\ell_x\left( (1-\tau)^{-1} \right)$,
respectively, where $\tilde{\ell}_x, \ell_x: \R \to \R$ are slowly varying functions \citep[e.g.,][]{bingham1989}.
Any such slowly varying function $\ell$ has a normalized representation 
\begin{align}\label{eq:slowly-karamata}
  \ell(y) = c \exp \int_1^{y}\frac{\alpha(t)}{t}\mathrm{d}t, \qquad y\geq 1, 
\end{align}
characterized by a constant $c>0$ and a function $\alpha: [1,\infty) \to \R$ with $\lim_{t\to\infty} \alpha(t)=0$.
We denote the characterizing tuples for the functions $\tilde \ell_x$ and $\ell_x$ by $(\tilde c(x), \tilde \alpha_x)$ and $(c(x), \alpha_x)$, respectively, for any $x\in \mathcal X$.
In order to localize information in the predictor space, we need to assume a certain regularity of the conditional quantile function at extreme levels.   

\begin{assumption}[Lipschitz conditions]\label{ass:lipschitz}
  Assume that the predictor space $\mathcal{X} \subseteq \R^p$ is compact and that the predictor distribution possesses a density on $\mathcal X$ that is bounded away from zero and infinity.
 Moreover, assume that the shape parameter function $\xi: \mathcal{X} \to \R$ is Lipschitz continuous with $\xi(x) > 0$ for all $x \in \mathcal{X}$ with Lipschitz constant $L_\xi$ such that $|\xi(x) - \xi(y)| \leq L_\xi \norm{x - y}_2$,
  for all $x,y\in \mathcal X$.
  Moreover, the functions $\log c, \alpha_\cdot(t) : \mathcal{X} \to \R$ are Lipschitz and uniformly (in $t$) Lipschitz continuous with constants $L_{c}$ and $L_\alpha$, respectively, that is,
  $ |\log c(x) - \log c(y)| \leq  L_{c} \norm{x - y}_2$ and $\sup_{t \geq 1}|\alpha_x(t) - \alpha_y(t)| \leq L_{\alpha} \norm{x - y}_2$, for all $x,y\in \mathcal X$.
  Finally, we assume that $\lim_{t\to\infty} \tilde\alpha_x(t)=0$ uniformly in $x\in\mathcal X$.
\end{assumption}

  These Lipschitz conditions are fairly natural and also appear in similar form in previous extreme quantile regression techniques \citep[e.g.,][]{goe2015, gar2014, GardesStupfler2019}. 
   The next example illustrates that they are satisfied for a large class of models.
     
\begin{example}\label{ex:lipschitz}
  Suppose that $Y_0$ has a heavy-tailed distribution with shape index $\xi_0$, and parameters $c_0$ and $\alpha_0$ in~\eqref{eq:slowly-karamata} of the slowly varying function of its quantile function $Q(\cdot)$. Consider the predictor-dependent model $(Y \mid X = x)  \sim s(x) Y_0^{\xi(x)}$, $x\in \mathcal X.$
  It can be readily verified that the quantile function of this model is 
  \[Q_x(\tau) = s(x) Q(\tau)^{\xi(x)} = (1-\tau)^{-\xi_0 \xi(x)}  s(x) c_0^{\xi(x)} \exp \left\{ \xi(x) \int_1^y \alpha_0(t)/t \mathrm d t \right\}. \]
  Suppose that the function $s(x)$ and $\xi(x)$ are Lipschitz and strictly positive on $\mathcal X$. Then all conditions of Assumption~\ref{ass:lipschitz} are satisfied.
\end{example}

Concerning the specification of the random forest and the corresponding similarity weights, we follow \citet{athe2019}.
  In particular, we put an assumption on the rates of convergence of the leaf's diameter of each tree in the forest.

\begin{assumption}[Leaf's diameter rate of convergence]\label{ass:sim-weights}
    Let $b = 1, \dots, B$ denote a tree in the forest and let $x \in \mathcal{X}$ be a fixed predictor point. Define the diameter of the leaf $L_b(x)$ by $\diam(L_b(x)) \coloneqq \sup\{\norm{y - x}_2 : y \in L_b(x)\}$. 
    Let $s < n$ denote the number of observations used to grow the tree.
    We assume that the diameter of the leaf $L_b(x)$ converges in probability to zero, that is, for every $\varepsilon > 0$, $\P[ \diam( L_b(x) ) > \varepsilon] \to 0$ as $s \to \infty$.
    Furthermore, we assume that for $s$ large enough, the expected value of the leaf's diameter satisfies $\E[ \diam( L_b(x) ) ] = \mathcal{O}\left( s^{-C} \right)$, 
    for some positive constant $C > 0$.
  
\end{assumption}
  The leaf's diameter can be seen as a data-driven bandwidth parameter in a kernel. Unlike in kernel-based methods, where it is common to assume a deterministic bandwidth converging to zero, here, we put an assumption on the rate of convergence of a stochastic `bandwidth'.
  As we show in Appendix~A.1, %
  the GRF from ~\cite{athe2019} satisfies Assumption~\ref{ass:sim-weights}. 
  The similarity weights $w_n(x, X_i)$ for the exceedances $Z_i$ in the localized likelihood~\eqref{eq:weighted-loglik}
are the main ingredient for flexible estimation of the conditional GPD parameters $\theta(x)$.
For consistency of the estimator, the weights must localize around the point of interest $x$ as $n\to\infty$; that is, only observations with $X_i$ close to $x$ get positive weights. 
Similarity weights from a GRF depend on the leaf's diameter of each tree, which satisfies Assumption~\ref{ass:sim-weights}, and therefore, they localize around the point of interest $x$ as $n\to\infty$.

The following theorem shows the existence and consistency of a solution of the first order conditions~(A.10) %
in Appendix~A.2 %
corresponding to the localized optimization problem \eqref{eq:weighted-mle}. Define the event 
$A_n = \{\text{there}$ exists a solution of the first order conditions~(A.10) for sample size $n \}.$

\begin{theorem}[Consistency of $\hat{\theta}(x)$]\label{thm:consistency-new}
  Let $x \in \mathcal{X}$ denote a fixed predictor value and let
  $\tau_n = 1-k/n$ be an intermediate quantile level. Suppose that Assumptions~\ref{ass:doa}--\ref{ass:sim-weights} hold. We choose constants $0 < \beta_s < \beta_k < 1$ and let the number of exceedances and the subsample size of the random forest be respectively 
  \begin{align}\label{eq:n-k-s}
    k = n^{\beta_k}\ \text{and}\ s = n^{\beta_s}.
  \end{align}
  Then, with probability tending to one, there exists a solution $\hat\theta(x) \coloneqq (\hat\sigma(x), \hat\xi(x))$ to the localized first order conditions in~(A.10), that is, $\mathbb P(A_n) \to 1$ as $n\to\infty$, and on this set the solution is consistent
  \begin{align}\label{eq:consistency}
    \hat\xi(x) \stackrel{\P}{\to} \xi(x)\ \text{and}\ \frac{\hat\sigma(x)}{\xi(x)Q_x(\tau_n)} \stackrel{\P}{\to} 1, \qquad n\to\infty.
  \end{align}
\end{theorem}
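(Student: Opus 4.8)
The plan is to analyze the first order conditions \eqref{eq:focs} of the localized problem \eqref{eq:weighted-mle} directly, as the non-convexity of $\theta \mapsto \ell_\theta$ precludes the global-$\argmin$ argument of \citet{athe2019} and forces an estimating-equation strategy in the spirit of \citet{zhou2009}. Since $\xi(x)>0$, the threshold $u = Q_x(\tau_n)$ diverges, and by the regular variation \eqref{eq:reg-var} the normalized excess $(Y-u)/u$, conditional on $Y>u$ and $X=x$, converges in distribution to a GPD with scale and shape both equal to $\xi(x)$. I therefore reparametrize the scale as $\sigma = Q_x(\tau_n)\,\varsigma$ and rescale the exceedances through $V_i = Z_i/Q_x(\tau_n)$, so that the target becomes $(\varsigma^\ast,\xi^\ast) = (\xi(x),\xi(x))$ and the two claims in \eqref{eq:consistency} reduce to $\hat\xi(x)\stackrel{\P}{\to}\xi(x)$ and $\hat\varsigma(x)\stackrel{\P}{\to}\xi(x)$. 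Writing $\bar W_n = \sum_{i} w_n(x,X_i)\1\{Z_i>0\}$, a strictly positive rescaling of \eqref{eq:focs} reads $\psi_n(\varsigma,\xi)=0$ with
\[
\psi_n(\varsigma,\xi) = \frac{1}{\bar W_n}\sum_{i=1}^{n} w_n(x,X_i)\,\1\{Z_i>0\}\begin{pmatrix} 1 - \dfrac{(1+\xi)\,V_i/\varsigma}{1+\xi V_i/\varsigma}\\[2.2ex] -\dfrac{1}{\xi^2}\log\!\bigl(1+\tfrac{\xi V_i}{\varsigma}\bigr) + \dfrac{1+\xi}{\xi}\,\dfrac{V_i/\varsigma}{1+\xi V_i/\varsigma}\end{pmatrix}.
\]

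The core is to show that $\psi_n$ is, uniformly on a small closed ball $B$ around $(\xi(x),\xi(x))$, close to a deterministic oracle score $\psi^\ast$ that vanishes only at the target. Define $\psi^\ast(\varsigma,\xi) = \E[\,g(V^\ast;\varsigma,\xi)\,]$, where $g$ is the integrand vector above and $V^\ast \sim \mathrm{GPD}(\xi(x),\xi(x))$; classical GPD likelihood theory gives $\psi^\ast(\xi(x),\xi(x))=0$ and that the Jacobian $\nabla\psi^\ast$ at the target is, up to sign, the GPD Fisher information, which is nonsingular because $\xi(x) > -1/2$. I then decompose
\[
\psi_n(\varsigma,\xi) - \psi^\ast(\varsigma,\xi) = \underbrace{R^{\mathrm{var}}_n}_{\text{sampling}} + \underbrace{R^{\mathrm{loc}}_n}_{\text{forest localization}} + \underbrace{R^{\mathrm{gpd}}_n}_{\text{pre-limit approximation}} + \underbrace{R^{\mathrm{thr}}_n}_{\text{threshold estimation}},
\]
and bound each term uniformly over $B$. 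The sampling term is handled by the honesty and full-subsampling structure of the forest (Assumption~\ref{ass:sim-weights}, Specification~1 of \citet{athe2019}): the weights carry small $\ell^2$ mass, so a weighted law of large numbers applies, the integrands having finite GPD moments since $\xi(x)>0$ and $\Theta$ is compact. The localization term is controlled by the shrinking-diameter property of honest balanced forests together with the Lipschitz Assumption~\ref{ass:lipschitz}: the weights concentrate on $X_i$ with $\norm{X_i-x}_2$ small, on which $\theta(X_i)$ and the slowly varying parts $(\log c, \alpha_\cdot)$ are uniformly close to those at $x$, so $R^{\mathrm{loc}}_n\stackrel{\P}{\to}0$. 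The pre-limit term vanishes because $\tau_n\to1$ drives the effective threshold level to infinity while $\alpha_x,\tilde\alpha_x\to0$ uniformly (Assumption~\ref{ass:lipschitz}), so the true excess law is uniformly close to the limiting GPD. Finally $R^{\mathrm{thr}}_n\stackrel{\P}{\to}0$ by the uniform ratio-consistency of $\hat Q_{X_i}(\tau_n)$ (Assumption~\ref{ass:hatq_x-uniform-consistent}) and the smoothness of $g$ in its data argument.

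Given $\sup_{(\varsigma,\xi)\in B}\norm{\psi_n(\varsigma,\xi)-\psi^\ast(\varsigma,\xi)}_2\stackrel{\P}{\to}0$ and the nondegeneracy of $\nabla\psi^\ast$ at the target, existence and consistency of a root follow from a standard topological-degree (or inverse-function) argument: with probability tending to one $\psi_n$ has a zero inside $B$, establishing $\P(A_n)\to1$, and every such zero lies within $o_{\P}(1)$ of $(\xi(x),\xi(x))$. Undoing the reparametrization yields $\hat\xi(x)\stackrel{\P}{\to}\xi(x)$ and $\hat\sigma(x)/\bigl(\xi(x)Q_x(\tau_n)\bigr) = \hat\varsigma(x)/\xi(x)\stackrel{\P}{\to}1$, as claimed. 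To locate the ball $B$ in the first place, I would first treat the scalar reduction obtained by solving $\psi_{n,1}=0$ for $\varsigma$ and substituting into $\psi_{n,2}=0$: in the heavy-tailed regime this is, to leading order, the forest-weighted Hill functional $\bar W_n^{-1}\sum_i w_n(x,X_i)\log\!\bigl(Y_i/\hat Q_{X_i}(\tau_n)\bigr)\1\{Z_i>0\}$, whose consistency for $\xi(x)$ I would establish as a separate lemma — this is precisely the random forest Hill estimator advertised in the introduction — and use to confine the search region.

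The main obstacle is the joint control of the four error terms in a regime where the forest weights are data-dependent and dependent across $i$, so classical i.i.d. tools do not apply and one must lean on the honest-subsampling machinery of \citet{athe2019} for $R^{\mathrm{var}}_n$ and $R^{\mathrm{loc}}_n$. The delicate point is the rate balancing encoded in $0<\beta_s<\beta_k<1$: the forest neighborhood shrinks at a rate governed by the subsample size $s=n^{\beta_s}$ (controlling $R^{\mathrm{loc}}_n$), while the threshold rises and the effective number of exceedances $k=n^{\beta_k}$ grows (controlling $R^{\mathrm{gpd}}_n$ and $R^{\mathrm{var}}_n$), and $\beta_s<\beta_k$ is what forces all three to vanish simultaneously. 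A secondary difficulty, inherited from the unconditional GPD problem, is the moving support $\{V_i : 1+\xi V_i/\varsigma>0\}$, which must remain valid for the largest normalized excesses uniformly over $B$ so that $\psi_n$ is well defined there; handling this near the boundary of the parameter set is exactly where the non-convexity bites and where the local, FOC-based analysis, rather than a global optimization argument, becomes essential.
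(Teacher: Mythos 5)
Your proposal is correct in outline, but it takes a genuinely different route from the paper's proof. The paper follows \citet{zhou2009}: it collapses the bivariate first order conditions \eqref{eq:focs} to a univariate root-finding problem via the reparametrization $t=\xi/\sigma$ of \citet{davison1984} and Grimshaw's equation $h_n(t)=f_n(t)g_n(t)-1=0$, works around the explicit approximate root $t_{n,x}=1/\hat Q_x(\tau_n)$, and obtains existence from a sign change of $h_n$ on a shrinking random interval $[t_{n,x}^{(-\delta_n)},t_{n,x}^{(\delta_n)}]$ and consistency from the monotonicity of $f_n$ and $g_n$; the only convergence results it needs are at the single point $t_{n,x}$, namely Propositions~\ref{prop:hill}--\ref{prop:gn-tilde} ($f_n(t_{n,x})\stackrel{\P}{\to} 1+\xi(x)$, $g_n(t_{n,x})\stackrel{\P}{\to}(1+\xi(x))^{-1}$, $\tilde g_n\stackrel{\P}{\to}(2\xi(x)+1)^{-1}$). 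You instead keep the bivariate score, rescale by $Q_x(\tau_n)$, and run a Z-estimation argument: uniform convergence of $\psi_n$ to the population GPD score over a ball, nonsingularity of the Fisher information, and a degree/inverse-function step. Both shells are sound, and both rest on the same hard ingredients, which you identify correctly: honesty, the subsampling variance bound of Lemma~\ref{lem:varforest} (the tree variance of these tail statistics is $O(n/k)$, so the forest variance is $O(s/k)$, whence the necessity of $\beta_s<\beta_k$), shrinking leaf diameters combined with the Lipschitz conditions (Lemma~\ref{lem:flucts-1}; note the Lipschitz constant of $x\mapsto\log Q_x(\tau_n)$ grows like $\log(n/k)$, so diameters must shrink polynomially, which they do), Potter-type bounds for the pre-limit error, and the event on which the intermediate quantile estimator is uniformly ratio-consistent. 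The paper's route buys economy: pointwise convergence at $t_{n,x}$ plus monotonicity replaces uniformity in the parameter, and Corollary~\ref{cor:hill} drops out of Proposition~\ref{prop:hill} for free. Your route buys modularity and a natural path toward extensions such as asymptotic normality via the nonsingular Jacobian, at the price of proving uniform-in-parameter convergence over the ball --- available here by stochastic equicontinuity, since the score components are Lipschitz in $(\varsigma,\xi)$ on the ball with envelopes of order $\log(1+V_i)$, whose forest-weighted averages are $O_\P(1)$ by the same Hill-type argument. Two small corrections: your concern about the moving support $\{1+\xi V_i/\varsigma>0\}$ is vacuous in this heavy-tailed setting, since on a small ball around $(\xi(x),\xi(x))$ both coordinates are positive and $V_i\ge 0$, so the score is smooth there and well-definedness is never threatened; and the indicators $\1\{Z_i>0\}$ depend on the estimated thresholds, so your uniform convergence must be established on the analogue of the paper's event $A_n(\eta)$ rather than unconditionally.
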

\begin{remark}\label{rem:consistency}
  Several remarks concerning the above theorem are in place.
  \begin{itemize}
    \item[(i)]
    Generalized random forests \citep{wager2018estimation} require only $s\to\infty$ and $s/n \to 0$, as $n\to\infty$. For ERF, we have the stronger condition that also $s/k \to 0$. This is natural since the effective sample size for GPD estimation is of order $k$ rather than~$n$.   
    \item[(ii)]
    The population version of the scale parameter depends on $n$ and is only asymptotically defined. In the heavy-tailed case $\xi(x) > 0$, a possible choice for $\sigma_u(x)$ in~\eqref{ass:doa} is $\xi(x) u$. Since we use $u = Q_x(\tau_n)$ as (population) threshold, this explains the normalizing sequence for $\hat\sigma(x)$ in~\eqref{eq:consistency}.  
    \item[(iii)]
    While we only consider the heavy-tailed case $\xi(x)>0$ here, the proof strategy for the case $\xi(x) < 0$
      would follow a similar structure, which we discuss in Appendix~A.7. %
    The case $\xi(x) = 0$, however,  would require a different proof strategy; we refer to \cite{zhou2009} for the unconditional case. 
    \item[(iv)]
    The proof of Theorem~\ref{thm:consistency-new} reveals that under Assumption~\ref{sec:consistency}, the same data can be used to first fit the intermediate threshold model $\hat Q_x(\tau_n)$ and then to compute the exceedances as input for our localized optimization~\eqref{eq:weighted-mle}.    
  \end{itemize}
\end{remark}

To the best of our knowledge, Theorem~\ref{thm:consistency-new} is the first consistency proof of a forest-based maximum likelihood estimator of the GPD parameters that works for 
large (fixed) dimension of the predictor space and
complex parameter response surfaces. \cite{WangTsai2009} show asymptotic normality for the model parameters for the heavy-tailed case, but only in the situation where the covariate dependence is linear (after a log transformation). There are no asymptotic results for models for generalized Pareto distributions with parameters depending in a more complex way on the covariates such as through generalized additive models \citep{CDD05, benjamin2019}, trees \citep{FLT20}, gradient boosting \citep{velthoen2021} or neural networks \citep{Olivier2022}.

The proof of Theorem~\ref{thm:consistency-new} relies on the structure of the consistency proof in the unconditional case of \cite{zhou2009}. 
Since in our case we have predictor dependent data and need to localize the first order conditions, we encounter significant additional difficulties.
A main step in our proof is to establish the consistency of
a local Hill estimator for the extreme value index. While in the unconditional case, this is a classical result, we state it for the random forest Hill estimator
as a corollary of Theorem~\ref{thm:consistency-new}, which is of independent interest. 

\begin{cor}\label{cor:hill}
   Define the random forest Hill estimator as 
  \begin{align}\label{erf_hill}
     \hat \xi_H(x) = \frac{n}{k} \sum_{i = 1}^{n}  w_n(x, X_i) \1\{Z_i > 0\}\log\left(1 + Z_i / \hat{Q}_x(\tau_n)\right).
  \end{align}
  Suppose the assumptions of Theorem~\ref{thm:consistency-new} hold. Then $\hat\xi_H(x) \stackrel{\P}{\to} \xi(x)$ as $n\to\infty.$
\end{cor}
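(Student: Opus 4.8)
The plan is to deduce the corollary directly from Theorem~\ref{thm:consistency-new} by exploiting an exact algebraic identity satisfied by the localized maximum likelihood estimator. Throughout I work on the event $A_n$, where a solution $\hat\theta(x)=(\hat\sigma(x),\hat\xi(x))$ of the first order conditions \eqref{eq:focs} exists; by Theorem~\ref{thm:consistency-new}, $\P(A_n)\to1$. Abbreviate $w_i = w_n(x,X_i)$, $W_0 = \sum_i w_i\1\{Z_i>0\}$, and $\hat R_i = 1 + \hat\xi(x)Z_i/\hat\sigma(x)$. The two score equations $\nabla L_n(\hat\theta;x)=0$ share the common term $\sum_i w_i\1\{Z_i>0\}(1 - 1/\hat R_i)$; eliminating it between the $\sigma$-equation and the $\xi$-equation (using $Z_i/\hat\sigma = (\hat R_i-1)/\hat\xi$ to rewrite each score) yields the clean identity
\begin{equation}\label{eq:hill-identity}
  \sum_{i} w_i \1\{Z_i > 0\}\,\log \hat R_i \;=\; \hat\xi(x)\, W_0 .
\end{equation}
In words, the $W_0$-weighted average of $\log\!\big(1 + \hat\xi(x)Z_i/\hat\sigma(x)\big)$ over the local exceedances equals $\hat\xi(x)$ exactly. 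This is the forest-weighted, conditional analogue of the familiar fact that at the GPD maximum likelihood estimate the Hill-type mean of the log-exceedances reproduces the shape estimate.

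With \eqref{eq:hill-identity} in hand, I would decompose the random forest Hill estimator \eqref{erf_hill} as a product of two factors,
\begin{equation*}
  \hat\xi_H(x) \;=\; \Big(\tfrac{n}{k} W_0\Big)\cdot\frac{1}{W_0}\sum_{i} w_i\1\{Z_i>0\}\,\log\!\big(1 + Z_i/\hat Q_x(\tau_n)\big),
\end{equation*}
and treat each separately. The second factor is the same weighted average as in \eqref{eq:hill-identity}, except that the exceedances are rescaled by the estimated threshold $\hat Q_x(\tau_n)$ rather than by $\hat\sigma(x)/\hat\xi(x)$. Setting $\rho_n = Q_x(\tau_n)/\hat Q_x(\tau_n)$ and $\kappa_n = \hat\xi(x)Q_x(\tau_n)/\hat\sigma(x)$, Assumption~\ref{ass:hatq_x-uniform-consistent} gives $\rho_n \stackrel{\P}{\to}1$, while the two conclusions of Theorem~\ref{thm:consistency-new}, namely $\hat\xi(x)\stackrel{\P}{\to}\xi(x)$ and $\hat\sigma(x)/(\xi(x)Q_x(\tau_n))\stackrel{\P}{\to}1$, give $\kappa_n\stackrel{\P}{\to}1$.

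The swap of normalization is controlled uniformly over all exceedances by the elementary monotonicity bound
\begin{equation*}
  \Big|\log\!\big(1 + \rho_n Z_i/Q_x(\tau_n)\big) - \log\!\big(1 + \kappa_n Z_i/Q_x(\tau_n)\big)\Big| \;\le\; \big|\log(\rho_n/\kappa_n)\big|,
\end{equation*}
which holds for every $Z_i\ge0$ because $v\mapsto(1+\rho_n v)/(1+\kappa_n v)$ is monotone on $[0,\infty)$ and hence takes values between $1$ (at $v=0$) and $\rho_n/\kappa_n$ (as $v\to\infty$). Averaging this against the nonnegative weights $w_i\1\{Z_i>0\}/W_0$ shows that the second factor differs from the left-hand average in \eqref{eq:hill-identity}, divided by $W_0$, by at most $|\log(\rho_n/\kappa_n)|\stackrel{\P}{\to}0$; hence the second factor converges to $\hat\xi(x)$ and thus to $\xi(x)$. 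The key point is that this uniform bound requires no moment control on the (heavy-tailed) $Z_i$, which is exactly what makes the reduction to Theorem~\ref{thm:consistency-new} essentially free.

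It remains to handle the first factor $\tfrac{n}{k}W_0 = \tfrac{n}{k}\sum_i w_n(x,X_i)\1\{Z_i>0\}$, the normalized total weight placed on local exceedances. Since the forest weights sum to one and each observation exceeds its threshold with probability approximately $1-\tau_n = k/n$, one expects $\tfrac{n}{k}W_0\stackrel{\P}{\to}1$. I expect this to be the only genuinely delicate ingredient, as establishing it rigorously requires the honest-forest concentration arguments underlying Assumption~\ref{ass:sim-weights} together with Assumption~\ref{ass:hatq_x-uniform-consistent} to pass from the population threshold to $\hat Q_x(\tau_n)$. However, the normalized weighted exceedance count already appears as a building block in the proof of Theorem~\ref{thm:consistency-new}, so I would invoke that intermediate result rather than reprove it. Combining the two limits gives $\hat\xi_H(x) = \big(\tfrac{n}{k}W_0\big)\big(\xi(x)+o_\P(1)\big)\stackrel{\P}{\to}\xi(x)$ on $A_n$, which, together with $\P(A_n)\to1$, establishes the claim.
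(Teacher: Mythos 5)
Your proof is correct, but it runs in the opposite logical direction from the paper's, so it is a genuinely different argument. The paper's proof is two lines: by the algebraic rewriting in \eqref{eq:fn}, $\hat\xi_H(x) = T_2(x)\,\bigl(f_n(t_{n,x})-1\bigr)$ with $t_{n,x}=1/\hat Q_x(\tau_n)$ and $T_2(x)=\tfrac{n}{k}\sum_i w_n(x,X_i)\1\{Z_i>0\}$, and both factors were already analyzed as standalone building blocks of the theorem's proof --- Proposition~\ref{prop:hill} shows $f_n(t_{n,x})-1\stackrel{\P}{\to}\xi(x)$ \emph{directly} by honest-forest concentration arguments (this proposition is the technical heart of Theorem~\ref{thm:consistency-new}), and Lemma~\ref{lem:denom} shows $T_2(x)\stackrel{\P}{\to}1$; the paper's internal flow is thus ``local Hill consistency $\Rightarrow$ Theorem''. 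You instead derive ``Theorem $\Rightarrow$ local Hill consistency'': your second factor is exactly the quantity $f_n(t_{n,x})-1$ of Proposition~\ref{prop:hill}, and you recover its limit from the \emph{conclusions} of Theorem~\ref{thm:consistency-new} via the score-equation identity (which is precisely the first equation of \eqref{eq:focs} multiplied by the weight total $W_0$) together with the elementary uniform bound $|\log(1+\rho v)-\log(1+\kappa v)|\le|\log(\rho/\kappa)|$, $v\ge 0$, which swaps the normalization $\hat\sigma(x)/\hat\xi(x)$ for $\hat Q_x(\tau_n)$ with no moment control on the heavy-tailed exceedances --- that bound is correct, as is your verification that $\rho_n,\kappa_n\stackrel{\P}{\to}1$ from Assumption~\ref{ass:hatq_x-uniform-consistent} and \eqref{eq:consistency}. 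What your route buys: it makes the corollary a true consequence of the theorem's \emph{statement} plus the single forest-specific ingredient $\tfrac{n}{k}W_0\stackrel{\P}{\to}1$, and it would survive any alternative proof of the theorem. What it costs: that ingredient is not implied by the theorem's statement, so you still must borrow the paper's Lemma~\ref{lem:denom} (you correctly anticipated this and flagged it as the one delicate step); and, unlike the paper's Proposition~\ref{prop:hill}, your argument could not be reused inside the proof of the theorem itself, where it would be circular.
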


\begin{remark}
  The classical Hill estimator \citep{hill1975} for i.i.d.~data $Y_1,\dots, Y_n$ is $\hat \xi_H = \frac{1}{k}\sum_{i=1}^n \1\{ Y_i > \hat Q(\tau_n) \}\left[\log Y_i  - \log \hat Q(\tau_n) \right] = \frac{n}{k}\sum_{i=1}^n \frac{1}{n}\1\{ Z_i > 0 \}\log\left(1 + Z_i/\hat Q(\tau_n) \right)$,
where $\hat Q(\tau_n)$ is the empirical quantile of the sample at level $\tau_n = 1 - k/n$, and the exceedances are defined as $Z_i = (Y_i - \hat Q(\tau_n))_+$.
This illustrates the similarity to the random forest Hill estimator in~\eqref{erf_hill}. The main difference is that the classical estimator uses the same weights $1/n$ for all samples, and the unconditional intermediate quantile $\hat Q(\tau_n)$ simply equals the $(n-k)$th order statistic $Y_{n-k,n}$ of the sample. On the other hand, in the predictor-dependent case, the localizing weights play a crucial role, and the exceedances rely on an estimate of the intermediate conditional quantile at $x\in\mathcal X$.   
\end{remark}

As suggested by a referee, it is worthwhile to note that in the  heavy-tailed case a simpler approximation than~\eqref{eq:unc-quant} for the extreme quantiles is possible. Indeed, if we choose $\sigma_u(x) = \xi(x) Q_x(\tau_n)$ as in Remark~\ref{rem:consistency}, then for $\tau > \tau_n$ we have $Q_x(\tau) \approx Q_{x}(\tau_n)\left(\frac{1 - \tau}{1 - \tau_n}\right) ^{-\xi(x)}$.
Using this approximation is an alternative approach for extreme quantile estimation due to \cite{wei1978}. It is a common strategy
for unconditional data \citep[e.g.,][]{girard2012,girard2022}, as well as in the predictor dependent case where $\xi(x)$ is estimated with linear or kernel-based methods \citep[e.g.,][]{WangTsai2009,abdelaati2011, huixia2012, GardesStupfler2019}.
We may consider the Weissman extrapolation in conjunction with our random forest Hill estimator~\eqref{erf_hill} as an alternative to ERF.
Yet another method in the heavy-tailed case is to use the fact that the log-transformed exceedances are approximately exponential with mean $\xi(x)$ that can be fitted by a classical random forest. Appendix~D.2 %
provides details on these alternative methods and compares them to ERF, together with a sensitivity analysis with respect to the intermediate quantile level $\tau_n$. In summary, ERF outperforms the other two methods significantly when pre-asymptotic bias is present, that is, when the data are not exactly GPD distributed but are only in the domain of attraction. In this more realistic scenario, ERF is also more stable with respect to the choice of $\tau_n$. In the remainder of the paper we therefore focus on the GPD-based ERF, but the Weissman-type estimators may be of independent interest.

\subsection{Penalized Log-Likelihood}

The shape $\xi$ of the GPD is the most crucial parameter since it determines the tail behavior of $Y$ at extreme quantile levels; the extrapolation formula~\eqref{eq:unc-quant} shows the highly nonlinear influence of the shape parameter on large quantiles.
Estimation of the shape parameter is notoriously challenging, and the maximization of the GPD likelihood may exhibit convergence problems for small sample sizes \citep{coles1999likelihood}.
Penalization can help reduce the variance of an estimator at the cost of higher bias \citep{hast2009}. 
Several schemes have been proposed for unconditional GPD estimation using penalty functions \citep{coles1999likelihood} and priors \citep{de2003bayesian} on the shape parameter in the frequentist and Bayesian frameworks, respectively.

While the above regularization methods are tailored to i.i.d.~data,
in our setting, we want to penalize the variation of the shape function $x \mapsto \xi(\g x)$ across the predictor space~$\mathcal X$. In spatial applications, for instance, it is common to assume a constant shape parameter at different locations \citep[e.g.,][]{fer2012,eng2017a}. Similarly, in ERF, we shrink the estimates $\hat \xi(\g x)$ to a shape parameter estimate $\hat \xi$ that is constant in the predictor space $\mathcal X$. In general, $\hat \xi$ could be fixed and given by expert knowledge, but often a good choice is the unconditional fit  obtained by minimizing the GPD deviance in~\eqref{eq:weighted-loglik} with constant weights $w_n(x, y) = 1$ for all $x, y \in \mathcal X$.

We propose to penalize the weighted GPD deviance~\eqref{eq:weighted-loglik} with the squared distance between the estimates of $\xi(\g x)$ and the estimated constant shape parameter $\hat \xi$, that is,
\begin{equation}\label{eq:pen-mle}
  \hat\theta_{\mathrm{pen}}(x) = \argmin_{(\sigma, \xi) = \theta \in \Theta}\frac{n}{k} L_n(\theta; x) + \lambda_n (\xi - \hat\xi)^2,
\end{equation}
where $\lambda_n \geq 0$ is a tuning parameter.
The parameter $\lambda_n$ allows interpolating between a simpler model with a smooth or constant shape function when $\lambda_n$ is large, and a more complex model with a varying shape over the predictor space when $\lambda_n$ is small.
This penalized negative log-likelihood can be interpreted in a Bayesian sense: it is equivalent to the maximum \emph{a posteriori} GPD estimator when putting Gaussian prior $N(\hat \xi, 1/(2\lambda_n))$
on the shape parameter $\xi$.
\citet{bucher2020penalized} propose the same penalization as in~\eqref{eq:pen-mle} to estimate the generalized extreme value distribution parameters, where the prior distribution is centered around an expert belief $\hat\xi\equiv \xi_0$ and $\lambda_n \geq 0$ reflects the confidence in such belief.

Similarly to the unpenalized optimization problem in~\eqref{eq:weighted-mle}, in practice an optimizer of~\eqref{eq:pen-mle} is found by solving the corresponding first order conditions~(A.27) %
in Appendix~A.3. %
Under a mild assumption on the constant shape parameter estimate $\hat \xi$, we show existence and consistency of the penalized estimator $\hat\theta_{\mathrm{pen}}(x)$ if the sequence $\lambda_n$ tends to $0$ as $n\to\infty$. This is the same condition on the penalization parameter as in the classical regression case with lasso or ridge penalties \citep{fu2000}. Define the set
$B_n = \{ \text{there exists a solution of the first order conditions~(A.27) for sample size } n \}.$

\begin{theorem}
  [Consistency of $\hat\theta_{\mathrm{pen}}$]\label{thm:consistency-pen}
  Let $x \in \mathcal{X}$ denote a fixed predictor value and let
  $\tau_n = 1-k/n$ be an intermediate quantile level. Suppose that the assumptions of Theorem~\ref{thm:consistency-new} hold.
  Furthermore, let $\lambda_n$ be a sequence satisfying $\lambda_n \to 0$ as $n\to\infty$ and assume that $\hat \xi$ is bounded in probability as $n\to\infty$.
  Then, with probability tending to one, there exist a solution $\hat\theta_{\mathrm{pen}}(x) = (\hat\sigma_{\mathrm{pen}}(x), \hat\xi_{\mathrm{pen}}(x))$ to the penalized first order conditions in~(A.27), that is, $\mathbb P(B_n) \to 1$ as $n\to\infty$, and on this set the solution is consistent, that is $ \hat\xi_{\mathrm{pen}}(x) \stackrel{\P}{\to} \xi(x)$ and $\hat\sigma_{\mathrm{pen}}(x)/(\xi(x)Q_x(\tau_n)) \stackrel{\P}{\to}$ as $n \to \infty$.
\end{theorem}

\begin{remark}
  The assumption that the constant shape parameter estimate $\hat \xi$ is bounded in probability as $n\to\infty$ is very weak. It is trivially satisfied if it is chosen as a constant $\xi_0$ by expert knowledge, or implied by the classical consistency if the unconditional estimator for the shape parameter is used \citep{dre2004,zhou2009}. 
\end{remark}

In practice, when we penalize the shape parameter we modify Algorithm~\ref{alg:erf} by replacing Line~3 of the \textsc{ERF-Predict} subroutine with~\eqref{eq:pen-mle}. Similarly, we cross-validate $\lambda$ using the scheme presented in Appendix~C %
on the modified Algorithm~\ref{alg:erf}.
Figure~\ref{fig:cv-lambda} shows the square root MISE over 50 simulations for different values of $\lambda$ and different quantile levels. 
\begin{figure}[!tb]
  \centering
  \includegraphics[scale=1]{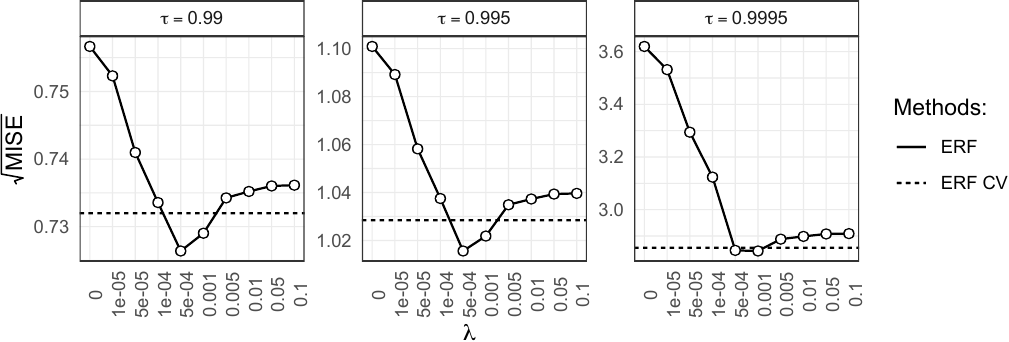}
  \caption{Square root MISE of ERF for different penalty values $\lambda$ and quantile levels $\tau$ over 50 simulations. The data is generated according to Example~\ref{ex:gen-mod}.
  }
  \label{fig:cv-lambda}
\end{figure}

\section{Simulation Study}\label{sec:sim-study}
\subsection{Setup}\label{sec:setup}
We compare ERF to other quantile regression methods on simulated data sets and assess the properties of the different approaches.
We simulate $n$ independent training observations from the random vector $(X,Y)$. The predictor $\g X \in \R^p$ follows a uniform distribution on the cube $[-1, 1]^p$ for different dimensions~$p$, and the conditional response variable $Y \mid X = x$ follows distributions with tail heaviness depending on the simulation study. The goal is to predict the conditional quantiles $Q_{\g x}(\tau)$ for moderately to very extreme quantile levels $\tau>0$.
We evaluate the methods on test data $\{\g x_i\}_{i = 1}^{n'}$ of  $n'=1000$ observations generated with a Halton sequence \citep{halton1964} on the cube $[-1, 1]^p$.
For a fitted quantile regression function $x \mapsto \hat Q_{x}(\tau)$, $\tau \in (0, 1)$, we compute the test integrated squared error (ISE) as $\text{ISE} = \sum_{i = 1}^{n'} \left(\hat Q_{\g x_i}(\tau) - Q_{\g x_i}(\tau)\right)^2/n'$,
where $x \mapsto Q_{x}(\tau)$ is the true quantile function of the model.
We obtain mean integrated squared error (MISE) by averaging $m = 50$ repetitions of the fitting and evaluation process.

The first experiment studies how ERF performs on the two challenges of high quantile levels and large-dimensional predictor spaces illustrated in Figure~\ref{fig:intro}.
The data sets follow the model of Example~\ref{ex:gen-mod} where the response has a Student's $t$-distribution with scale shift. We consider the methods' performances for different dimensions $p$ of the predictor space and different quantile levels~$\tau$.
The second experiment studies the robustness of the methods
to different tail heaviness, ranging from exponential tail ($\xi = 0$) to heavy tails ($\xi = 0.33$).

In the third experiment (see Appendix~D.1), %
we consider more complex regression functions for the conditional response variables to assess the performance of the quantile regression methods on complex data. The underlying models depend on more than one predictor value, and both the scale and the shape parameters vary simultaneously. According to Example~\ref{ex:lipschitz}, they all satisfy Assumption~\ref{ass:lipschitz} of our consistency Theorem~\ref{thm:consistency-new}.

\subsection{Competing Methods and Tuning Parameters}\label{sec:competing-methods}
Among the forest-based algorithms, we consider quantile regression forests \citep{mein2006}, denoted by QRF, and generalized random forests~\citep{athe2019}, denoted by GRF.
Since these methods do not rely on the GPD likelihood, it is not possible to cross-validate their tuning parameters as in Appendix~C for prediction error of extreme quantiles. However, we notice that their tuning parameters do not significantly influence the results and thus use the default values; see Section~\ref{sec:quant-reg} for details on forest-based approaches. As a hybrid method that uses forest-based weights, we consider the method EGP Tail \citep{tail2017} who assume that the entire conditional distribution $Y \mid X = x$ follows a parametric family called extended generalized Pareto (EGP) distribution.

The proposed ERF method is part of the class of extrapolation approaches that model the exceedances $Z_i$ in~\eqref{def_exc} by conditional GPD distributions. Among the numerous methods that follow this strategy we present only those from~\cite{benjamin2019} and~\cite{velthoen2021} as they turn out to be most competitive. Other existing extrapolation based methods are not flexible enough in our setting \citep{WangTsai2009, huixia2012} or do not perform well with larger noise dimensions \citep{abdelaati2011, GardesStupfler2019}.
The method from \cite{benjamin2019}, denoted by EGAM, uses generalized additive models for the parameters of a GPD distribution. Here, we model the scale and shape parameters as smooth additive functions of the covariates without interaction effects.
\cite{velthoen2021} propose the GBEX method to estimate the GPD parameters using gradient boosting \citep{frie2001, fried2002}. 
For the fitting of all competing methods, we follow the authors' recommendations.
We also consider the unconditional model as a baseline, where we fit constant GPD parameters $(\sigma, \xi)$ to the conditional exceedances~$Z_i$.

For the sake of comparability, for all extrapolation methods, i.e., ERF, GBEX, EGAM, and unconditional, we use the same exceedances $Z_i = (Y_i - \hat Q_x^{GRF}(\tau_n))_+$, which are computed from a GRF with intermediate quantile level $\tau_n = 0.8 \leq \tau$.
From Figure~S11 in Appendix~D.2 we observe that ERF is rather robust to the choice of the intermediate quantile level $\tau_n$.  
In general, the optimal choice of $\tau_n$ depends on the properties of the data \citep[][Section 3.2]{deh2006a}, and there are numerous data-driven methods for choosing the threshold, typically based on stable regions of some statistic as a function of $\tau_n$ \citep[e.g.,][Section 6.2.2]{embr2012}.
In the predictor-dependent case, approaches using discrepancy metrics have been proposed \citep{WangTsai2009,huixia2013}.

Concerning ERF, we cross-validate the minimum node size $\kappa \in \{10, 40, 100\}$ of the GRF and the penalty term $\lambda \in \{0, 0.01, 0.001\}$ of the penalized log-likelihood in~\eqref{eq:pen-mle} using the repeated cross-validation scheme described in Appendix~C. %
We leave the other tuning parameters of the random forests at their default values; see the documentation for \texttt{quantile\_forest} in~\citet{grf}.
All simulation results can be reproduced following the description and code at \url{https://github.com/nicolagnecco/erf-numerical-results}.

\subsection{Experiment 1}\label{sec:exp-1}

In this simulation study, the data follows the model of Example~\ref{ex:gen-mod} where the response variable $Y \mid \g X = \g x$ follows a Student's $t$-distribution with $\nu(\g x) \coloneqq 1/ \xi(x) =  4$ degrees of freedom and scale $s(\g x) = 1 + \I\{x_1 > 0\}$. This is the same setup as in the simulation in \citet[][Section 5]{athe2019}, except that here we use Student's $t$-distribution instead of Gaussian for the noise. There is only one signal variable $X_1$ and $p-1$ noise variables. We generate $n = 2000$ training data and consider different dimensions $p$ and quantile levels $\tau$.

We first fix the dimension $p=10$ and investigate the effect of different target quantile levels $\tau$. The left panel of Figure~\ref{fig:wrt-q} shows the $\rmise$, the square root of the MISE defined in Section~\ref{sec:setup}, for varying values of $\tau$ close to $1$. At the intermediate quantile level~$\tau_n = 0.8$ all methods show a similar performance; in fact, the extrapolation methods coincide at this level since they use the same GRF-based estimator for the intermediate quantile. As the quantile level~$\tau$ increases we observe that the performance curves diverge. The forest-based quantile regression methods, which do not explicitly use extreme value theory for tail approximations, cannot extrapolate well to extreme quantile levels. This includes the EGP Tail method that does not focus on modeling the tail. Among the extrapolation methods, the unconditional baseline does not perform well since it cannot capture the shift in the scale function. While the EGAM does better, it already suffers from the relatively large dimension of the noise variables, a fact that we discuss in detail below. By far, the best methods are ERF and GBEX. Both combine the flexibility in the predictor space with correct extrapolation originating from the GPD approximation.

\begin{figure}[!tb]
  \centering
  \includegraphics[scale=1]{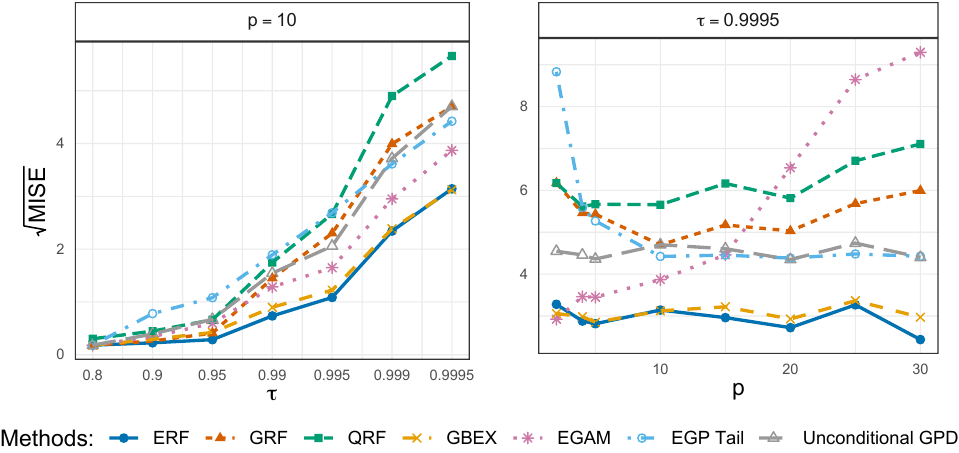}
  \caption{Square root MISE for different methods against the quantile level $\tau$ in dimension $p = 10$ (left), and against the model dimension $p$ for quantile levels $\tau=0.9995$ (right).
  }
  \label{fig:wrt-q}
\end{figure}

We next compare the performances for varying dimensions $p$ of the predictor space. The right panel of Figure~\ref{fig:wrt-q} shows the $\rmise$ as a function of $p$ for fixed quantile level $\tau = 0.9995$.
QRF and GRF are robust against growing dimensions and additional noise variables,
but the performance is not competitive for this high quantile level.
For smaller dimensions, the methods deteriorate because trees can only place splits on the signal variable $X_1$, increasing the variance.
The performance of EGAM clearly illustrates the problem of this method in large dimensions. The method cannot filter the signal from the many noise variables even though. Moreover, as mentioned by~\cite{benjamin2019}, the method becomes computationally demanding as $p$ grows. The unconditional model is unaffected by the noise dimension since it does not use the predictor values.
Both ERF and GBEX combine the advantages of the two types of approaches. They are both robust against additional noise variables and perform well even for large dimensional predictor spaces.

\subsection{Experiment 2}
The second experiment investigates the robustness of the quantile regression methods against
noise distributions with different tail heaviness in a large dimension.
The simulation setup is similar to the previous section and the data follows the model of Example~\ref{ex:gen-mod}, where we set $p=40$. We simulate data for noise distributions with shape parameters $\xi = 0, 1/4, 1/3$, where for the light-tailed case $\xi = 0$ we choose a Gaussian distribution and otherwise a Student's $t$ distribution with $\xi = 1/4, 1/3$ corresponding $v = 4, 3$ degrees of freedom, respectively. We exclude EGAM in this experiment since its performance decreases for large $p$ and it becomes computationally prohibitive (see Figure~\ref{fig:wrt-q}).

Figure~\ref{fig:boxplots-step} shows boxplots of the $\rise$ for the extreme quantile level $\tau = 0.9995$ for the different methods and different shape parameters. The triangles correspond to the average values. To make the plot easier to visualize, we remove large outliers of GRF and QRF.
The picture is similar for the three noise distributions. We observe that ERF performs very well also in the Gaussian case. Since our method relies on the GPD, estimation is not restricted to positive shape parameters, as opposed to approaches based on the Hill estimator \citep[e.g.,][]{huixia2012, huixia2013}.
Unsurprisingly, as the noise becomes very heavy-tailed (right-hand side of Figure~\ref{fig:boxplots-step}) the performances of all methods become closer since the problem becomes increasingly difficult.
Note that the performance of both QRF and GRF degrades for large values of $\xi$ and they exhibit increasingly large outliers resulting in an average exceeding the upper quartile. This underlines that classical methods without proper extrapolation are insufficient for extreme quantile regression.

\begin{figure}[!tb]
  \centering
  \hspace*{-.5in}
  \includegraphics[scale=1]{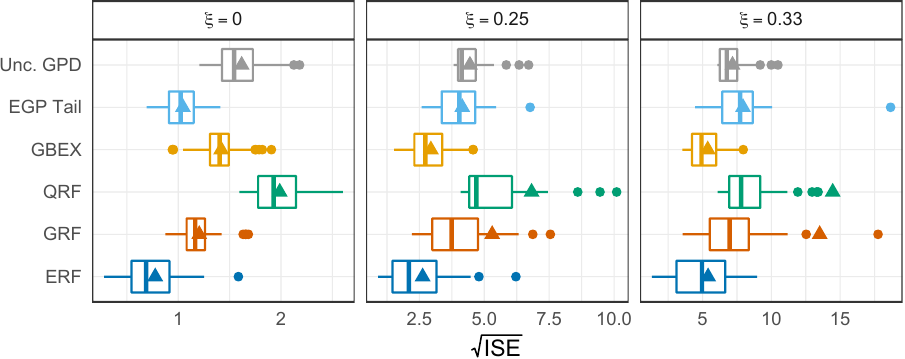}
  \caption{Boxplots of $\rise$ over $m = 50$ simulations, for different tail indices in the noise distribution at the quantile level $\tau = 0.9995$.
    The predictor space dimension is $p = 40$.
    Triangles represent the average values.
  }
  \label{fig:boxplots-step}
\end{figure}

\section{Analysis of the U.S.~Wage Structure}
\label{sec:application}

We compare the performance of ERF, GBEX, GRF, and the unconditional GPD on the U.S. census microdata for the year 1980~\citep{angrist2009}.
As described therein, the data set consists of 65,023 U.S.-born black and white men of age between 40--49, with five to twenty years of education, and with positive annual earnings and hours worked in the year before the census.
The large number of observations makes this dataset suitable to assess the performance of  the different methods at very high quantile levels.
The response~$Y$ describes the
weekly wage, expressed in 1989 U.S. dollars computed as the annual income divided by the number of weeks worked. The predictor vector consists of the numerical variables age and years of education and the categorical predictor whether the person is black or white.
To have a predictor space with larger dimension, we add ten random predictors sampled independently and uniformly on $[-1, 1]$, resulting in an overall dimension $p = 13$.

We fit ERF
repeating three times 5-fold cross-validation to tune the minimum node size $\kappa \in \{5, 40, 100\}$. To stabilize the variance of the shape parameter, we set the penalty $\lambda = 0.01$. We use the same tuning parameter setup as in~\ref{sec:competing-methods} for the other methods. In particular, we use GRF to predict the intermediate conditional quantiles at level $\tau_n = 0.8$ for all extrapolation-based methods.
We split the original data into two halves of 32,511 and 32,512 samples, and we use the first portion to perform exploratory data analysis and the second one to fit and evaluate the different methods.

For the exploratory data analysis, we fit ERF on a random subset made of 10\% of the data (i.e., 3,251 observations), and predict the GPD parameters $\hat\theta(x) = (\hat\sigma(x), \hat\xi(x))$ on the left-out observations (i.e., 29,260 observations).
Figure~\ref{fig:erf-params} shows the estimated GPD parameters $\hat\theta(x)$ as a function of years of education.
The scale parameter depends positively on years of education, whereas it is quite homogeneous between the black and white groups. In particular, it has a clear jump around 15-16 years of education, which corresponds to the end of undergraduate studies.
The shape parameter is relatively homogeneous for the black and white groups and looks stable for education. It ranges between 0.22 and 0.24, indicating heavy tails throughout the predictor space.
Moreover, Figure~S13 %
in Appendix~E.1 %
shows that the scale and shape parameters do not seem to depend on the predictor age.
\begin{figure}[!tb]
  \centering
  \includegraphics[scale=.8]{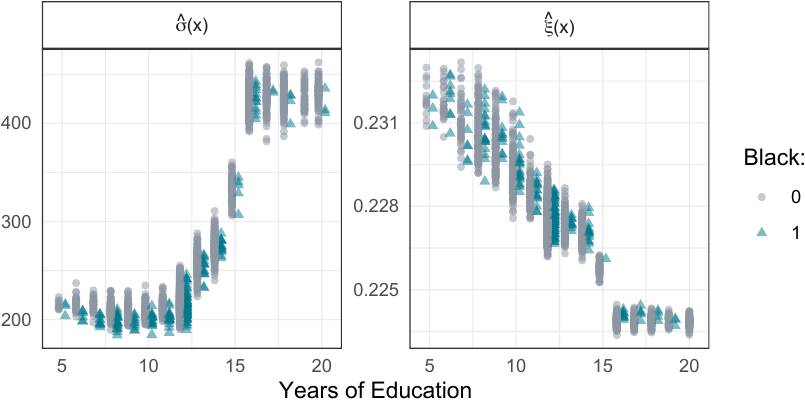}
  \caption{Estimated GPD parameters $\hat\theta(x)$ as a function of the years of education for the black (triangles) and white (circles) subgroups.
  }
  \label{fig:erf-params}
\end{figure}

Figure~\ref{fig:erf-predicts} compares the ERF quantile predictions to those of the other methods at levels $\tau = 0.9, 0.995$.
We removed all the quantiles above 6,000 predicted by GRF.
The extrapolation methods retain a good shape of the quantile function even for high levels. This does not hold for GRF, whose profile worsens as $\tau$ increases, and the discrete structure of the largest training observations becomes visible.
The unconditional method captures the variability of the conditional quantiles for $\tau = 0.9$, but it loses flexibility for larger values of $\tau$. The reason for this is that the unconditional method cannot produce different scale parameters of the GPD, while Figure~\ref{fig:erf-params} indicates that this is necessary for this data set.
ERF and GBEX model well the variability of the conditional quantiles for all values of $\tau$.
\begin{figure}[!tb]
  \centering
  \hspace*{-.5in}
  \includegraphics[scale=.8]{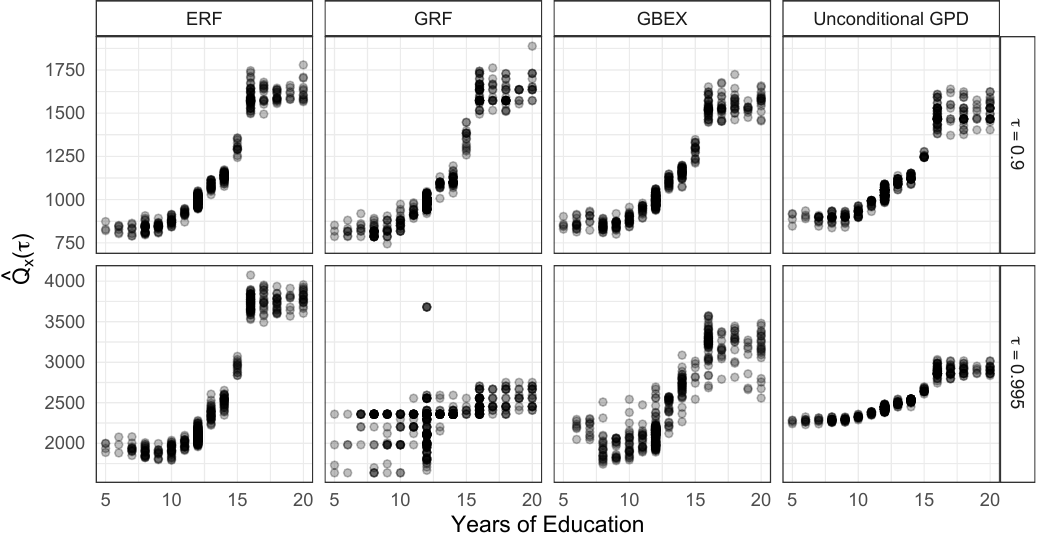}
  \caption{Predicted quantiles at levels $\tau = 0.9, 0.995$ for ERF, GRF, GBEX, and the unconditional method.
  }
  \label{fig:erf-predicts}
\end{figure}
After the exploratory analysis, we assess the quantitative performance of ERF and the other methods.
We consider the prediction metric proposed by~\citet{huixia2013},
\begin{align}\label{eq:wang-loss}
  \mathcal R_{n}\left(\hat Q_{\cdot}(\tau)\right) : =\frac{\sum_{i = 1}^{n}\I \{ Y_i < \hat  Q_{X_i}(\tau)\} - n\tau}{\sqrt{n\tau (1 - \tau)}},
\end{align}
where $n$ is the number of test observations, and $\hat Q_{\cdot}(\tau)$ is the $\tau$-th conditional quantile estimated on the training data set. This metric compares the normalized estimated proportion of observations with $Y_i < \hat Q_{X_i}(\tau)$ with the theoretical level $\tau$.
Using the true quantile function $Q_\cdot(\tau)$,
the random variable $\I\{Y_i < Q_{X_i}(\tau)\}$ follows a Bernoulli distribution with expectation $\tau$ and variance $\tau (1 - \tau)$, and by the central limit theorem the metric with oracle quantile function $\mathcal R_n(Q_\cdot(\tau))$ is asymptotically standard normal.
We partition the 32,512 observations not used in the exploratory analysis into ten random folds. On each fold, we fit the different methods and evaluate them on the left-out observations, using the absolute value of~\eqref{eq:wang-loss}. Unlike classical cross-validation, we fit the methods using a single fold and validate them on the remaining ones; this allows us to have enough observations to gauge their performance for high quantile levels $\tau$.
Figure~\ref{fig:wage-cv} shows the performance of ERF, GRF, GBEX, and the unconditional method over the ten repetitions for different quantile levels.
The shaded area represents the 95\% interval of the absolute value of a standard normal distribution, corresponding to the 95\% confidence level of the oracle method with true quantile function.
We observe that both ERF and GBEX have very good performance compared to the oracle for increasing quantile levels, and they outperform the unconditional method for large values of $\tau$. This is because they are flexible to model the scale and shape as a function of the predictors, unlike the unconditional method.
While GRF performs well for the quantile level $\tau = 0.9$, it worsens quite quickly for larger values of~$\tau$. This is expected since GRF does not rely on extrapolation results from extreme value theory and cannot accurately predict very high quantiles.

For the same data set, \cite{angrist2009b} consider the natural logarithm of the wage as a response variable for quantile regression with fixed, non-extreme quantile levels. In Appendix~E.1 %
we perform our analysis above for extreme quantiles again with this log-transformed response since it highlights several interesting properties of the ERF algorithm. Figure~S15 %
in Appendix~E.2 %
shows that the flexible methods ERF and GBEX have the desirable property that the predictions do not change much under marginal transformations. The unconditional method, on the other hand, seems to be sensitive to marginal transformations; see Appendix~E.1 %
for details. We thus advise to use flexible extrapolation methods such as ERF or GBEX that perform well on any marginal distributions.
\begin{figure}[!tb]
  \centering
  \hspace*{-0.75in}
  \includegraphics[scale=.8]{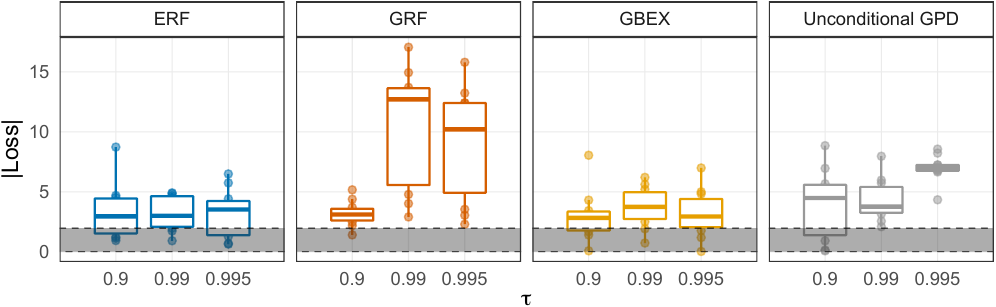}
  \caption{Absolute value of the loss~\eqref{eq:wang-loss} for the different methods fitted on the original response of the U.S.~wage data. The shaded area represents the 95\% interval of the absolute value of a standard normal distribution.}
  \label{fig:wage-cv}
\end{figure}

\section*{Acknowledgements}
We thank Alberto Quaini, Stanislav Volgushev and Chen Zhou for helpful discussions. We are also grateful to the editorial team, two anonymous referees, and the code referee for comments which helped us
to significantly improve the paper.
SE was supported by a research grant (186858) from the Swiss National Science Foundation~(SNSF). NG was supported by a research grant (210976) from the SNSF.

\begin{appendix}
  \section{Proofs}\label{app:proof2}

  \subsection{Random forests}\label{para:forest}
  Here we recall the main facts of the random forests proposed by~\citet{athe2019} in their Specification~1.
  The forest is honest and built via subsampling as follows.
  Each tree $b = 1, \dots, B$ in the forest is built as follows.
  Subsample without replacement $\mathcal{S}_b \subseteq \{1,\dots, n\}$ observations such that $|\mathcal{S}_b| = s < n$, with $s \to \infty$ and $s/n \to 0$ as $n \to \infty$.
  Partition $\mathcal{S}_b = \mathcal{I}_b \cup \mathcal{J}_b$, where $\mathcal{I}_b \cap \mathcal{J}_b = \varnothing$ and $|\mathcal{I}_b| = \lfloor s/2 \rfloor$ and  $|\mathcal{J}_b| = \lceil s/2 \rceil$. The observations in $\mathcal{J}_b$ are used to split the predictor space to construct the final leaves $L_{b}(x)$, for all $x \in \mathcal{X}$.
  The observations in $\mathcal{I}_b$ are used to make predictions.
  Furthermore, the forest consists of $B_n = {n\choose s}$ trees fitted on all possible subsamples of size $s$.
  All trees in the forest are symmetric, in the sense that they are invariant to
  permuting the indices of training observations.
  Moreover, they make balanced splits, in the sense that
  every split puts at least a fraction $\omega$ of the observations in the parent node into each child, for some $\omega > 0$.
  They are randomized in such a way that, at every split,
  the probability that the tree splits on the $j$-th feature is bounded from below by some $\pi > 0$.

In practice, one builds a forest by growing a fixed number of trees on subsamples of size $s < n$.
The following results instead hold for forests made of $n\choose s$ trees fitted on all possible subsamples of size~$s$. Similarly to \citet{wager2018estimation}, we assume that $B$ is large enough so that the Monte Carlo effect is negligible.

  We recall the main definitions of similarity weights for a forest and the underlying trees.
  For a given predictor value $x \in \mathcal{X}$, the forest similarity weights are defined by
  \begin{align*}
    w_i(x) \coloneqq \frac{1}{B} \sum_{b = 1}^{B} w_{i, b}(x),
  \end{align*}
  where $w_{i, b}(x)$ are the weights of the underlying trees $b = 1, \dots, B = {n\choose k}$.
  For each tree, the corresponding weights are defined by
  \begin{align}\label{eq:weights}
    w_{i, b}(x) \coloneqq & \ \frac{\1\left\{ X_i \in L_b(x), i \in \mathcal{I}_b \right\}}{|L_b(x)|}, \\
    |L_b(x)| \coloneqq         & \ \sum_{i = 1}^{n} \1\{X_i \in L_b(x), i \in \mathcal{I}_b\}.
  \end{align}
  Each tree is constructed such that each leaf contains between $\kappa$ and $2\kappa-1$ observations. Therefore, the leaf size $|L_b(x)|$ is always non-zero.

  Here, we restate a result about the diameter of the leaf of a single tree, which is defined as $\diam(L_b(x)) \coloneqq \sup\{\norm{y - x}_2 : y \in L_b(x)\}$. It can be found in the proof of Theorem~3 in \citet{wager2018estimation}.  

  \begin{lemma}[Leaf's diameter convergence in probability]
    Let $b = 1, \dots, B$ denote a tree in the forest and let $x \in \mathcal{X}$ be a fixed predictor point. 
    Let $s < n$ denote the number of observations used to grow the tree that satisfy~(3.6). %
    Then, for $s$ large enough, the diameter of the leaf $L_b(x)$ satisfies 
    \begin{align}\label{eq:diam}
      \P\left[ \diam\left( L_b(x) \right) > C_1 s^{-0.51 C_3}\right] \leq C_2s^{-0.50C_3},
    \end{align}
    where $C_1$, $C_2$ are positive constants depending on the parameters of the Specification~1 of~\citet{athe2019}, and 
    \begin{equation}
      \label{eq:c3}
      C_3 \coloneqq \frac{\log((1-\omega)^{-1})}{\log(\omega^{-1})}\frac{\pi}{p},\ 0 \leq \omega \leq 0.2. 
    \end{equation}
  \end{lemma}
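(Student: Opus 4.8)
The plan is to follow the argument in the proof of Theorem~3 of \citet{wager2018estimation}, transcribed into the present notation; the claim is purely geometric, controlling the maximal side length of the rectangular leaf $L_b(x)$, and the strategy proceeds in three steps — lower bound the number of splits on the root-to-leaf path, show these splits are spread across all $p$ coordinates, and convert a per-coordinate split count into a bound on the cell's extent in that direction. For the first step I would bound the total number of splits $c(x)$ on the path from the root to $L_b(x)$. Since every split is balanced, the child containing $x$ keeps at least a fraction $\omega$ of its parent's observations, so after $c(x)$ splits the leaf retains at least $\lceil s/2\rceil\,\omega^{c(x)}$ of the $|\mathcal{J}_b| = \lceil s/2\rceil$ points used to determine the splits. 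As each leaf holds at most $2\kappa - 1$ observations, this forces
\[
  c(x) \ge \frac{\log\!\left(\lceil s/2\rceil / (2\kappa - 1)\right)}{\log(1/\omega)},
\]
which grows like $\log(s)/\log(1/\omega)$ for $s$ large and satisfying~\eqref{eq:n-k-s}.

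Next I would exploit the randomization of the tree. Conditionally on the path length, each split falls on coordinate $j$ with probability at least $\pi/p$, so the count $c_j(x)$ of splits on coordinate $j$ stochastically dominates a $\mathrm{Binomial}(c(x),\pi/p)$ variable. A Chernoff bound applied at a threshold slightly below the mean then guarantees that, outside an event of probability of order $s^{-0.50 C_3}$, every coordinate receives a fixed fraction (close to $0.51$) of its expected $\tfrac{\pi}{p}c(x)$ splits; this is exactly where the factor $\pi/p$ and the gap between the exponents $0.50$ and $0.51$ originate, the slack being what the concentration step consumes. A union bound over the $p$ coordinates preserves the polynomial decay.

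Finally I would convert split counts into side lengths. A split on coordinate $j$ sends at least a fraction $\omega$ of the covariate mass to each side; since the density of $X$ is bounded away from $0$ and $\infty$ on the compact $\mathcal{X}$ by Assumption~\ref{ass:sim-weights}, mass fractions are comparable to length fractions, so each such split shrinks the $j$-th side of the cell by a constant factor bounded by $1-\omega$. Iterating, the $j$-extent of the leaf is at most $\diam(\mathcal{X})\,(1-\omega)^{c_j(x)}$; substituting the lower bound on $c_j(x)$ and computing $(1-\omega)^{c_j(x)}$ against the rate of $c(x)$ yields $s^{-0.51 C_3}$ with $C_3 = \frac{\log((1-\omega)^{-1})}{\log(\omega^{-1})}\frac{\pi}{p}$ exactly as stated. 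Maximizing over $j$ and collecting $\diam(\mathcal{X})$, $\kappa$, and the density bounds into $C_1,C_2$ then gives~\eqref{eq:diam}.

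The hard part will be the last step: the balance guarantee is expressed in terms of observation counts, whereas the conclusion is Euclidean, so passing from a mass fraction to a length fraction requires the bounded-density assumption together with careful control of the accumulated distortion ratio $c_{\max}/c_{\min}$ across the $O(\log s)$ splits, all while keeping the exponent sharp enough to reach $0.51 C_3$ rather than a weaker rate. By comparison, the binomial concentration of step two and the depth bound of step one are routine. Since~\eqref{eq:diam} is a verbatim restatement of the cited result, in the paper itself this plan reduces to invoking \citet{wager2018estimation}.
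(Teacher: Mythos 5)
Your proposal is correct and takes the same route as the paper: the paper offers no proof of this lemma at all but restates it verbatim from the proof of Theorem~3 in \citet{wager2018estimation}, and your three-step reconstruction (depth lower bound from balanced splits plus the leaf-size cap, Chernoff concentration of per-coordinate split counts --- which is exactly where the $0.51$ versus $0.50$ gap and the restriction $\omega \leq 0.2$ are consumed --- and conversion of split counts into side lengths via the density bounds) is precisely the argument of that cited proof. As you note yourself, within this paper the proof reduces to invoking the citation, so your sketch is a faithful expansion of the cited argument rather than a different approach.
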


  As a simple corollary, we can upper bound the expectation of the diameter of a leaf.

  \begin{cor}[Leaf's diameter convergence in expectation]\label{cor:diam}
    For $s$ large enough, the expected value of the diameter of the leaf satisfies
    \begin{align}
      \E&\left[ \diam\left( L_b(x) \right) \right] = \mathcal{O}\left( s^{-0.5 C_3} \right),
    \end{align}
  where $C_3$ is defined in~\eqref{eq:c3}.
  \end{cor}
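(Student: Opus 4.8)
The plan is to derive the bound on the expectation directly from the tail bound in the preceding Lemma by splitting the integral representation of the expectation at the threshold $C_1 s^{-0.51 C_3}$. The key observation is that the diameter is a bounded random variable: since $\mathcal{X}$ is compact, we have $\diam(L_b(x)) \leq \diam(\mathcal{X}) =: D < \infty$ almost surely, so all expectations are finite and we may freely split the range of integration.

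First I would write, for any threshold $t > 0$,
\begin{align*}
  \E\left[ \diam\left( L_b(x) \right) \right]
  &= \E\left[ \diam\left( L_b(x) \right) \1\{\diam(L_b(x)) \leq t\} \right] \\
  &\quad + \E\left[ \diam\left( L_b(x) \right) \1\{\diam(L_b(x)) > t\} \right].
\end{align*}
The first term is bounded above by $t$, simply because the diameter does not exceed $t$ on the event in question. For the second term, I would bound the diameter by its almost-sure upper bound $D$ and pull it out, giving
\begin{align*}
  \E\left[ \diam\left( L_b(x) \right) \1\{\diam(L_b(x)) > t\} \right]
  \leq D\, \P\left[ \diam(L_b(x)) > t \right].
\end{align*}

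Next I would set $t = C_1 s^{-0.51 C_3}$ so that the Lemma applies directly to the probability in the second term, yielding the bound $D \cdot C_2 s^{-0.50 C_3}$. Combining the two pieces gives
\begin{align*}
  \E\left[ \diam\left( L_b(x) \right) \right]
  \leq C_1 s^{-0.51 C_3} + D\, C_2 s^{-0.50 C_3}.
\end{align*}
Since $-0.51 C_3 < -0.50 C_3$, the second summand dominates as $s \to \infty$, and both terms are $\mathcal{O}(s^{-0.5 C_3})$; hence the whole expression is $\mathcal{O}(s^{-0.5 C_3})$, which is the claimed bound. I do not anticipate a genuine obstacle here, as the argument is a standard tail-integral splitting; the only point requiring a little care is justifying the almost-sure boundedness of the diameter, which follows immediately from the compactness of $\mathcal{X}$ assumed in Assumption~\ref{ass:lipschitz} and used throughout. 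One could alternatively route through the layer-cake formula $\E[\diam] = \int_0^{D} \P[\diam > t]\,\d t$, but the two-term split is cleaner and makes the matching of exponents transparent.
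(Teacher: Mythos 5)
Your proof is correct and is essentially identical to the paper's: both split $\E[\diam(L_b(x))]$ at the threshold $C_1 s^{-0.51 C_3}$, bound the small-diameter part by the threshold itself, bound the large-diameter part by the (finite) diameter of the compact set $\mathcal{X}$ times the tail probability $C_2 s^{-0.50 C_3}$ from the lemma, and conclude since both terms are $\mathcal{O}(s^{-0.5 C_3})$. The only cosmetic difference is that you phrase the split with indicator functions while the paper uses the law of total expectation with conditional expectations, which is the same decomposition.
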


  \begin{proof}
    We can write
    \begin{align}\label{eq:exp-diam}
      \begin{split}
        \E&\left[ \diam\left( L_b(x) \right) \right]\\
        = &\ \E\left[ \diam\left( L_b(x) \right) \mid \diam\left( L_b(x) \right) > C_1 s^{-0.51 C_3}\right] \P\left( \diam\left( L_b(x) \right) > C_1 s^{-0.51 C_3}  \right)\\
        & + \E\left[ \diam\left( L_b(x) \right) \mid \diam\left( L_b(x) \right) \leq C_1 s^{-0.51 C_3}\right] \P\left( \diam\left( L_b(x) \right) \leq C_1 s^{-0.51 C_3}  \right)\\
        \leq&\
        |\mathcal{X}| C_2 s^{-0.50 C_3} + C_1 s^{-0.51 C_3},
      \end{split}
    \end{align}
    where $|\mathcal{X}| < \infty$ is the area of the compact predictor space,
    and $C_1$, $C_2$ are positive constants depending on the parameters of the Specification~1 of~\citet{athe2019}.
  \end{proof}

  Here, we restate a result from~\citet{wager2018estimation} who show that the variance of a forest $T_n(x)$ is at most $s/n$ times the variance of a tree $T_{n, b}(x)$.

  \begin{lemma}[Variance of a forest]\label{lem:varforest}
  Let $x \in \mathcal{X}$ denote a fixed predictor point and let $s < n$ denote the number of observations used to grow the tree that satisfy~(3.6). %
  Denote by $T_n(x)$ a forest grown according to Specification~1 (see Section~\ref{para:forest}), and by $T_{n, b}(x)$ a tree of the forest, for $b = 1, \dots, B_n = {n\choose s}$.
  Then, the variance of a forest $T_n(x)$ is at most $s/n$ times the variance of a tree $T_{n, b}(x)$, that is
    \begin{align}\label{eq:varforest3}
      \limsup_{n\to\infty} \frac{n}{s}\frac{\V\left[T_{n}(x)\right]}{\V\left[T_{n, b}(x)\right]} \leq 1.
    \end{align}
  \end{lemma}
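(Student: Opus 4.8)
The plan is to recognize the idealized forest $T_n(x)$ as a complete, infinite-order U-statistic and then invoke the classical Hoeffding variance bound. Since each tree is symmetric (invariant to permutations of its training indices) by Specification~1, I would first absorb all tree-level randomization — the honest split $\mathcal{S}_b=\mathcal{I}_b\cup\mathcal{J}_b$, the split directions, and the randomized splitting rule — into an auxiliary random element $\xi$ drawn independently of the data. Writing a single tree as $T(x;\xi;Z_S)$ for a subsample $Z_S=(Z_{i_1},\dots,Z_{i_s})$, the kernel
\[
h(Z_S) := \E_\xi\bigl[T(x;\xi;Z_S)\bigr]
\]
is a symmetric function of $s$ arguments, and the idealized forest built on all $B_n=\binom{n}{s}$ subsamples is exactly the U-statistic $T_n(x)=\binom{n}{s}^{-1}\sum_{|S|=s}h(Z_S)$. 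Here I use that $B_n=\binom{n}{s}$, so that the Monte Carlo averaging over $\xi$ across trees is negligible, as assumed after Assumption~\ref{ass:sim-weights}.

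The key step is the bound $\V[T_n(x)] \le \tfrac{1}{\lfloor n/s\rfloor}\V[h(Z_S)]$. I would prove this by the standard ``average of block averages'' representation: for $m=\lfloor n/s\rfloor$ and a permutation $\pi$ of $\{1,\dots,n\}$, set $V_\pi = \tfrac1m\sum_{j=1}^m h(Z_{\pi((j-1)s+1)},\dots,Z_{\pi(js)})$, an average of $m$ i.i.d.\ kernel evaluations on disjoint blocks, so that $\V[V_\pi]=\tfrac1m\V[h(Z_S)]$. By symmetry of $h$, averaging $V_\pi$ over all $n!$ permutations reproduces $T_n(x)$, and since the variance of a fixed convex combination of random variables is at most the corresponding combination of their variances (Cauchy--Schwarz on the covariances), we get $\V[T_n(x)] \le \tfrac{1}{n!}\sum_\pi \V[V_\pi] = \tfrac1m\V[h(Z_S)]$.

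Next I would relate $\V[h]$ to the full tree variance $\V[T_{n,b}(x)]=\V[T(x;\xi;Z_S)]$, which also includes the auxiliary randomness. By the law of total variance,
\[
\V[T_{n,b}(x)] = \V\bigl[\E_\xi[T\mid Z_S]\bigr] + \E\bigl[\Var_\xi(T\mid Z_S)\bigr] = \V[h(Z_S)] + \E\bigl[\Var_\xi(T\mid Z_S)\bigr] \ge \V[h(Z_S)].
\]
Combining the two displays yields $\tfrac{n}{s}\V[T_n(x)] \le \tfrac{n}{s\lfloor n/s\rfloor}\V[h(Z_S)] \le \tfrac{n}{s\lfloor n/s\rfloor}\V[T_{n,b}(x)]$, and since $s=n^{\beta_s}$ with $\beta_s<1$ forces $s/n\to 0$ and hence $\tfrac{n}{s\lfloor n/s\rfloor}\to 1$, taking $\limsup$ gives the claim.

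The routine part is the Hoeffding bound itself. The main obstacle is the bookkeeping needed to make the U-statistic reduction rigorous: verifying that the honesty split and all randomization can genuinely be folded into an independent $\xi$ so that $h=\E_\xi[T]$ is a symmetric kernel of exact order $s$, and confirming that with $B_n=\binom{n}{s}$ trees the Monte Carlo contribution to $\V[T_n(x)]$ is asymptotically negligible, so that the idealized and implemented forests share the same limiting variance ratio.
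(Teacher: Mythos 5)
Your proof is correct, and it is essentially the argument the paper relies on: the paper states this lemma without proof, simply restating it as a result of \citet{wager2018estimation}, whose justification is exactly the classical Hoeffding U-statistic variance bound you reconstruct (symmetric kernel $h=\E_\xi[T]$ of order $s$, the block-average representation giving $\V[U_n]\le \V[h]/\lfloor n/s\rfloor$, and the law of total variance to absorb the tree-level randomization, with the Monte Carlo term killed by $B_n=\binom{n}{s}$). The only point worth noting is that your constant $n/(s\lfloor n/s\rfloor)$ exceeds $1$ for finite $n$ and only tends to $1$, which is precisely why the lemma is phrased as a $\limsup$ bound rather than a finite-sample inequality.
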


  \subsection{Proof of Theorem~1}\label{proof:consistency-new}
  The proof is inspired by \citep[][proof of Theorem~2.1]{zhou2009} who showed consistency of the maximum likelihood estimator for the GPD in the unconditional case.
  The main technical difficulty and difference with the proof from~\citet{zhou2009} is to show consistency of the terms in Propositions~\ref{prop:hill}--\ref{prop:gn-tilde}.
  Here, we deal with estimators that are localized in the predictor space using similarity weights estimated with a generalized random forest \citep{athe2019}.

  \begin{proof}
    Fix the predictor value $x \in \mathcal{X}$, and recall the weighted negative log-likelihood in~(3.3) %
    defined as
    \begin{align*}
      L_n(\theta; x) \coloneqq \sum_{i = 1}^{n} w_i(x) \ell_\theta(Y_i - \hat{Q}_{X_i}(\tau_n)) 1\{Y_i > \hat{Q}_x(\tau_n)\},
    \end{align*}
    where
    \begin{align*}
      \ell_\theta(z) =   \log \sigma + \left(1 + \frac{1}{\xi}\right)\log \left(1 + \frac{\xi}{\sigma}z\right),\quad z > 0,
    \end{align*}
    and $w_i(x) \coloneqq w_n(x, X_i)$.
   To compute the local minimum over $\Theta$, consider the first order conditions $\nabla \ell_\theta(z) = 0$, that is
    \begin{align*}
      \partial_\sigma \ell_\theta(z) \coloneqq &\
      \frac{1}{\sigma}  - \frac{1}{\sigma} \left( \frac{1 + \xi}{\xi}  \right)\frac{\xi/\sigma z}{1 + \xi/\sigma z} = 0
      \\
      \Rightarrow &\  \left( \frac{1 + \xi}{\xi}  \right)\frac{\xi/\sigma z}{1 + \xi/\sigma z} = 1,\numberthis \label{eq:foc-sigma}
      \\
      \Rightarrow &\ \frac{1}{1 + \xi/\sigma z} = \frac{1}{1 + \xi},
    \end{align*}
    and
    \begin{align*}
      \partial_\xi \ell_\theta(z) \coloneqq &\
      -\frac{1}{\xi^2} \log\left( 1 + \frac{\xi}{\sigma}z \right) + \left(\frac{1 + \xi}{\xi} \right)\frac{z/\sigma}{1 + \xi/\sigma z} = 0,
      \\
      \Rightarrow &\
      \frac{1}{\xi} \log\left( 1 + \frac{\xi}{\sigma}z \right) = \left(\frac{1 + \xi}{\xi} \right)\frac{\xi/\sigma z}{1 + \xi/\sigma z} = 1, \numberthis\label{eq:foc-xi}
      \\
      \Rightarrow &\
      \log\left( 1 + \frac{\xi}{\sigma}z \right) = \xi,
    \end{align*}
    where in~\eqref{eq:foc-xi} we used~\eqref{eq:foc-sigma}.
    Therefore, from~\eqref{eq:foc-sigma} and~\eqref{eq:foc-xi}, since $\xi > 0$, we can express the first order conditions  $\nabla L_n(\theta; x) = 0$ as
    \begin{align}\label{eq:focs}
      \begin{split}
        \sum_{i = 1}^{n} &
        \tilde w_{i}(x) \1\left\{ Z_i > 0 \right\}\log\left(1 + \frac{\xi}{\sigma}Z_i\right) = \xi,\\
        \sum_{i = 1}^{n} & \tilde w_{i}(x) \1\left\{ Z_i > 0 \right\}\frac{1}{1 + \xi/\sigma Z_i} = \frac{1}{\xi + 1},
      \end{split}
    \end{align}
    where, for all $i = 1, \dots, n$, $x \in \mathcal{X}$, we define
    \begin{align}\label{eq:zi-wi}
       & Z_i \coloneqq  Y_i - \hat Q_{X_i}(\tau_n),                                                     \\
       & \tilde w_{i}(x) \coloneqq  \frac{w_i(x)}{\sum_{j = 1}^{n} w_j(x) \1\{Z_j > 0\}}.
    \end{align}
    The bivariate search for zeros over $\Theta$ in~\eqref{eq:focs} can be cast to a univariate search using the parametrization $t = \xi/\sigma$ proposed by~\citet{davison1984}.
    Define the functions
    \begin{align}
        f_n(t) \coloneqq & \sum_{i = 1}^{n}
        \tilde w_{i}(x)\1\left\{ Z_i > 0 \right\}\log\left(1 + t Z_i\right) + 1,\label{eq:fn-def}\\
        g_n(t) \coloneqq & \sum_{i = 1}^{n}  \tilde w_{i}(x)\1\left\{ Z_i > 0\right\} \frac{1}{1 + t Z_i},
        \label{eq:gn-def}
    \end{align}
    where $t > 0$. Then, \citet{grimshaw1993} proposes to solve the equations in~\eqref{eq:focs} as follows.
    \begin{enumerate}
      \item Find a non-zero root $t_{n, x}^*$
      of $h_n(t) \coloneqq f_n(t)g_n(t) - 1$;
      \item\label{item:2} Define the estimator of the shape parameter $\hat{\xi}(x) \coloneqq f_n(t_{n, x}^*) -1$;
      \item\label{item:3} Define the estimator of the scale parameter $\hat\sigma(x) = \hat{\xi}(x) / t_{n, x}^*$. 
    \end{enumerate}
      The proof follows the one from~\citep[][see proof of Theorem~2.1]{zhou2009} and is split into two parts. In the first part, we show the existence of a solution $t_{n, x}^*$ with probability converging to 1 as $n\to \infty$.
      In the second part, we show that by plugging $t_{n, x}^*$ into Steps~\ref{item:2} and~\ref{item:3} consistently estimates the parameters $\xi(x)$ and $\sigma(x)$.

      Before starting with the proof, we define the quantity
    \begin{align}
      \tilde{g}_n \coloneqq \sum_{i = 1}^{n} \tilde{w}_i(x) \1\{Z_i > 0\}\left( \frac{1}{1 + Z_i / \hat{Q}_x(\tau_n)} \right)^2,
      \label{eq:gn-tilde}
    \end{align}
    and, for $\delta > 0$, the functions
    \begin{align}
      D_1(\delta) \coloneqq &\ 
      \frac{\xi(x)}{\left( 1 + \xi(x) \right)^2} - \frac{\xi(x)}{(2\xi(x) + 1)(1 + \delta)},
      \label{eq:d1}
      \\
      D_2(\delta) \coloneqq &\ 
      \frac{\log(1 - \delta)}{\delta} \frac{\xi(x)}{\left( 1 + \xi(x) \right)^2} +
      \frac{\xi(x)}{2\xi(x) + 1}.
      \label{eq:d2}
    \end{align}

      We now show existence of a solution with probability converging to 1.
      First, fix an arbitrary $\delta \in (0, 1/2)$ satisfying $D_1(\delta) < 0$ and $D_2(\delta) > 0$.
      By following~\citet{zhou2009}, consider the approximate solution 
      \begin{align}\label{eq:approx-sol}
        t_{n, x} \coloneqq \frac{\xi(x)}{\xi(x)\hat{Q}_x(\tau_n)} = \frac{1}{\hat{Q}_x(\tau_n)},
      \end{align}
    motivated by the fact that when $\xi(x) > 0$ it holds that $\sigma(x) \sim \xi(x) Q_{x}(\tau_n)$ as $n \to \infty$.
    Moreover, define the perturbed solutions 
    \begin{align}\label{eq:perturbed-sol}
      t_{n, x}^{(\delta)} \coloneqq &\ \frac{1 + \delta}{\hat{Q}_x(\tau_n)},\quad
      t_{n, x}^{(-\delta)} \coloneqq  \frac{1 - \delta}{\hat{Q}_x(\tau_n)}.
    \end{align}
    Then, following~\citep[][Equations~(14) and~(15)]{zhou2009}, for any $\delta_n \in (0,\delta)$ we can bound the function $f_n$ by
    \begin{align}\label{eq:bound-fn}
    \begin{split}
      f_n(t_{n, x}^{(\delta_n)}) 
      <&\ f_n(t_{n, x}) + \delta_n \left( 1 - g_n(t_{n, x}) \right),
      \\
      f_n(t_{n, x}^{(-\delta_n)})
      >&\ f_n(t_{n, x}) + \frac{\log(1-\delta)}{\delta}\delta_n(1 - g_n(t_{n, x})),
    \end{split}
    \end{align}
    and the function $g_n$ by
    \begin{align}\label{eq:bound-gn}
    \begin{split}
      g_n(t_{n, x}^{(\delta_n)}) 
      <&\ g_n(t_{n, x}) - \frac{\delta_n}{1 + \delta}(g_n(t_{n, x}) - \tilde{g}_n),
      \\
      g_n(t_{n, x}^{(-\delta_n)}) 
      >&\ g_n(t_{n, x}) + \delta_n(g_n(t_{n, x}) - \tilde{g}_n).
    \end{split}
    \end{align}
    Hence, for any $\delta_n \in (0,\delta)$, from~\eqref{eq:bound-fn} and~\eqref{eq:bound-gn} and from the definition of $h_n$ we have that
  \begin{align}\label{eq:h-bound}
  \begin{split}
    h_n(t_{n, x}^{(\delta_n)})
    <&\ f_n(t_{n, x})g_n(t_{n, x}) - 1 + \delta_n D_{1, n},
    \\
    h_n(t_{n, x}^{(-\delta_n)})
    >&\ f_n(t_{n, x})g_n(t_{n, x}) -1 + \delta_n D_{2, n},
  \end{split}
  \end{align}
  where we define
  \begin{align}\label{eq:d1n-d2n}
    D_{1, n} \coloneqq &\ g_n(t_{n, x})(1 - g_n(t_{n, x})) - f_n(t_{n, x})\frac{g_n(t_{n, x}) - \tilde{g}_n}{1 + \delta},
    \\
    D_{2, n} \coloneqq &\ \frac{\log(1 - \delta)}{\delta} g_n(t_{n, x})(1 - g_n(t_{n, x})) + f_n(t_{n, x})(g_n(t_{n, x}) - \tilde{g}_n).
  \end{align}
  From Proposition~\ref{prop:hill}, it holds that $f_n(t_{n, x}) \stackrel{\P}{\to} 1 + \xi(x)$. From Proposition~\ref{prop:gn}, it holds that $g_n(t_{n, x})\stackrel{\P}{\to} (1 + \xi(x))^{-1}$.
  From Proposition~\ref{prop:gn-tilde}, it holds that $\tilde{g}_n \stackrel{\P}{\to}(2\xi(x) + 1)^{-1}$.
  Thus, from the continuous mapping theorem, we have that $D_{1, n} \stackrel{\P}{\to} D_{1}(\delta) < 0$ and $D_{2, n} \stackrel{\P}{\to} D_{2}(\delta) > 0$.
  Define the sequence
  \begin{align}\label{eq:def-dn}
    \delta_n \coloneqq |f_n(t_{n, x})g_n(t_{n, x}) - 1| \max\left\{ -\frac{1}{D_{1, n}}, \frac{1}{D_{2, n}}\right\},
  \end{align}
  and note that $\delta_n \stackrel{\P}{\to} 0$.
  We are now ready to show that the probability of having a solution to $h_n(t) = 0$ in the interval $[t_{n, x}^{(-\delta_n)}, t_{n, x}^{(\delta_n)}]$ converges to 1, i.e., 
  \begin{align}\label{eq:def-existence}
    \P\left(\left\{  h_n(t) = 0\ \text{for some}\ t \in [t_{n, x}^{(-\delta_n)}, t_{n, x}^{(\delta_n)}] \right\} \right) \to 1,\ \text{as}\ n \to \infty.
  \end{align}
  Define the event $E_n \coloneqq \left\{ D_{1, n} < 0,\ D_{2, n} > 0,\ 0 < \delta_n < \delta \right\}$ and note that $P(E_n) \to 1$ as $n \to \infty$.
  On this event, from~\eqref{eq:h-bound} and~\eqref{eq:def-dn}, we have that
  \begin{align*}
    h_n(t_{n, x}^{(\delta_n)})
    <&\ \left( f_n(t_{n, x})g_n(t_{n, x}) - 1 \right) - |f_n(t_{n, x})g_n(t_{n, x}) - 1| \leq 0,
    \\
    h_n(t_{n, x}^{(-\delta_n)})
    >&\ \left( f_n(t_{n, x})g_n(t_{n, x}) -1 \right) + |f_n(t_{n, x})g_n(t_{n, x}) - 1| \geq 0.
  \end{align*}
  Thus, on the event $E_n$ there exists a solution $t_{n, x}^*$ lying in the interval $[t_{n, x}^{(-\delta_n)}, t_{n, x}^{(\delta_n)}]$, and therefore~\eqref{eq:def-existence} holds.

  We now show that the estimators $\hat{\xi}(x) \coloneqq f_n(t_{n, x}^*)-1$ and $\hat\sigma(x) = \hat\xi(x)/t_{n, x}^*$ are consistent.
  Since $t\mapsto f_n(t)$ is an increasing function, using~\eqref{eq:bound-fn} we have that
  \begin{align*}
    f_n(t_{n, x}) + \frac{\log(1-\delta)}{\delta}\delta_n(1 - g_n(t_{n, x})) < &\ f_n(t_{n, x}^{(-\delta_n)})\\
    \leq &\ f_n(t_{n, x}^*)\\
    \leq &\ f_n(t_{n, x}^{(\delta_n)})
    < f_n(t_{n, x}) + \delta_n \left( 1 - g_n(t_{n, x}) \right).
  \end{align*}
  From the consistency of $f_n(t_{n, x})$, $g_n(t_{n, x})$ and $\delta_n$, the continuous mapping theorem implies that
  $f_n(t_{n, x}^*) \stackrel{\P}{\to} \xi(x) + 1$, and therefore $\hat\xi(x) \stackrel{\P}{\to}\xi(x)$.
  Consider now $\hat\sigma(x) \coloneqq \hat\xi(x)/t_{n, x}^*$ which can be bounded by
  \begin{align*}
    \frac{\hat\xi(x)\hat{Q}_x(\tau_n)}{1 + \delta_n}=\frac{\hat\xi(x)}{t_{n, x}^{(\delta_n)}}< \hat\sigma(x) < \frac{\hat\xi(x)}{t_{n, x}^{(-\delta_n)}}
    =\frac{\hat\xi(x)\hat{Q}_x(\tau_n)}{1 - \delta_n}.
  \end{align*}
  Therefore, from the consistency of $\hat\xi(x)$ and the consistency of $\hat{Q}_x(\tau_n)$ from Assumption~2, %
  we have that
  \begin{align*}
    \frac{1}{1 + \delta_n}\frac{\hat\xi(x)\hat{Q}_x(\tau_n)}{\xi(x)Q_x(\tau_n)}
    < \frac{\hat\sigma(x)}{\xi(x)Q_x(\tau_n)} 
    < \frac{1}{1 - \delta_n}\frac{\hat\xi(x)\hat{Q}_x(\tau_n)}{\xi(x)Q_x(\tau_n)},
  \end{align*}
  which implies that $\hat\sigma(x)/(\xi(x)Q_x(\tau_n)) \stackrel{\P}{\to} 1$.
  \end{proof}

  \subsection{Proof of Theorem~2}\label{proof:consistency-pen}
  \begin{proof}
    Fix the predictor value $x \in \mathcal{X}$.
    The first order conditions of the penalized log-likelihood--see~(3.9)--%
    can be simplified to
    \begin{align}\label{eq:focs-pen}
      \begin{split}
        \sum_{i = 1}^{n} &
        \tilde w_{i}(x) \1\left\{ Z_i > 0 \right\}\log\left(1 + \frac{\xi}{\sigma}Z_i\right) = \xi + 2\lambda_n\frac{\xi^2 (\xi - \hat\xi_n)}{T_n(x)},\\
        \sum_{i = 1}^{n} & \tilde w_{i}(x) \1\left\{ Z_i > 0 \right\}\frac{1}{1 + \xi/\sigma Z_i} = \frac{1}{\xi + 1},
      \end{split}
    \end{align}
    where $Z_i$ and $w_i(x)$ are defined in~\eqref{eq:zi-wi} and 
    \begin{align*}
       T_n(x) \coloneqq \frac{n}{k}\sum_{i = 1}^{n} w_i(x) \1\{Z_i > 0\}.
    \end{align*}
    The bivariate search for zeros over $\Theta$ in~\eqref{eq:focs-pen} can be cast to a univariate search using the parametrization $t = \xi/\sigma$ proposed by~\citet{davison1984}.
    Define the functions $t \mapsto f_n(t)$ and $t \mapsto g_n(t)$ as in~\eqref{eq:fn-def}, and let $h_n(t) \coloneqq f_n(t)g_n(t) - 1$ where $t > 0$. Then, following~\citet{grimshaw1993}, we solve the equations in~\eqref{eq:focs-pen} as follows.
    \begin{enumerate}
      \item Find a $t_{n, x}^*$ satisfying 
      \begin{align}\label{eq:h-pen}
        h_n(t_{n, x}^*) = 2\lambda_n \frac{\xi^2 (\xi - \hat\xi_n)}{(1 + \xi)T_n(x)};
      \end{align}
      \item\label{item:2-pen} Define the estimator of the shape parameter $\hat\xi_{\mathrm{pen}}(x) \coloneqq 1 / g_n(t_{n, x}^*) - 1$;
      \item\label{item:3-pen} Define the estimator of the scale parameter $\hat\sigma_\mathrm{pen}(x) = \hat\xi_{\mathrm{pen}}(x) / t_{n, x}^*$. 
    \end{enumerate}
    Let $t_{n, x}$ denote the approximate solution defined in~\eqref{eq:approx-sol}, and for any $\delta \in (0, 1/2)$ let $t_{n, x}^{(\delta)}$ and $t_{n, x}^{(-\delta)}$ denote the perturbed solutions defined in~\eqref{eq:perturbed-sol}.

    Fix $\delta \in (0, 1/2)$ such that $D_1(\delta) < 0$ and $D_2(\delta) > 0$, where $D_1(\delta)$ and $D_2(\delta)$ are defined in~\eqref{eq:d1} and~\eqref{eq:d2}.
    
    By assumption, $\lambda_n = o(1)$ and $\hat\xi = \mathcal{O}_\P(1)$. Moreover, from Lemma~\ref{lem:denom} it holds that $T_n(x) = 1 + o_\P(1)$.
    Therefore, there exists a sequence $\varepsilon_n > 0$ satisfying as $n \to \infty$
    \begin{align}
      \P\left( \left|2\lambda_n \frac{\xi^2 (\xi - \hat\xi)}{(1 + \xi)T_n(x)} \right| \leq \varepsilon_n \right) \to 1.
    \end{align}
    Define the sequence,
    \begin{align}\label{eq:def-dn-pen}
      \delta_n \coloneqq \left( |f_n(t_{n, x})g_n(t_{n, x}) - 1| + \varepsilon_n \right) \max\left\{ -\frac{1}{D_{1, n}}, \frac{1}{D_{2, n}}\right\},
    \end{align}
    where $D_{1, n} \stackrel{\P}{\to} D_1(\delta) < 0$ and $D_{2, n} \stackrel{\P}{\to} D_2(\delta) > 0$ are defined in~\eqref{eq:d1n-d2n}, and note that $\delta_n \stackrel{\P}{\to} 0$.
    We now show that the probability of finding a $t$ in the interval $[t_{n, x}^{(-\delta_n)}, t_{n, x}^{(\delta_n)}]$ satisfying~\eqref{eq:h-pen} converges to 1, i.e., as $n \to \infty$
  \begin{align}\label{eq:def-existence-pen}
    \P\left(\left\{\text{there exists some}\ t \in [t_{n, x}^{(-\delta_n)}, t_{n, x}^{(\delta_n)}]\ \text{s.t.~\eqref{eq:h-pen} is satisfied} \right\} \right) \to 1.
  \end{align}
  Define the event 
  \begin{align}\label{eq:event-pen}
    E_n \coloneqq \left\{ D_{1, n} < 0,\ D_{2, n} > 0, \left|2\lambda_n \frac{\xi^2 (\xi - \hat\xi)}{(1 + \xi)T_n(x)} \right| \leq \varepsilon_n, \ 0 < \delta_n < \delta \right\},
  \end{align} and note that $P(E_n) \to 1$ as $n \to \infty$.
  On this event, from~\eqref{eq:h-bound} and~\eqref{eq:def-dn-pen}, we have that
  \begin{align*}
    h_n(t_{n, x}^{(\delta_n)})
    <&\ \left( f_n(t_{n, x})g_n(t_{n, x}) - 1 \right) - |f_n(t_{n, x})g_n(t_{n, x}) - 1| - \varepsilon_n \leq -\varepsilon_n,
    \\
    h_n(t_{n, x}^{(-\delta_n)})
    >&\ \left( f_n(t_{n, x})g_n(t_{n, x}) -1 \right) + |f_n(t_{n, x})g_n(t_{n, x}) - 1| + \varepsilon_n \geq \varepsilon_n.
  \end{align*}
  Therefore, on the event $E_n$ there exists a solution $t_{n, x}^*$ lying in the interval $[t_{n, x}^{(-\delta_n)}, t_{n, x}^{(\delta_n)}]$ that satisfies~\eqref{eq:h-pen}, and therefore~\eqref{eq:def-existence-pen} holds.

  We now show that the estimators $\hat\xi_{\mathrm{pen}}(x) \coloneqq 1 / g_n(t_{n, x}^*) - 1$ and $\hat\sigma_\mathrm{pen}(x)\coloneqq \hat\xi_{\mathrm{pen}}(x) / t_{n, x}^*$ are consistent.
  Define the event $\tilde{E}_n \coloneqq E_n \cap \{\hat{Q}_x(\tau_n) > 0\}$ and notice that $\P(\tilde{E}_n) \to 1$ as $n \to \infty$ by Assumption~2 in the main text %
  and Lemma~\ref{lem:flucts-1}. Furthermore, on the event $\tilde{E}_n$, if $Z_i > 0$ we have that $a_i \coloneqq Z_i / \hat{Q}_x(\tau_n) > 0$ for all $i = 1, \dots, n$.
  Therefore, it follows that
  \begin{align}\label{eq:bound-gn-pen-1}
  \begin{split}
    g_n(t_{n, x}^{(\delta_n)}) - g_n(t_{n, x})
    = &\ \sum_{i = 1}^{n}  \tilde w_{i}(x)\1\left\{ Z_i > 0\right\}\left(  \frac{1}{1 +  a_i(1 + \delta_n)} - \frac{1}{1 + a_i} \right)
    \\
    = &\ \sum_{i = 1}^{n}  \tilde w_{i}(x)\1\left\{ Z_i > 0\right\}\frac{ - \delta_n a_i}{\left( 1 +  a_i(1 + \delta_n) \right)(1 + a_i)} > - \delta_n,
  \end{split}
  \end{align}
  and
  \begin{align}\label{eq:bound-gn-pen-2}
  \begin{split}
    g_n(t_{n, x}^{(-\delta_n)}) - g_n(t_{n, x})
    = &\ \sum_{i = 1}^{n}  \tilde w_{i}(x)\1\left\{ Z_i > 0\right\}\left(  \frac{1}{1 +  a_i(1 - \delta_n)} - \frac{1}{1 + a_i} \right)\\
    = &\ \sum_{i = 1}^{n}  \tilde w_{i}(x)\1\left\{ Z_i > 0\right\}\frac{\delta_n a_i}{\left( 1 +  a_i(1 - \delta_n) \right)(1 + a_i)} < \delta_n.
  \end{split}
  \end{align}
  Since $t\mapsto g_n(t)$ is a decreasing function, using~\eqref{eq:bound-gn-pen-1} and~\eqref{eq:bound-gn-pen-2}, on the event $\tilde{E}_n$ we have that
  \begin{align*}
    g_n(t_{n, x}) - \delta_n < &\ g_n(t_{n, x}^{(\delta_n)})
    \leq  g_n(t_{n, x}^*)
    \leq g_n(t_{n, x}^{(-\delta_n)})
    < g_n(t_{n, x}) + \delta_n.
  \end{align*}
  From the consistency of $g_n(t_{n, x})$ in Proposition~\ref{prop:gn} and the fact that $\delta_n \stackrel{\P}{\to} 0$,
  the continuous mapping theorem implies that 
  $g_n(t_{n, x}^*) \stackrel{\P}{\to} (1 + \xi(x))^{-1}$, and therefore $\hat\xi_{\mathrm{pen}}(x) \stackrel{\P}{\to}\xi(x)$.
  Consider now $\hat\sigma_{\mathrm{pen}}(x) \coloneqq \hat\xi_{\mathrm{pen}}(x)/t_{n, x}^*$ which can be bounded by 
  \begin{align*}
    \frac{\hat\xi_{\mathrm{pen}}(x)\hat{Q}_x(\tau_n)}{1 + \delta_n}
    =
    \frac{\hat\xi_{\mathrm{pen}}(x)}{t_{n, x}^{(\delta_n)}}
    < 
    \hat\sigma_{\mathrm{pen}}(x) 
    < 
    \frac{\hat\xi_{\mathrm{pen}}(x)}{t_{n, x}^{(-\delta_n)}}
    =
    \frac{\hat\xi_{\mathrm{pen}}(x)\hat{Q}_x(\tau_n)}{1 - \delta_n}.
  \end{align*}
  Therefore, from the consistency of $\hat\xi_{\mathrm{pen}}(x)$ and the consistency of $\hat{Q}_x(\tau_n)$ from Assumption~2 %
  in the main text, we have that
  \begin{align*}
    \frac{1}{1 + \delta_n}\frac{\hat\xi_{\mathrm{pen}}(x)\hat{Q}_x(\tau_n)}{\xi(x)Q_x(\tau_n)}
    < \frac{\hat\sigma_{\mathrm{pen}}(x)}{\xi(x)Q_x(\tau_n)} 
    < \frac{1}{1 - \delta_n}\frac{\hat\xi_{\mathrm{pen}}(x)\hat{Q}_x(\tau_n)}{\xi(x)Q_x(\tau_n)},
  \end{align*}
  which implies that $\hat\sigma_{\mathrm{pen}}(x)/(\xi(x)Q_x(\tau_n)) \stackrel{\P}{\to} 1$.
  \end{proof}

  \subsection{Proof of Corollary~1}\label{proof:cor-hill}
  \begin{proof}
    Fix $x \in \mathcal{X}$. By~\eqref{eq:fn}, we have that $\hat \xi_H(x) = T_2(x)(f_n(t_{n, x}) - 1)$.
    By Proposition~\ref{prop:hill}, it holds that $f_n(t_{n, x}) - 1 \stackrel{\P}{\to} \xi(x)$.
    By Lemma~\ref{lem:denom}, it holds that $T_2(x) \stackrel{\P}{\to} 1$.
    Therefore, by the continuous mapping theorem, it holds that $\hat \xi_H(x) \stackrel{\P}{\to} \xi(x)$.
  \end{proof}

  \subsection{Main results}

  \begin{proposition}[Local Hill estimator]\label{prop:hill}
    Define the approximate solution $t_{n, x} \coloneqq \xi(x) / (\xi(x) \hat{Q}_x(\tau_n))$.
    Then, it holds that
    \begin{align*}
      f_n(t_{n, x}) - 1
      \stackrel{\P}{\to} \xi(x).
    \end{align*}
  \end{proposition}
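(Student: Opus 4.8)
The plan is to recognize $f_n(t_{n,x})-1$ as a localized, predictor-dependent Hill estimator and to establish its consistency by a bias--variance decomposition, reusing Assumption~\ref{ass:doa} for the bias and the honesty and variance properties of the forest for the stochastic fluctuation. Writing $m=n/k=(1-\tau_n)^{-1}$, I would first record the identity
\begin{align*}
  f_n(t_{n,x}) - 1 = \frac{N_n}{T_n(x)}, \qquad N_n \coloneqq \frac{n}{k}\sum_{i=1}^n w_i(x)\,\1\{Z_i>0\}\,\log\!\Big(1+\tfrac{Z_i}{\hat Q_x(\tau_n)}\Big),
\end{align*}
with $T_n(x)=\tfrac nk\sum_i w_i(x)\1\{Z_i>0\}$. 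Since $\sum_i w_i(x)=1$ and, by Lemma~\ref{lem:denom}, $T_n(x)\stackrel{\P}{\to}1$, it suffices to prove $N_n\stackrel{\P}{\to}\xi(x)$; the claim then follows from the continuous mapping theorem, and Corollary~\ref{cor:hill} is immediate because $N_n=\hat\xi_H(x)$.

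The next step is to remove the estimated thresholds. Using Assumption~\ref{ass:hatq_x-uniform-consistent}, which gives $\sup_{x}|\hat Q_x(\tau_n)/Q_x(\tau_n)-1|=o_\P(1)$, I would replace $\hat Q_{X_i}(\tau_n)$ and $\hat Q_x(\tau_n)$ inside both the exceedance indicator and the logarithm by the population quantiles $Q_{X_i}(\tau_n)$ and $Q_x(\tau_n)$, controlling the resulting perturbation uniformly (this is where a fluctuation bound such as Lemma~\ref{lem:flucts-1} enters). After this reduction it remains to analyze
\begin{align*}
  \tilde N_n \coloneqq \frac nk\sum_{i=1}^n w_i(x)\,\1\{Y_i>Q_{X_i}(\tau_n)\}\log\!\Big(1+\tfrac{Y_i-Q_{X_i}(\tau_n)}{Q_x(\tau_n)}\Big),
\end{align*}
whose summands now depend on the data only through $(X_i,Y_i)$ and on the fixed deterministic sequence $Q_\cdot(\tau_n)$.

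For the bias I would condition on the covariates and on the leaf structure; by honesty the partition is built without the prediction-fold responses, so conditionally on the covariates and the splitting data each weight $w_i(x)$ is independent of the corresponding $Y_i$, and the conditional mean of a summand with $X_i=x'$ equals $\E[\log(1+(Y-Q_{x'}(\tau_n))/Q_x(\tau_n))\mid Y>Q_{x'}(\tau_n),X=x']$ (the factor $n/k$ cancels $\P(Y>Q_{x'}(\tau_n)\mid X=x')=1-\tau_n$). Splitting $\log(1+(Y-Q_{x'})/Q_x)=\log(Y/Q_x)+o(1)$ and $\log(Y/Q_x)=\log(Y/Q_{x'})+\log(Q_{x'}/Q_x)$, the regular-variation representations~\eqref{eq:reg-var} and~\eqref{eq:slowly-karamata} yield $\E[\log(Y/Q_{x'}(\tau_n))\mid Y>Q_{x'}(\tau_n),X=x']\to\xi(x')$, while the Lipschitz conditions of Assumption~\ref{ass:lipschitz} bound the offset $\log(Q_{x'}(\tau_n)/Q_x(\tau_n))$ and the gap $|\xi(x')-\xi(x)|$ by $O(\norm{x'-x}\log m)$ and $O(\norm{x'-x})$, respectively. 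Since the forest weight mass concentrates on a leaf of vanishing diameter (Corollary~\ref{cor:diam}, $\diam=O(s^{-0.5C_3})$), averaging over the weights drives the conditional mean to $\xi(x)$. The main obstacle is precisely this step: the localization radius $\norm{x'-x}$ must beat the extreme-level factor $\log m=(1-\beta_k)\log n$, so one must combine $s^{-0.5C_3}\log n=n^{-0.5C_3\beta_s}\log n\to0$ with the vanishing of the slowly varying remainder ($\tilde\alpha_x(t)\to0$ uniformly) as $m\to\infty$; the pre-asymptotic, localization, and threshold errors all have to be shown to vanish simultaneously under the regime $s=n^{\beta_s}$, $k=n^{\beta_k}$, $\beta_s<\beta_k$.

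Finally I would control the stochastic fluctuation $\tilde N_n-\E[\tilde N_n\mid\text{leaves},X]$. Here honesty again guarantees that, conditional on the partition and the covariates, the exceedance contributions are independent across the prediction sample, and Lemma~\ref{lem:varforest} bounds the forest variance by $s/n$ times the single-tree variance. Because the effective number of exceedances is of order $k\to\infty$ with $s/k\to0$ (Remark~\ref{rem:consistency}(i)), the variance of $N_n$ is $o(1)$; combined with the bias computation this gives $N_n\stackrel{\P}{\to}\xi(x)$, and dividing by $T_n(x)\stackrel{\P}{\to}1$ completes the argument.
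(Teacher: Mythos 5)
Your proposal is correct and follows essentially the same route as the paper's proof: the same ratio decomposition $f_n(t_{n,x})-1 = T_1(x)/T_2(x)$ with Lemma~\ref{lem:denom} handling the denominator, the same replacement of estimated thresholds by population quantiles via Assumption~\ref{ass:hatq_x-uniform-consistent} and the fluctuation bound of Lemma~\ref{lem:flucts-1}, and the same bias--variance treatment of the numerator combining honesty, the moment limits for the log-excesses, the Lipschitz/leaf-diameter localization (where $s^{-0.5C_3}$ must beat $C_n\asymp\log(n/k)$, cf.\ Corollary~\ref{cor:flucts-2}), and the forest variance bound of Lemma~\ref{lem:varforest} with $s/k\to 0$. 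The only difference is organizational: you carry out one combined conditional-mean computation, whereas the paper distributes the identical estimates across the successive replacement Lemmas~\ref{lem:f-2}, \ref{lem:f-4}--\ref{lem:f-7}.
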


  \begin{proof}
    Fix $\eta > 0$,
    and define the event
    \begin{align}\label{eq:event}
      A_{n}(\eta) \coloneqq \left\{ \left|\frac{\hat{Q}_x(\tau_n)}{Q_x(\tau_n)} - 1\right| <  \eta\ \text{and}\ Q_x(\tau_n) > 1,\ \text{for all}\ x \in \mathcal{X} \right\},
    \end{align}
    and note that by Assumption~2 in the main text %
    and Lemma~\ref{lem:flucts-1}
    it holds that $\P(A_{n}(\eta)) \to 1$ as $n \to \infty$.
    Rewrite
    \begin{align}\label{eq:fn}
      \begin{split}
        f_n(t_{n, x}) - 1
        = &\
        \sum_{i = 1}^{n}\tilde w_i(x) \1\{Z_i > 0\}\log\left(1 + t_{n, x} Z_i\right)\\
        = &\
        \sum_{i = 1}^{n}\tilde w_i(x) \1\{Z_i > 0\}\log\left(1 + Z_i / \hat{Q}_x(\tau_n)\right)\\
        = &
        \frac{\frac{n}{k}\sum_{i = 1}^{n}  w_i(x) \1\{Z_i > 0\}\log\left(1 + Z_i / \hat{Q}_x(\tau_n)\right)}{\frac{n}{k}\sum_{i = 1}^{n} w_i(x) \1\{Z_i > 0\}}\\
        \eqqcolon & \frac{T_1(x)}{T_2(x)}.
      \end{split}
    \end{align}
    By Lemma~\ref{lem:denom}, the denominator in~\eqref{eq:fn} is such that
    \begin{align*}
      T_{2}(x)\stackrel{\P}{\to} 1.
    \end{align*}
    Therefore, in the sequel, we study the behavior of the numerator.
    Consider a fixed tree $b = 1, \dots, B$, where $B \coloneqq {n\choose s}$ making predictions at a fixed point $x \in \mathcal{X}$ with weights defined as
    \begin{align*}
      w_{i, b}(x) \coloneqq & \ \frac{\1\left\{ X_i \in L_b(x), i \in \mathcal{I}_b \right\}}{|L_b(x)|}, \\
      |L_b(x)| \coloneqq         & \ \sum_{i = 1}^{n} \1\{X_i \in L_b(x), i \in \mathcal{I}_b\},
    \end{align*}
    where $L_b(x)$ denotes the estimated leaf containing $x$ in the tree $b$ and its size $|L_b(x)|$ is always non-zero by construction (see Section~\ref{para:forest}).
    By Lemma~\ref{lem:log-bound}, for all observations satisfying $Z_i > 0$ and $w_{i, b}(x) > 0$,
    on the event $A_n(\eta)$ it holds that
    \begin{align}
      \begin{split}
        \Bigg|\log\left( 1 + Z_i/ \hat{Q}_x(\tau_n) \right) - \log\left( \frac{Y_i}{Q_x(\tau_n)} \right)\Bigg|
        \leq &\ 
        C_n \norm{X_i - x}_2
        + \log\left(\frac{1+\eta}{(1 - \eta)^2}  \right),
      \end{split}
    \end{align}
    where $C_n > 0$ is the
    sequence defined in Lemma~\ref{lem:flucts-1}.
    Therefore, it holds that 
    \begin{align}
      \P   & \left( \left| T_1(x)  - \xi(x)\right| >  \varepsilon \right) \notag \\
      \leq & \ \P\left( \left| T_1(x)  - \xi(x) \right| >  \varepsilon, A_n(\eta) \right)
      + \P\left( A_n(\eta)^c \right) \notag\\
      \leq & \ \P\left( \left| \sum_{i = 1}^{n} w_i(x) \1\{Z_i > 0\}\frac{n}{k}\log\left(\frac{Y_i }{Q_x(\tau_n)}\right) - \xi(x) \right|  > \varepsilon/3,\ A_n(\eta)\right) \tag{$I$} \label{eq:t1}\\
      & + \P\left(  \sum_{i = 1}^{n} w_i(x) \1\{Z_i > 0\} \frac{n}{k}C_n \norm{X_i - x}_2  > \varepsilon/3,\ A_n(\eta)\right) \tag{$II$}\label{eq:t2}\\
      & + \P\left( \log\left(\frac{1+\eta}{(1 - \eta)^2}  \right) \sum_{i = 1}^{n} w_i(x) \1\{Z_i > 0\} \frac{n}{k}>\varepsilon/3,\ A_n(\eta)\right) \tag{$III$}\label{eq:t3}\\
      & + \P\left( A_n(\eta)^c\right). \notag
    \end{align}

    Consider term~\eqref{eq:t1}. Recall we have the stochastic representation $Y_i \stackrel{d}{=} Q_{X_i}(U_i)$,
    where $U_i$ are standard uniform random variables independent of $X_i$, for $i = 1, \dots, n$.
    We have that 
    \begin{align*}
      \P&\left( \left| \sum_{i = 1}^{n}w_i(x) \1\{Z_i > 0\}\frac{n}{k}\log\left(\frac{Y_i }{Q_x(\tau_n)}\right) - \xi(x) \right|  > \frac{\varepsilon}{3},\ A_n(\eta) \right)\\
      \leq &\ 
      \P\left( \left| \sum_{i = 1}^{n} w_i(x) \1\{U_i > \tau_n\}\frac{n}{k}\log\left(\frac{Q_x(U_i)}{Q_x(\tau_n)}\right) - \xi(x) \right|  > \frac{\varepsilon}{6},\ A_n (\eta)\right)\tag{$IV$}\label{eq:t4} \\
      &+ \P\left( \left| \sum_{i = 1}^{n}w_i(x) \frac{n}{k}
      \left\{\1\{Z_i > 0\}\log\left(\frac{Y_i }{Q_x(\tau_n)}\right)\right.\right.\right. \tag{$V$}\label{eq:t5}\\
      &\qquad\qquad\qquad\qquad\qquad - \left.\left.\left.  \1\{U_i > \tau_n\} \log\left(\frac{Q_x(U_i)}{Q_x(\tau_n)} \right) \right\}  \right|  > \frac{\varepsilon}{6}, A_n (\eta)\right).
    \end{align*}
    Using Lemma~\ref{lem:f-4}, it holds that~\eqref{eq:t4}$\to 0$ as $n\to \infty$.
    Using Lemma~\ref{lem:f-5}, it holds that~\eqref{eq:t5}$\to 0$ as $n\to \infty$.
    Therefore, it follows that~\eqref{eq:t1}$\to 0$ as $n \to \infty$.

    Consider term~\eqref{eq:t2}. Using Lemma~\ref{lem:f-2}, it holds that~\eqref{eq:t2}$\to 0$ as $n\to \infty$.

    Consider term~\eqref{eq:t3}.
    By Lemma~\ref{lem:denom} it holds that
    \begin{align*}
      \log & \left(\frac{1+\eta}{(1 - \eta)^2}  \right)  T_2(x) \stackrel{\P}{\to} \log\left( \frac{1+\eta}{(1-\eta)^2} \right).
    \end{align*}
    Since $\eta > 0$ is arbitrary, it follows that~\eqref{eq:t3}$\to 0$ as $n \to \infty$.
    Putting everything together, we have that $f_n(t_{n, x}) - 1 \stackrel{\P}{\to} \xi(x)$.

  \end{proof}

  \begin{proposition}[$g_n$ converges in probability]\label{prop:gn}
    Define the approximate solution $t_{n, x} \coloneqq \xi(x) / (\xi(x) \hat{Q}_x(\tau_n))$.
    Then, it holds that
    \begin{align*}
      g_n(t_{n, x})
      \stackrel{\P}{\to} \frac{1}{1+\xi(x)}.
    \end{align*}    
  \end{proposition}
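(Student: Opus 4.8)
The plan is to mirror the proof of Proposition~\ref{prop:hill}, exploiting that the summand $z\mapsto(1+z)^{-1}$ is bounded in $(0,1]$, which makes the concentration bounds only easier than in the unbounded $\log$ case. As in~\eqref{eq:fn} I would first split off the denominator,
\begin{align*}
  g_n(t_{n,x}) = \frac{\frac{n}{k}\sum_{i=1}^n w_i(x)\1\{Z_i>0\}\left(1 + Z_i/\hat Q_x(\tau_n)\right)^{-1}}{\frac{n}{k}\sum_{i=1}^n w_i(x)\1\{Z_i>0\}} \eqqcolon \frac{S_n(x)}{T_2(x)},
\end{align*}
and invoke Lemma~\ref{lem:denom} to get $T_2(x)\stackrel{\P}{\to}1$, so that it suffices to show $S_n(x)\stackrel{\P}{\to}(1+\xi(x))^{-1}$.

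For the numerator, the key observation is that $(1+Z_i/\hat Q_x(\tau_n))^{-1}=\exp\{-\log(1+Z_i/\hat Q_x(\tau_n))\}$ and $Q_x(\tau_n)/Y_i=\exp\{-\log(Y_i/Q_x(\tau_n))\}$, while $u\mapsto e^{-u}$ is $1$-Lipschitz on $[0,\infty)$. Hence on the event $A_n(\eta)$ of~\eqref{eq:event} the very bound established in Lemma~\ref{lem:log-bound} transfers verbatim to give, for every $i$ with $Z_i>0$ and $w_{i,b}(x)>0$,
\begin{align*}
  \left| \left(1+Z_i/\hat Q_x(\tau_n)\right)^{-1} - \frac{Q_x(\tau_n)}{Y_i} \right| \leq C_n \norm{X_i-x}_2 + \log\left(\frac{1+\eta}{(1-\eta)^2}\right),
\end{align*}
with $C_n$ as in Lemma~\ref{lem:flucts-1}. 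This lets me replace $S_n(x)$ by $\frac{n}{k}\sum_i w_i(x)\1\{Z_i>0\}(Y_i/Q_x(\tau_n))^{-1}$ up to a remainder that splits exactly as terms~\eqref{eq:t2} and~\eqref{eq:t3} did in Proposition~\ref{prop:hill}: the noise contribution $\frac{n}{k}\sum_i w_i(x)\1\{Z_i>0\}C_n\norm{X_i-x}_2$ vanishes by Lemma~\ref{lem:f-2}, and the $\log\{(1+\eta)/(1-\eta)^2\}$ contribution is $o_\P(1)$ as $\eta\downarrow0$ since $T_2(x)\stackrel{\P}{\to}1$.

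It then remains to identify the limit of $\frac{n}{k}\sum_i w_i(x)\1\{Z_i>0\}(Y_i/Q_x(\tau_n))^{-1}$. Using the stochastic representation $Y_i\stackrel{d}{=}Q_{X_i}(U_i)$ with $U_i$ uniform and independent of $X_i$, I would pass to the idealized sum $\frac{n}{k}\sum_i w_i(x)\1\{U_i>\tau_n\}(Q_x(U_i)/Q_x(\tau_n))^{-1}$, absorbing the difference into a term controlled exactly as term~\eqref{eq:t5} through Lemma~\ref{lem:f-5}. For the idealized sum the quantile representation~\eqref{eq:reg-var} gives, conditionally on $U_i>\tau_n$ and writing $R_i=(1-U_i)/(1-\tau_n)$,
\begin{align*}
  \left(\frac{Q_x(U_i)}{Q_x(\tau_n)}\right)^{-1} = R_i^{\,\xi(x)}\,\frac{\ell_x(1/(1-\tau_n))}{\ell_x(1/((1-\tau_n)R_i))},
\end{align*}
where $R_i$ is conditionally uniform on $(0,1)$ and the slowly varying ratio tends to $1$. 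Since $\int_0^1 r^{\xi(x)}\d r=(1+\xi(x))^{-1}$, the conditional mean of the integrand converges to $(1+\xi(x))^{-1}$, and the forest-weighted average concentrates on this value because the leaves shrink (Corollary~\ref{cor:diam}) and the forest variance is controlled (Lemma~\ref{lem:varforest}); this is precisely the analogue of Lemma~\ref{lem:f-4} for the integrand $r\mapsto r^{\xi(x)}$.

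The main obstacle is the same as in Proposition~\ref{prop:hill}: establishing the analogues of Lemmas~\ref{lem:f-4} and~\ref{lem:f-5} for the new integrand, namely (i) controlling the localization bias so that the slowly varying correction and the variation of $\xi(\cdot)$ and $\ell_\cdot$ over a leaf of vanishing diameter do not perturb the limit, and (ii) handling the mismatch between $\1\{Z_i>0\}$ and $\1\{U_i>\tau_n\}$, which relies on the uniform consistency of the intermediate quantile in Assumption~\ref{ass:hatq_x-uniform-consistent} together with the Lipschitz bounds of Lemma~\ref{lem:flucts-1}. The favourable feature here is boundedness: since both $r\mapsto r^{\xi(x)}$ and $(1+\cdot)^{-1}$ take values in $(0,1]$, the second-moment estimates underlying the forest-variance argument are uniform and require none of the truncation that the unbounded $\log$ integrand demanded in Proposition~\ref{prop:hill}.
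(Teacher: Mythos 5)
Your high-level architecture matches the paper's: peel off the denominator via Lemma~\ref{lem:denom}, compare the numerator to the idealized sum $\tfrac{n}{k}\sum_i w_i(x)\1\{U_i>\tau_n\}\,Q_x(\tau_n)/Q_x(U_i)$, and identify the limit through the regular-variation integral (this last step is the paper's Lemma~\ref{lem:g-1}, and your Potter-bound computation is equivalent to it). The gap is in the step you describe as a verbatim transfer of Lemma~\ref{lem:log-bound}. The $1$-Lipschitz property of $u\mapsto e^{-u}$ on $[0,\infty)$ does not apply: for an in-leaf observation, $Z_i>0$ only guarantees $Y_i>\hat Q_{X_i}(\tau_n)$, not $Y_i\geq Q_x(\tau_n)$, so $\log(Y_i/Q_x(\tau_n))$ can be negative, and the correct local Lipschitz factor is $\max\{1,\,Q_x(\tau_n)/Y_i\}$. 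This factor is \emph{not} bounded: an observation can lie in the leaf $L_b(x)$ while $X_i$ is far from $x$ (the bound~\eqref{eq:diam} excludes this only up to probability $C_2 s^{-0.5C_3}$ per tree), and then by~\eqref{eq:ratio-0} one can have $Q_x(\tau_n)/Y_i \approx Q_x(\tau_n)/Q_{X_i}(\tau_n)$ of order $(n/k)^{\xi(x)-\xi(X_i)}$, i.e.\ polynomially large in $n/k$. In other words, exponentiating the log-scale bound turns the additive term $C_n\norm{X_i-x}_2$ — harmless in Proposition~\ref{prop:hill} because $C_n\,\E[\diam(L_b(x))]\to 0$ — into a multiplicative factor $e^{C_n\norm{X_i-x}_2}$. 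In the Lemma~\ref{lem:f-2}-type expectation bound, the large-leaf event then contributes a term of order $n^{(1-\beta_k)(L_\xi+L_\alpha)\diam(\mathcal X)}\, s^{-0.5C_3}$ up to logarithms, which does not vanish for arbitrary $0<\beta_s<\beta_k<1$ as permitted by~\eqref{eq:n-k-s}; the paper's source in fact contains exactly this abandoned computation inside the proof of Lemma~\ref{lem:gn-1}. The same unbounded ratio $Q_x(\tau_n)/Y_i$ also infects your subsequent reduction to the idealized sum, which you propose to handle "as in Lemma~\ref{lem:f-5}".

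This is precisely why the paper never compares $(1+Z_i/\hat Q_x(\tau_n))^{-1}$ with $Q_x(\tau_n)/Y_i$ globally. Its proof first splits the numerator at radius $\delta_n\coloneqq C_1 s^{-0.51 C_3}$: the far part is never linearized at all — since the integrand is bounded by one, its contribution is at most the total far weight, which vanishes by Lemma~\ref{lem:gn-2} — while the close part is chained through quantities that are all bounded by roughly $(1-\eta)^{-1}$, namely $\hat Q_{X_i}(\tau_n)/Y_i$, $Q_{X_i}(\tau_n)/Y_i$ and $Q_x(\tau_n)/Q_x(U_i)$ (Lemmas~\ref{lem:g-2}--\ref{lem:g-5}), every ratio-linearization being legitimate there because $C_n\delta_n\to0$ by Corollary~\ref{cor:flucts-2}. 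If you insert this close/far split \emph{before} your exponential-Lipschitz step, applying that step only to observations with $\norm{X_i-x}_2\leq\delta_n$ (where the Lipschitz factor is $1+o(1)$) and disposing of the far observations by boundedness and Lemma~\ref{lem:gn-2}, your argument closes and becomes a clean repackaging of the paper's Lemma~\ref{lem:gn-1}; without it, the central estimate fails.
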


  \begin{proof}
    Rewrite
    \begin{align}\label{eq:gn}
      \begin{split}
        g_n(t_{n, x})
        = &\
        \sum_{i = 1}^{n}\tilde w_i(x) \1\{Z_i > 0\}\frac{1}{1 + t_{n, x} Z_i}\\
        = &\
        \sum_{i = 1}^{n}\tilde w_i(x) \1\{Z_i > 0\}\frac{1}{1 + Z_i / \hat{Q}_x(\tau_n)}\\
        = &
        \frac{\frac{n}{k}\sum_{i = 1}^{n}  w_i(x) \1\{Z_i > 0\}\frac{\hat{Q}_x(\tau_n)}{ \hat{Q}_x(\tau_n) + Z_i}}{\frac{n}{k}\sum_{i = 1}^{n} w_i(x) \1\{Z_i > 0\}}\\
        \eqqcolon & \frac{T_1(x)}{T_2(x)}.
      \end{split}
    \end{align}
    By Lemma~\ref{lem:denom}, the denominator in~\eqref{eq:fn} is such that
    \begin{align*}
      T_{2}(x)\stackrel{\P}{\to} 1.
    \end{align*}
    Therefore, in the sequel, we study the behavior of the numerator.
    We now split the numerator between those observations that are `close' to the predictor value $x \in \mathcal{X}$ and those that are not. Define $\delta_n \coloneqq C_1 s^{-0.51 C_3} \to 0$ as $n \to \infty$. 
    We rewrite
    \begin{align*}
      T_1(x)
       = &\ \sum_{i = 1}^{n} w_i(x) \1\{\norm{X_i - x}_2 \leq \delta_n \}\1\{Z_i > 0\}\frac{n}{k}\frac{1}{1 + Z_i/\hat{Q}_x(\tau_n)}\\
      &+ \sum_{i = 1}^{n} w_i(x) \1\{\norm{X_i - x}_2 > \delta_n \}\1\{Z_i > 0\}\frac{n}{k}\\
      \eqqcolon &\ \tilde{T}_1(x) + \tilde{T}_2(x),
    \end{align*}
    where $\tilde{T}_1(x) \stackrel{\P}{\to} (1 + \xi(x))^{-1}$ by Lemma~\ref{lem:gn-1} and $\tilde{T}_2(x) \stackrel{\P}{\to} 0$ by Lemma~\ref{lem:gn-2}.
    
  \end{proof}

  \begin{proposition}[$\tilde{g}_n$ converges in probability]\label{prop:gn-tilde}
    It holds that
    \begin{align*}
      \tilde{g}_n \coloneqq \sum_{i = 1}^{n} \tilde{w}_i(x) \1\{Z_i > 0\}\left( \frac{1}{1 + Z_i / \hat{Q}_x(\tau_n)} \right)^2 \stackrel{\P}{\to} \frac{1}{2\xi(x) + 1}.
    \end{align*}
  \end{proposition}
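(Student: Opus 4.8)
The plan is to mimic the proof of Proposition~\ref{prop:gn}, exploiting that $\tilde g_n$ differs from $g_n(t_{n,x})$ only in that the factor $1/(1+Z_i/\hat Q_x(\tau_n))$ is replaced by its square. First I would rewrite $\tilde g_n = T_1(x)/T_2(x)$ using the normalized weights in~\eqref{eq:zi-wi}, where
\[
T_1(x) = \frac{n}{k}\sum_{i=1}^n w_i(x)\1\{Z_i>0\}\left(\frac{\hat Q_x(\tau_n)}{\hat Q_x(\tau_n)+Z_i}\right)^2, \qquad T_2(x) = \frac{n}{k}\sum_{i=1}^n w_i(x)\1\{Z_i>0\}.
\]
By Lemma~\ref{lem:denom} the denominator satisfies $T_2(x)\stackrel{\P}{\to}1$, so it remains to show $T_1(x)\stackrel{\P}{\to}(2\xi(x)+1)^{-1}$.

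Next I would split $T_1(x)$ according to whether an observation is close to $x$, using the same threshold $\delta_n = C_1 s^{-0.51 C_3}\to0$ as in Proposition~\ref{prop:gn}:
\[
T_1(x) = \frac{n}{k}\sum_{i=1}^n w_i(x)\1\{Z_i>0\}\left(\frac{\hat Q_x(\tau_n)}{\hat Q_x(\tau_n)+Z_i}\right)^2\Big(\1\{\norm{X_i-x}_2\leq\delta_n\}+\1\{\norm{X_i-x}_2>\delta_n\}\Big).
\]
Since the squared factor is at most $1$, the contribution of the observations with $\norm{X_i-x}_2>\delta_n$ is bounded above by the far term $\tilde T_2(x)$ appearing in the proof of Proposition~\ref{prop:gn}, and hence vanishes in probability by Lemma~\ref{lem:gn-2}.

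For the close term I would establish an exact analog of Lemma~\ref{lem:gn-1}. Conditioning on the forest and using the stochastic representation $Y_i=Q_{X_i}(U_i)$, on the high-probability event $A_n(\eta)$ of~\eqref{eq:event} one replaces $\hat Q_x(\tau_n)$ by $Q_x(\tau_n)$ and controls the Lipschitz variation of $\xi(\cdot)$ across the leaf of diameter $O(\delta_n)$. The only change relative to $g_n$ is the target moment: writing $W=1+Z/Q_x(\tau_n)$, Assumption~\ref{ass:doa} makes the conditional law of $W$ given $Y>Q_x(\tau_n)$ asymptotically Pareto with index $1/\xi(x)$, whence
\[
\E\!\left[W^{-2}\mid Y>Q_x(\tau_n)\right]\longrightarrow\int_1^\infty w^{-2}\,\frac{1}{\xi(x)}w^{-1/\xi(x)-1}\,\d w=\frac{1}{2\xi(x)+1}.
\]
Combining this computation with the concentration argument of Lemma~\ref{lem:gn-1}, which relies on the forest variance bound of Lemma~\ref{lem:varforest}, shows that the close term converges in probability to $(2\xi(x)+1)^{-1}$; together with the vanishing far term and $T_2(x)\to1$ this yields the claim.

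The main obstacle I expect is the close-term concentration, exactly as in Lemma~\ref{lem:gn-1}: one must simultaneously absorb the pre-asymptotic bias of the slowly varying function $\tilde\ell_x$ (using that $\tilde\alpha_x(t)\to0$ uniformly from Assumption~\ref{ass:lipschitz}), the bias from the Lipschitz variation of $\xi(\cdot)$ and of the slowly varying parameters over the leaf, and the error from replacing $\hat Q_x(\tau_n)$ by $Q_x(\tau_n)$, all while keeping the variance of the weighted sum under control through Lemma~\ref{lem:varforest}. Since the integrand $w\mapsto w^{-2}$ is bounded, monotone and continuous just like $w\mapsto w^{-1}$, no genuinely new difficulty beyond the $g_n$ case arises, and the same bias--variance decomposition carries through with the limit $1/(1+\xi(x))$ replaced by $1/(2\xi(x)+1)$.
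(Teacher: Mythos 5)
Your proposal is correct and is exactly the route the paper intends: the paper's own ``proof'' of Proposition~\ref{prop:gn-tilde} consists of the single remark that it is similar to Proposition~\ref{prop:gn} and is omitted, and your adaptation (denominator via Lemma~\ref{lem:denom}, far/near split with the far term dominated by the unsquared term of Lemma~\ref{lem:gn-2}, and the Lemma~\ref{lem:gn-1}-type argument with the limiting moment $\int_1^\infty w^{-2}\,\xi(x)^{-1}w^{-1/\xi(x)-1}\,\d w = (2\xi(x)+1)^{-1}$ replacing $(1+\xi(x))^{-1}$) is precisely the omitted argument. Your closing observation that $w\mapsto w^{-2}$ enjoys the same boundedness, monotonicity and Lipschitz-type properties on $[1,\infty)$ as $w\mapsto w^{-1}$ (e.g.\ $|a^2-b^2|\leq 2|a-b|$ for $a,b\in[0,1]$) is the right justification that all error terms carry over unchanged.
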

  \begin{proof}    
    The proof is similar to the proof of Proposition~\ref{prop:gn}, and we therefore omit it.
  \end{proof}

  \begin{lemma}[Denominator converges to one]\label{lem:denom}
    Let $T_2(x) = \frac{n}{k}\sum_{i = 1}^{n}w_i(x) \1\{ Y_i > \hat{Q}_{X_i}(\tau_n) \}$.
    Then, it holds that
    \begin{align*}
      T_2(x)\stackrel{\P}{\to} 1.
    \end{align*}

  \end{lemma}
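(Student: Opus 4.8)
The plan is to write the statistic as $T_2(x) = \frac{n}{k}\sum_{i=1}^n w_i(x)\1\{Y_i > \hat{Q}_{X_i}(\tau_n)\}$ and to exploit two structural facts: the forest weights are normalized, $\sum_{i=1}^n w_i(x) = 1$, and each summand is an exceedance over (an estimate of) the conditional $\tau_n$-quantile, so that conditionally it occurs with probability close to $1-\tau_n = k/n$. The prefactor $n/k$ then rescales the weighted count to order one. I would sandwich $T_2(x)$ between two cleaner statistics built from deterministic thresholds, establish convergence in probability of each via a mean-plus-variance (Chebyshev) argument, and let the perturbation vanish. This mirrors the structure of the proofs of Propositions~\ref{prop:hill} and~\ref{prop:gn}, with the function $z\mapsto 1/(1+z)$ there replaced by the constant $1$.

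First I would pass from the estimated to the true conditional quantile. On the event $A_n(\eta)$ from the proof of Proposition~\ref{prop:hill}, Assumption~\ref{ass:hatq_x-uniform-consistent} gives $(1-\eta)Q_{X_i}(\tau_n) \le \hat{Q}_{X_i}(\tau_n) \le (1+\eta)Q_{X_i}(\tau_n)$ simultaneously for all $i$, so that, with $T_2^{\pm}(x) \coloneqq \frac{n}{k}\sum_{i=1}^n w_i(x)\1\{Y_i > (1\pm\eta)Q_{X_i}(\tau_n)\}$, one has $T_2^{+}(x)\le T_2(x)\le T_2^{-}(x)$. Using the stochastic representation $Y_i \stackrel{d}{=} Q_{X_i}(U_i)$ with $U_i$ uniform and independent of $X_i$ and the monotonicity of $Q_{X_i}$, each indicator in $T_2^{\pm}(x)$ becomes $\1\{U_i > F_{X_i}((1\pm\eta)Q_{X_i}(\tau_n))\}$, so its conditional probability given $X_i$ is $1-F_{X_i}((1\pm\eta)Q_{X_i}(\tau_n))$. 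By the regular variation in~\eqref{eq:reg-var}, made uniform in $x$ through Assumption~\ref{ass:lipschitz}, this equals $(1\pm\eta)^{-1/\xi(X_i)}(k/n)(1+o(1))$ uniformly in $i$. Hence the conditional mean of $T_2^{\pm}(x)$ is $\sum_i w_i(x)(1\pm\eta)^{-1/\xi(X_i)}(1+o(1))$; since the leaf diameters vanish by Corollary~\ref{cor:diam} and $\xi$ is Lipschitz, I may replace $\xi(X_i)$ by $\xi(x)$ on the support of the weights, and $\sum_i w_i(x)=1$ gives mean $\to (1\pm\eta)^{-1/\xi(x)}$.

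For the variance of $T_2^{\pm}(x)$ I would invoke Lemma~\ref{lem:varforest}, bounding the forest variance by $s/n$ times the variance of a single tree. For one tree the prediction averages $\frac{n}{k}\1\{U_i > \tau_n^{\pm}\}$ (with $\tau_n^{\pm}$ the relevant level) over the $\kappa$ to $2\kappa-1$ prediction points in the leaf $L_b(x)$; the inner count is a Binomial with per-observation variance of order $k/n$, so the tree variance is of order $n/(k\kappa)$. Multiplying by $s/n$ yields $\V[T_2^{\pm}(x)] = O(s/(k\kappa))$, which vanishes because $s/k = n^{\beta_s-\beta_k}\to 0$ by~\eqref{eq:n-k-s} and $\beta_s<\beta_k$. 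This is exactly where the strengthened condition $s/k\to 0$ from Remark~\ref{rem:consistency}(i) enters. Chebyshev's inequality then gives $T_2^{\pm}(x)\stackrel{\P}{\to}(1\pm\eta)^{-1/\xi(x)}$; combined with the sandwich on $A_n(\eta)$ (whose probability tends to one) this forces $T_2(x)$ into $[(1+\eta)^{-1/\xi(x)}-\varepsilon,\,(1-\eta)^{-1/\xi(x)}+\varepsilon]$ with probability tending to one, and letting $\eta\downarrow 0$ yields $T_2(x)\stackrel{\P}{\to}1$.

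The main obstacle is the careful bookkeeping needed to make the conditional-expectation heuristics rigorous: the weights $w_i(x)$ are themselves random and data-dependent, so computing $\E[w_i(x)\1\{U_i>\tau_n^{\pm}\}]$ relies on the honesty of the forest (the leaf containing $x$ is built from split samples disjoint from the prediction sample carrying $U_i$), and the regular-variation approximation $(1\pm\eta)^{-1/\xi(X_i)}$ must hold uniformly over the random, shrinking support of the weights. Both difficulties are precisely those already controlled by the leaf-diameter bound (Corollary~\ref{cor:diam}) and the fluctuation estimates (e.g.\ Lemma~\ref{lem:flucts-1}) developed for Propositions~\ref{prop:hill} and~\ref{prop:gn}, so the argument can be carried out along the same lines as the proof of Proposition~\ref{prop:gn}.
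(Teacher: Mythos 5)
Your proposal is correct and follows essentially the same route as the paper's proof: the sandwich $\1\{Y_i > Q_{X_i}(\tau_n)(1+\eta)\} \le \1\{Y_i > \hat{Q}_{X_i}(\tau_n)\} \le \1\{Y_i > Q_{X_i}(\tau_n)(1-\eta)\}$ on the event $A_n(\eta)$, honesty plus the normalization $\sum_i w_{i,b}(x)=1$ to control the tree-level mean, Lemma~\ref{lem:varforest} together with $s/k\to 0$ to kill the forest variance, and Chebyshev to conclude. The only differences are cosmetic: the paper bounds the mean uniformly by $(1\pm\eta)^{-1/\xi_{-}-\zeta}$ via Lemma~\ref{lem:reg-var-1} rather than localizing $\xi(X_i)\approx\xi(x)$ through the leaf diameter, and it does not need your $1/\kappa$ refinement of the tree variance.
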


  \begin{proof}
    Fix $\varepsilon, \eta > 0$. We can write
    \begin{align*}
      \P   \left( \left|T_2(x) - 1\right| > \varepsilon\right)                                                                     
      \leq & \ \P\left( \left|T_2(x) - 1\right| > \varepsilon,\ A_{n}(\eta)\right) + \P\left(A_{n}(\eta)^c  \right),
    \end{align*}
    where $A_n(\eta)$ is the event defined in~\eqref{eq:event}. 
    We want to show that $\E[T_2(x)] \to 1$ and $\V[T_2(x)] \to 0$ as $n \to \infty$ when $A_{n}(\eta)$ holds.
    On the event $A_{n}(\eta)$, we have that
    \begin{align*}
      \1\{ Y_i > {Q}_{X_i}(\tau_n) (1 + \eta)\} < \1\{ Y_i > \hat{Q}_{X_i}(\tau_n) \} < \1\{ Y_i > {Q}_{X_i}(\tau_n) (1 - \eta)\}.
    \end{align*}
    Fix a tree $b = 1, \dots, {n\choose s}$ and define
    \begin{align}\label{eq:t2-tree}
      T_{2, b}(x) \coloneqq  \sum_{i = 1}^{n}w_{i,b}(x) \frac{n}{k} \1\{ Y_i > \hat{Q}_{X_i}(\tau_n) \}.
    \end{align}

    We now consider the expectation of $T_{2, b}(x)$. Fix $\zeta > 0$.
    On the event $A_n(\eta)$, using Lemma~\ref{lem:reg-var-1}, for $n$ large enough we have 
    \begin{align*}
      (1 + \eta)^{-1/\xi_{-}-\zeta}
      < &\ \E\left[T_{2, b}(x)  \right]\\
      < &\ \E\left[ \sum_{i = 1}^{n}w_{i,b}(x) \frac{n}{k} \1\{ Y_i > {Q}_{X_i}(\tau_n)(1 - \eta) \}  \right]\\
      = &\ \sum_{i=1}^{n}\E\left[ w_{i, b}(x) \frac{n}{k}\E\left[ \1\{ Y_i > {Q}_{X_i}(\tau_n)(1 - \eta) \} \mid X_i\right]  \right]\label{eq:step-honesty}\numberthis\\
      \leq &\ (1 - \eta)^{-1/\xi_{-}-\zeta},\label{eq:step-add-one}\numberthis,
    \end{align*}
    where in~\eqref{eq:step-honesty} we used that honesty implies that $Y_i \ind w_{i, b}(x)$ conditional on $X_i$, and in~\eqref{eq:step-add-one} we used that the weights $w_{i, b}(x)$ add up to one.
    The expectation of the forest $T_2(x)$ is equal to the expectation of a single tree $T_{2, b}(x)$. 
    Since $\eta, \zeta > 0$ are arbitrary, it follows that $\E[T_2(x)] \to 1$.

    We now consider the variance of $T_{2, b}(x)$.
    Fix $\zeta > 0$.
    On the event $A_n(\eta)$, using Lemma~\ref{lem:reg-var-1}, for $n$ large enough we have
    \begin{align*}
      \V&\left[ T_{2, b}(x) \right]
      \leq  \E\left[ T_{2, b}(x)^2 \right]\numberthis\label{eq:var-tb}\\
      < &\ \E\left[ \left( \sum_{i = 1}^{n}w_{i,b}(x) \frac{n}{k} \1\{ Y_i > {Q}_{X_i}(\tau_n)(1 - \eta) \}  \right)^2 \right]\\
      = &\ \sum_{i = 1}^{n} \E\left[ w_{i, b}(x)^2 \left( \frac{n}{k} \right)^2 \P\left( Y_i > Q_{X_i}(\tau_n)(1 - \eta) \mid X_i \right) \right]\\
      &+ \sum_{i\neq j}\E\left[ w_{i,b}(x)w_{j,b}(x)
      \left( \frac{n}{k} \right)^2
      \P\left( Y_i > Q_{X_i}(\tau_n)(1 - \eta) \mid X_i\right)
      \P\left( Y_j > Q_{X_j}(\tau_n)(1 - \eta) \mid X_j\right)
      \right]\\
      \leq &\ \left( \frac{n}{k}(1 - \eta)^{-1/\xi_{-}-\zeta} + (1 - \eta)^{-2/\xi_{-}-2\zeta} \right).
    \end{align*}
    Using Lemma~\ref{lem:varforest}, the variance of the forest is at most $s/n$ the variance of a tree. Therefore, using~(3.6), %
    we have that
    \begin{align*}
      \V\left[ T_2(x) \right] \leq \frac{s}{n} \V\left[ T_{2, b}(x) \right] \leq  \left( \frac{s}{k} (1 - \eta)^{-1/\xi_{-}-\zeta} + \frac{s}{n} (1 - \eta)^{-2/\xi_{-}-2\zeta} \right) \to 0,
    \end{align*}
    as $n \to \infty$.
  \end{proof}

  \begin{lemma}[Term~(\ref{eq:t2}) of $f_n$]\label{lem:f-2}
    It holds that
    \begin{align*}
      \sum_{i = 1}^{n} w_i(x) \1\{Z_i > 0\} \frac{n}{k}C_n \norm{X_i - x}_2 \stackrel{\P}{\to} 0.
    \end{align*}
  \end{lemma}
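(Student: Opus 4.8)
The plan is to bound the expectation of the quantity in the statement—after truncating to the high-probability event $A_n(\eta)$ from~\eqref{eq:event}—and then conclude by Markov's inequality, exactly in the spirit of the variance computation in Lemma~\ref{lem:denom}. Write $W_n$ for the sum in the statement. For fixed $\eta\in(0,1/2)$ and $\varepsilon>0$ I would split
\begin{align*}
  \P(W_n > \varepsilon) \leq \P(W_n > \varepsilon,\ A_n(\eta)) + \P(A_n(\eta)^c),
\end{align*}
where $\P(A_n(\eta)^c)\to 0$ by Assumption~\ref{ass:hatq_x-uniform-consistent} together with Lemma~\ref{lem:flucts-1}. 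On $A_n(\eta)$ one has $\hat Q_{X_i}(\tau_n) > Q_{X_i}(\tau_n)(1-\eta)$, hence $\1\{Z_i>0\}\leq \1\{Y_i > Q_{X_i}(\tau_n)(1-\eta)\}$, so that $W_n\leq \bar W_n$ with $\bar W_n \coloneqq \frac{n}{k}C_n\sum_{i=1}^n w_i(x)\1\{Y_i > Q_{X_i}(\tau_n)(1-\eta)\}\norm{X_i-x}_2$, a quantity that no longer involves the estimated quantile. It then suffices to show $\E[\bar W_n]\to 0$, since $\P(W_n>\varepsilon, A_n(\eta))\leq \E[\bar W_n]/\varepsilon$.

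For the expectation I would proceed as in Lemma~\ref{lem:denom}: the expectation of the forest equals that of a single tree, so $\E[\bar W_n]=\frac{n}{k}C_n\,\E[\sum_i w_{i,b}(x)\1\{Y_i>Q_{X_i}(\tau_n)(1-\eta)\}\norm{X_i-x}_2]$ for an arbitrary tree $b$. Conditioning on all predictors and on the subsample split defining the leaf $L_b(x)$, honesty gives $Y_i\ind w_{i,b}(x)$ given $X_i$, so the indicator factorizes into $\P(Y_i>Q_{X_i}(\tau_n)(1-\eta)\mid X_i)$. By the regular variation of the tail (Lemma~\ref{lem:reg-var-1}), for any $\zeta>0$ and $n$ large this conditional probability is at most $\frac{k}{n}(1-\eta)^{-1/\xi_- - \zeta}$ uniformly in $X_i\in\mathcal X$, where $\xi_-=\min_{x\in\mathcal X}\xi(x)>0$ by compactness and continuity.

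The decisive cancellation then occurs: the prefactor $\frac{n}{k}$ cancels the $\frac{k}{n}$ coming from the exceedance probability, leaving
\begin{align*}
  \E[\bar W_n] \leq C_n(1-\eta)^{-1/\xi_- - \zeta}\,\E\Big[\sum_i w_{i,b}(x)\norm{X_i-x}_2\Big].
\end{align*}
Since the tree weights sum to one and are supported on $\{X_i\in L_b(x)\}$, I would bound $\sum_i w_{i,b}(x)\norm{X_i-x}_2 \leq \diam(L_b(x))$, and Corollary~\ref{cor:diam} gives $\E[\diam(L_b(x))]=\mathcal O(s^{-0.5 C_3})$. Hence $\E[\bar W_n]=\mathcal O(C_n\, s^{-0.5 C_3})$.

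The hard part will be the final balance, namely checking $C_n\, s^{-0.5 C_3}\to 0$, which is where the growth rate of $C_n$ from Lemma~\ref{lem:flucts-1} is essential. Since $C_n$ is essentially the Lipschitz constant of $x\mapsto \log Q_x(\tau_n)$, which under Assumption~\ref{ass:lipschitz} and the representation~\eqref{eq:reg-var} grows only polylogarithmically (of order $\log(n/k)$), whereas $s^{-0.5 C_3}=n^{-0.5 C_3\beta_s}$ decays polynomially by~\eqref{eq:n-k-s}, the polynomial decay dominates and $\E[\bar W_n]\to 0$. The subtlety to watch is ensuring the slowly growing fluctuation constant $C_n$ does not outpace the polynomially small leaf diameter; it is precisely the clean $n/k$–$k/n$ cancellation enabled by honesty that makes the two rates comparable and delivers the claim.
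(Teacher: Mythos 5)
Your proposal is correct and takes essentially the same route as the paper's proof: truncation to the event $A_n(\eta)$ of~\eqref{eq:event}, replacing $\1\{Z_i>0\}$ by $\1\{Y_i > Q_{X_i}(\tau_n)(1-\eta)\}$, honesty to factor out the conditional exceedance probability, Lemma~\ref{lem:reg-var-1} for the $n/k$ cancellation, the bound $\sum_i w_{i,b}(x)\norm{X_i-x}_2 \leq \diam(L_b(x))$ with Corollary~\ref{cor:diam}, and the rate comparison of Corollary~\ref{cor:flucts-2}. The only difference is that you conclude by first-moment Markov applied to the non-negative dominating sum, which legitimately dispenses with the variance computation the paper carries out via Lemma~\ref{lem:varforest} (that step is redundant here precisely because the limit is zero).
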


  \begin{proof}
    Fix $\eta > 0$. On the event $A_n(\eta)$ defined in~\eqref{eq:event}, it holds
    \begin{align*}
      0 
      \leq &\
      \sum_{i = 1}^{n} w_{i}(x) \frac{n}{k}C_n \1\{Y_i > \hat{Q}_{X_i}(\tau_n)\} \norm{X_i - x}_2 \\
      \leq &\
      \sum_{i = 1}^{n} w_i(x) \frac{n}{k}C_n \1\{Y_i > Q_{X_i}(\tau_n)(1 - \eta)\} \norm{X_i - x}_2 \\
      \coloneqq &\ T(x).
    \end{align*}
  We will show that $\E[T(x)] \to 0$ and $\V[T(x)] \to 0$ on the event $A_n(\eta)$.
  Fix a tree $b = 1, \dots, B$, and define
  \begin{align*}
    T_{b}(x) 
    \coloneqq 
   \sum_{i = 1}^{n} w_{i, b}(x)  \frac{n}{k}C_n  \1\{Y_i > Q_{X_i}(\tau_n)(1 - \eta)\} \norm{X_i - x}_2.
  \end{align*}
  Fix $\zeta > 0$. By Lemma~\ref{lem:reg-var-1}, for $n$ large enough we have 
  \begin{align}
    \sup_{x\in\mathcal{X}}\frac{n}{k} \P\left( Y > {Q}_{X}(\tau_n)(1 - \eta) \mid X = x \right)
    <
    (1 - \eta)^{-1/\xi_{-}-\zeta}.
  \end{align}
  
    We now consider the expectation of $T_{b}$.  On the event $A_n(\eta)$, using Lemma~\ref{lem:reg-var-1}, for $n$ large enough we have
  \begin{align*}
    0 \leq \E\left[T_{b}(x)  \right]
    = &\  \sum_{i = 1}^{n} \E \left[ w_{i, b}(x)  C_n \norm{X_i - x}_2  \frac{n}{k}\E\left[  \1\{Y_i > Q_{X_i}(\tau_n)(1 - \eta)\} \mid X_i\right]\right]\\
    < &\ (1 - \eta)^{-1/\xi_{-}-\zeta}\ C_n  \E \left[ \sum_{i = 1}^{n}  w_{i, b}(x)  \norm{X_i - x}_2 \right]\\
    \leq &\ (1 - \eta)^{-1/\xi_{-}-\zeta}\ C_n \E[\diam(L_b(x))].
  \end{align*}
  The expectation of the forest $T(x)$ is equal to the expectation of a single tree $T_{b}(x)$. 
  Moreover, Corollaries~\ref{cor:diam} and~\ref{cor:flucts-2} imply that $C_n \E[\diam(L_b(x))] \to 0$ as $n\to \infty$.
  Since $\eta, \zeta > 0$ are arbitrary, it follows that $\E[T(x)] \to 0$.

  We now consider the variance of $T_{b}$.
  With similar calculations as in~\eqref{eq:var-tb},
  on the event $A_n(\eta)$, using Lemma~\ref{lem:reg-var-1}, for $n$ large enough we have
  \begin{align*}
    \V&\left[ T_{b}(x) \right]
      \leq  \E\left[ T_{b}(x)^2 \right]\\
      < &\ \E\left[ \left(  \sum_{i = 1}^{n} w_{i, b}(x)  \frac{n}{k}C_n  \1\{Y_i > Q_{X_i}(\tau_n)(1 - \eta)\} \norm{X_i - x}_2\right)^2 \right]\\
      \leq &\ C_n^2\ \E\left[ \diam(L_b(x))^2 \right] \left( \frac{n}{k}   (1 - \eta)^{-1/\xi_{-}-\zeta} +   (1 - \eta)^{-2/\xi_{-}-2\zeta} \right).
  \end{align*}
  Using Lemma~\ref{lem:varforest}, the variance of the forest is at most $s/n$ the variance of a tree. Therefore, using~(3.6), %
  we have that
  \begin{align*}
    \V\left[ T(x) \right] 
    \leq &\ \frac{s}{n} \V\left[ T_{b}(x) \right]\\ 
    \leq &\ C_n^2\ \E\left[ \diam(L_b(x))^2 \right] \left( \frac{s}{k} (1 - \eta)^{-1/\xi_{-}-\zeta} + \frac{s}{n} (1 - \eta)^{-2/\xi_{-}-2\zeta}\right) \to 0,
  \end{align*}
  as $n \to \infty$.
  Here we used~(3.6), Corollary~\ref{cor:flucts-2} and the fact that 
  \[
    \E[\diam(L_b(x))^2] = \mathcal{O}\left( s^{-0.5 C_3} \right),
  \]
  which can be easily verified from~\eqref{eq:exp-diam}.
  \end{proof}

  \begin{lemma}[Term~(\ref{eq:t4}) of $f_n$]\label{lem:f-4}
    It holds that 
    \[
      T(x) \coloneqq \sum_{i = 1}^{n} w_i(x) \1\{U_i > \tau_n\}\frac{n}{k}\log\left(\frac{Q_x(U_i)}{Q_x(\tau_n)}\right) \stackrel{\P}{\to} \xi(x).
    \]
    
  \end{lemma}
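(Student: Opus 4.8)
The plan is to run the familiar first-moment/second-moment scheme, using honesty to decouple the forest weights from the exceedance part of each summand and reducing the computation to one-dimensional integrals of the regularly varying quantile function. Note that this term involves only the oracle quantile $Q_x(\tau_n)$ and the oracle exceedance indicators $\1\{U_i > \tau_n\}$, so no truncation to the event $A_n(\eta)$ is needed here. Write $T(x) = B^{-1}\sum_{b=1}^{B} T_b(x)$ with the single-tree contribution $T_b(x) = \sum_{i} w_{i,b}(x)\1\{U_i>\tau_n\}\frac{n}{k}\log(Q_x(U_i)/Q_x(\tau_n))$; since all trees are identically distributed, $\E[T(x)] = \E[T_b(x)]$. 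The summand's response part is a function of $U_i$ alone, and by honesty $w_{i,b}(x)$ is determined by the predictors and hence independent of $U_i$, while $\sum_i w_{i,b}(x) = 1$ by construction. First I would therefore obtain
\begin{equation*}
  \E[T_b(x)] = \frac{n}{k}\int_{\tau_n}^{1}\log\frac{Q_x(u)}{Q_x(\tau_n)}\,\mathrm{d}u.
\end{equation*}

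The core of the argument is to show that this integral tends to $\xi(x)$. Substituting $v = (1-u)/(1-\tau_n)$ and recalling $n/k = 1/(1-\tau_n)$ turns it into $\int_0^1 \log(Q_x(1-v(1-\tau_n))/Q_x(\tau_n))\,\mathrm{d}v$. Using the regularly varying representation $Q_x(\tau) = (1-\tau)^{-\xi(x)}\ell_x((1-\tau)^{-1})$ from~\eqref{eq:reg-var}, the integrand splits into a power part $-\xi(x)\log v$ and a slowly varying correction $\log(\ell_x((v(1-\tau_n))^{-1})/\ell_x((1-\tau_n)^{-1}))$. The power part integrates to $\xi(x)$ because $\int_0^1(-\log v)\,\mathrm{d}v = 1$. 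For the correction, the normalized Karamata representation~\eqref{eq:slowly-karamata} rewrites it as $\int_1^{1/v}\alpha_x((1-\tau_n)^{-1}s)/s\,\mathrm{d}s$, which tends to $0$ pointwise in $v$ since $\alpha_x(t)\to 0$; a Potter-type bound dominates its absolute value by $C-\varepsilon\log v$, integrable on $(0,1)$, so dominated convergence yields $\E[T_b(x)]\to\xi(x)$. I expect this extreme-value integral, and in particular the uniform control of the slowly varying correction near $v=0$, to be the main obstacle.

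It then remains to kill the variance. I would bound $\V[T_b(x)]\le\E[T_b(x)^2]$ and expand the square into diagonal ($i=j$) and off-diagonal ($i\neq j$) parts. By honesty each conditional second moment reduces to $\int_{\tau_n}^1(\log(Q_x(u)/Q_x(\tau_n)))^2\,\mathrm{d}u \sim 2\xi(x)^2(1-\tau_n)$ and each conditional first moment to $\int_{\tau_n}^1\log(Q_x(u)/Q_x(\tau_n))\,\mathrm{d}u\sim\xi(x)(1-\tau_n)$, the second-moment estimate again following from the substitution above together with $\int_0^1(\log v)^2\,\mathrm{d}v = 2$ and a Potter bound. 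Using $\sum_i w_{i,b}(x)^2\le\sum_i w_{i,b}(x)=1$ for the diagonal and $\sum_{i\neq j}w_{i,b}(x)w_{j,b}(x)\le 1$ for the off-diagonal, and cancelling the $(n/k)^2$ factors against the $(1-\tau_n)=k/n$ terms, gives a diagonal contribution of order $n/k$ and an off-diagonal contribution of order one. Finally, Lemma~\ref{lem:varforest} bounds the forest variance by $s/n$ times the tree variance, so that $\V[T(x)]\lesssim s/k + s/n\to 0$ by~\eqref{eq:n-k-s} and $s/k\to 0$. Chebyshev's inequality together with $\E[T(x)]\to\xi(x)$ then gives $T(x)\stackrel{\P}{\to}\xi(x)$.
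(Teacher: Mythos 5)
Your proof is correct, and its probabilistic skeleton is the same as the paper's: decompose the forest into trees, use honesty to decouple $w_{i,b}(x)$ from $U_i$ so that the weights (summing to one) disappear from the tree expectation, bound the tree variance by the single-summand second moment, invoke Lemma~\ref{lem:varforest} to pass from tree to forest variance, and conclude via $s/k \to 0$ from~\eqref{eq:n-k-s} and Chebyshev. Where you genuinely diverge is in the analytic core: the paper simply cites \citet[][Equation~(1.5)]{hsing1991} for the two key limits $\E[\tfrac{n}{k}\1\{U>\tau_n\}\log(Q_x(U)/Q_x(\tau_n))] \to \xi(x)$ and the analogous second-moment limit $\to 2\xi(x)^2$ (displays~\eqref{eq:exp-hill} and~\eqref{eq:var-hill-0}), whereas you derive both from scratch: the substitution $v = (1-u)/(1-\tau_n)$, the representation~\eqref{eq:reg-var}, the split into the power part $-\xi(x)\log v$ (integrating to $\xi(x)$, resp.\ $2\xi(x)^2$) and the slowly varying correction $\int_1^{1/v}\alpha_x((1-\tau_n)^{-1}s)/s\,\d s$, which is killed by $\alpha_x(t)\to 0$ in the normalized Karamata representation~\eqref{eq:slowly-karamata} together with the dominating bound $\varepsilon\log(1/v)$. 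Your route makes the lemma self-contained and exposes precisely which feature of the tail is used (only the normalized representation, not second-order conditions); the paper's citation is shorter and outsources this calculus. You are also more explicit than the paper on the variance side, separating the diagonal $\mathcal O(n/k)$ and off-diagonal $\mathcal O(1)$ contributions of $\E[T_b(x)^2]$ (the paper states $\V[T_b(x)] = \mathcal O(n/k)$ directly from~\eqref{eq:var-hill}, with the cross-term factorization left implicit, as it is carried out in its other lemmas). Both arguments are sound and yield the same conclusion.
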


  \begin{proof}
    First, from~\citep[][Equation~(1.5)]{hsing1991} it holds, as $n \to \infty$, that
    \begin{align*}
      \E&\left[ \frac{n}{k}\1\{U_i > \tau_n\}\log\left(\frac{Q_x(U_i)}{Q_x(\tau_n)}\right) \right] \to \xi(x),\label{eq:exp-hill} \numberthis\\
      \E&\left[ \frac{n}{k}\1\{U_i > \tau_n\}\log\left(\frac{Q_x(U_i)}{Q_x(\tau_n)}\right)^2 \right] \to 2\xi(x)^2. \label{eq:var-hill-0} \numberthis
    \end{align*}

    Define
    \begin{align*}
      T_b(x) \coloneqq \sum_{i = 1}^{n} w_{i, b}(x) \1\{U_i > \tau_n\}\frac{n}{k}\log\left(\frac{Q_x(U_i)}{Q_x(\tau_n)}\right).
    \end{align*}
    Consider the expectation of $T_b(x)$. {Honesty implies that $U_i \ind w_{i, b}(x)$ conditionally on $X_i$. Therefore, from convergence in~\eqref{eq:exp-hill}, for every $\varepsilon >0$ there exists a sample size $n_0$ such that for all $n > n_0$
    \begin{align*}
      |\E[T_b(x)] - \xi(x)| &\
       = \left|\sum_{i = 1}^{n}\E\left\{  \E\left[w_{i,b}(x) \frac{n}{k}\1\{U_i > \tau_n\}\log\left(\frac{Q_x(U_i)}{Q_x(\tau_n)}\right) \mid X_i, w_{i, b}(x) \right]\right\} - \xi(x) \right|\\
      \leq &\
      \sum_{i = 1}^{n}\E\left\{w_{i,b}(x)\left|\E\left[ \frac{n}{k}\1\{U_i > \tau_n\}\log\left(\frac{Q_x(U_i)}{Q_x(\tau_n)}\right) \mid X_i \right] - \xi(x)\right|\right\}\\
      = &\
      \sum_{i = 1}^{n}\E\left\{w_{i,b}(x) \left|\E\left[ \frac{n}{k}\1\{U_i > \tau_n\}\log\left(\frac{Q_x(U_i)}{Q_x(\tau_n)}\right) \right] - \xi(x)\right|\right\}\\
      < &\ \E\left\{\sum_{i = 1}^{n}w_{i,b}(x) \varepsilon\right\}< \varepsilon.
    \end{align*}
    }
    It follows that $\E[T_b(x)] \to \xi(x)$, and therefore $\E[T(x)] \to \xi(x)$, too.

    Consider the variance of $T_b(x)$.  
    From~\eqref{eq:var-hill-0}, we have that 
    \begin{align}\label{eq:var-hill}
      \E\left[ \left( \frac{n}{k}\1\{U_i > \tau_n\}\log\left(\frac{Q_x(U_i)}{Q_x(\tau_n)}\right) \right)^2 \right] = \mathcal{O}\left( \frac{n}{k} \right),
    \end{align}
    and thus $\V[T_b(x)] = \mathcal{O}(n/k)$.
    Using Lemma~\ref{lem:varforest}, the variance of the forest $T(x)$ is at most $s/n$ the variance of a tree. Therefore, using~(3.6), we have that $\V[T(x)] \leq s/n \V[T_b(X)] \to 0$.
  \end{proof}

  \begin{lemma}[Term~(\ref{eq:t5}) of $f_n$]\label{lem:f-5}
    It holds that
    \[
      \sum_{i = 1}^{n}w_i(x) \frac{n}{k}
      \left[\1\{Z_i > 0\}\log\left(\frac{Y_i }{Q_x(\tau_n)}\right)-  
      \1\{U_i > \tau_n\} \log\left(\frac{Q_x(U_i)}{Q_x(\tau_n)} \right) \right] 
      \stackrel{\P}{\to} 0.
    \]
  \end{lemma}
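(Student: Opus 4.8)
The plan is to mirror the strategy used for Lemma~\ref{lem:f-2} and Lemma~\ref{lem:denom}: work on the high-probability event $A_n(\eta)$ from~\eqref{eq:event}, exploit the stochastic representation $Y_i \stackrel{d}{=} Q_{X_i}(U_i)$ with $U_i$ standard uniform and independent of $X_i$, and then control the target sum by showing that the corresponding tree-level quantity has expectation and variance tending to zero, invoking honesty, Corollary~\ref{cor:diam}, Lemma~\ref{lem:reg-var-1} and the forest variance reduction of Lemma~\ref{lem:varforest}. Writing $Y_i = Q_{X_i}(U_i)$ and noting that $\1\{U_i>\tau_n\}=\1\{Y_i>Q_{X_i}(\tau_n)\}$, the summand splits as $D_i = D_i^{(1)}+D_i^{(2)}$, where
\[
D_i^{(1)} := \1\{Z_i>0\}\log\!\left(\frac{Q_{X_i}(U_i)}{Q_x(U_i)}\right), \qquad D_i^{(2)} := \bigl(\1\{Z_i>0\}-\1\{U_i>\tau_n\}\bigr)\log\!\left(\frac{Q_x(U_i)}{Q_x(\tau_n)}\right),
\]
so that $D_i^{(1)}$ isolates the replacement of the local quantile function $Q_{X_i}$ by the target $Q_x$ in the logarithm, and $D_i^{(2)}$ isolates the replacement of the estimated threshold $\hat Q_{X_i}(\tau_n)$ by the true one $Q_{X_i}(\tau_n)$ in the indicator.

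For the first part I would bound $|D_i^{(1)}|$ pointwise. Using the normalized Karamata representation~\eqref{eq:slowly-karamata} of the quantile function in~\eqref{eq:reg-var} together with the Lipschitz conditions of Assumption~\ref{ass:lipschitz}, one obtains, for any $i$ with $w_i(x)>0$ and $Z_i>0$,
\[
\left|\log\frac{Q_{X_i}(U_i)}{Q_x(U_i)}\right| \le (L_\xi+L_\alpha)\,\norm{X_i-x}_2\,\bigl|\log(1-U_i)\bigr| + L_c\,\norm{X_i-x}_2,
\]
which is the random-level analogue of the deterministic bound of Lemma~\ref{lem:log-bound}. The $L_c$ term is handled exactly as the corresponding term in Lemma~\ref{lem:f-2}. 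For the leading term, the point is that conditioning on $X_i$ and using honesty together with Lemma~\ref{lem:reg-var-1} gives, on $A_n(\eta)$,
\[
\frac{n}{k}\,\E\!\left[\1\{Z_i>0\}\,\bigl|\log(1-U_i)\bigr|^{\,j}\mid X_i\right] = \mathcal{O}\!\left((\log(n/k))^{\,j}\right), \qquad j=1,2,
\]
since, conditionally on exceeding an intermediate level, $-\log(1-U_i)$ has first and second moment of order $\log(n/k)$ and $(\log(n/k))^2$, respectively. Combining with $\E[\sum_i w_{i,b}(x)\norm{X_i-x}_2]\le\E[\diam(L_b(x))]=\mathcal{O}(s^{-0.5C_3})$ from Corollary~\ref{cor:diam}, the expectation of the tree-level version of $\sum_i w_i(x)\frac{n}{k}D_i^{(1)}$ is of order $\log(n/k)\cdot s^{-0.5C_3}\to 0$; the variance is controlled analogously with an extra $(\log(n/k))^2$ factor and a $\diam^2$ factor, and is then killed by the $s/n$ reduction of Lemma~\ref{lem:varforest}.

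For the indicator part $D_i^{(2)}$, on $A_n(\eta)$ the difference $\1\{Z_i>0\}-\1\{U_i>\tau_n\}$ is nonzero only when $Y_i$ lies in the thin band between $Q_{X_i}(\tau_n)$ and $\hat Q_{X_i}(\tau_n)\in\bigl((1-\eta)Q_{X_i}(\tau_n),(1+\eta)Q_{X_i}(\tau_n)\bigr)$. By Lemma~\ref{lem:reg-var-1} this band has conditional probability $\mathcal{O}(\eta\,k/n)$, so that the $n/k$ normalization yields a contribution of order $\eta$; moreover, because such $Y_i$ are within a factor $1\pm\eta$ of $Q_{X_i}(\tau_n)\approx Q_x(\tau_n)$ for the nearby $X_i$ that receive positive weight, the accompanying factor $|\log(Q_x(U_i)/Q_x(\tau_n))|$ is itself small, sharpening the bound. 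Taking expectation and variance of the tree version and then letting $\eta\downarrow 0$ (permissible since the bound is uniform over the arbitrary $\eta>0$ defining $A_n(\eta)$) shows that this part is $o_\P(1)$ as well. I expect the main obstacle to be the first part: reconciling the smallness of $\norm{X_i-x}_2$, of order $\E[\diam(L_b(x))]$, with the potentially large fluctuations of $|\log(1-U_i)|$ near $U_i=1$, which forces the sharp conditional moment bounds above at rate $\log(n/k)$ so that the product still vanishes after the polynomial diameter decay takes effect.
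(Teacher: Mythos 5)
Your proposal is correct and follows essentially the same route as the paper: the paper likewise splits the summand into an indicator-replacement term and a quantile-function-replacement term (terms~\eqref{eq:t6} and~\eqref{eq:t7}, handled in Lemmas~\ref{lem:f-6} and~\ref{lem:f-7}), and closes each by the same combination of the event $A_n(\eta)$ from~\eqref{eq:event}, the Karamata/Lipschitz bound of Lemma~\ref{lem:flucts-1}, the regular-variation bound of Lemma~\ref{lem:reg-var-1}, honesty, the diameter bound of Corollary~\ref{cor:diam}, the $s/n$ variance reduction of Lemma~\ref{lem:varforest}, and the final $\eta \downarrow 0$ step. The only difference is that you insert the cross term in the opposite order, so your log-replacement term $D_i^{(1)}$ carries the random indicator $\1\{Z_i>0\}$ (forcing one extra sandwich by $\1\{Y_i>Q_{X_i}(\tau_n)(1-\eta)\}$ on $A_n(\eta)$ before honesty can be invoked), whereas the paper's term~\eqref{eq:t7} carries $\1\{U_i>\tau_n\}$ directly; this is immaterial to the argument.
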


  \begin{proof}
    Fix $\varepsilon, \eta > 0$ and let $A_n(\eta)$ denote the event defined in~\eqref{eq:event}. We have that
    \begin{align*}
      \P&\left(\left| \sum_{i = 1}^{n}w_i(x) \frac{n}{k}
      \left[\1\{Z_i > 0\}\log\left(\frac{Y_i }{Q_x(\tau_n)}\right)-  
      \1\{U_i > \tau_n\} \log\left(\frac{Q_x(U_i)}{Q_x(\tau_n)} \right) \right] \right| > \varepsilon \right)\\
      \leq &\ 
      \P\left(\left| \sum_{i = 1}^{n}w_i(x) \frac{n}{k}
      \left[\1\{Y_i > \hat{Q}_{X_i}(\tau_n)\} - \1\{Y_i > {Q}_{X_i}(\tau_n)\}\right] \log\left(\frac{Y_i }{Q_x(\tau_n)}\right) \right| > \frac{\varepsilon}{2}, A_n(\eta) \right)\tag{$VI$}\label{eq:t6}\\
      &+ \P\left(\left| \sum_{i = 1}^{n}w_i(x) \frac{n}{k}
      \1\{Y_i > {Q}_{X_i}(\tau_n)\} \log\left(\frac{Y_i }{Q_x(U_i)}\right) \right| > \frac{\varepsilon}{2}, A_n(\eta) \right)\tag{$VII$}\label{eq:t7}\\
      &+ \P\left( A_n(\eta)^c \right).
    \end{align*}

  Using Lemma~\ref{lem:f-6}, it holds that~\eqref{eq:t6}$\to 0$ as $n\to \infty$.

  Using Lemma~\ref{lem:f-7}, it holds that~\eqref{eq:t7}$\to 0$ as $n\to \infty$.
  \end{proof}

  \begin{lemma}[Term~(\ref{eq:t6}) of $f_n$]\label{lem:f-6}
    It holds that
    \begin{align*}
      \sum_{i = 1}^{n}w_i(x) \frac{n}{k}
      \left[\1\{Y_i > \hat{Q}_{X_i}(\tau_n)\} - \1\{Y_i > {Q}_{X_i}(\tau_n)\}\right] \log\left(\frac{Y_i }{Q_x(\tau_n)}\right) \stackrel{\P}{\to} 0.
    \end{align*}
  \end{lemma}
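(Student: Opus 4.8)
The plan is to follow the template already used for Lemmas~\ref{lem:denom} and~\ref{lem:f-2}: bound the target quantity in absolute value by a non-negative random variable $T(x)$, control the expectation and variance of $T(x)$ at the level of a single tree, and then transfer to the forest via the variance-reduction Lemma~\ref{lem:varforest}. First I would fix $\eta>0$ and work on the event $A_n(\eta)$ from~\eqref{eq:event}, on which $\hat Q_{X_i}(\tau_n)$ lies between $(1-\eta)Q_{X_i}(\tau_n)$ and $(1+\eta)Q_{X_i}(\tau_n)$ simultaneously for all $i$, with $\P(A_n(\eta))\to 1$ by Assumption~\ref{ass:hatq_x-uniform-consistent} and Lemma~\ref{lem:flucts-1}. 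The key observation is that the indicator difference $\1\{Y_i>\hat Q_{X_i}(\tau_n)\}-\1\{Y_i>Q_{X_i}(\tau_n)\}$ is non-zero only when $Y_i$ falls strictly between $\hat Q_{X_i}(\tau_n)$ and $Q_{X_i}(\tau_n)$, so on $A_n(\eta)$ its absolute value is dominated by $\1\{(1-\eta)Q_{X_i}(\tau_n)<Y_i\leq(1+\eta)Q_{X_i}(\tau_n)\}$. The left-hand side is therefore bounded in absolute value by
\begin{align*}
  T(x)\coloneqq \sum_{i=1}^n w_i(x)\frac{n}{k}\1\{(1-\eta)Q_{X_i}(\tau_n)<Y_i\leq(1+\eta)Q_{X_i}(\tau_n)\}\left|\log\left(\frac{Y_i}{Q_x(\tau_n)}\right)\right|,
\end{align*}
and it suffices to prove $T(x)\stackrel{\P}{\to}0$, since $\eta$ can be sent to zero at the end.

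The second step is to tame the logarithmic factor by splitting $\log(Y_i/Q_x(\tau_n))=\log(Y_i/Q_{X_i}(\tau_n))+\log(Q_{X_i}(\tau_n)/Q_x(\tau_n))$. On the band the first summand is bounded in absolute value by the constant $\log\{(1+\eta)/(1-\eta)\}$, while the second summand is a bias term that, for observations with non-zero weight $w_i(x)$, is controlled by $C_n\norm{X_i-x}_2$ via Lemma~\ref{lem:log-bound} and the sequence $C_n$ of Lemma~\ref{lem:flucts-1}. This decomposes $T(x)$ into a ``band-probability'' piece carrying the factor $\log\{(1+\eta)/(1-\eta)\}$ and a ``Lipschitz'' piece of exactly the form already treated in Lemma~\ref{lem:f-2}.

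For the expectation, I would fix a tree $b$ and the induced $T_b(x)$; honesty gives $Y_i\ind w_{i,b}(x)$ conditional on $X_i$, so by the regular-variation estimate of Lemma~\ref{lem:reg-var-1} the conditional band probability is of order $(k/n)\{(1-\eta)^{-1/\xi(X_i)}-(1+\eta)^{-1/\xi(X_i)}\}$. The band-probability piece then contributes at most $\log\{(1+\eta)/(1-\eta)\}\{(1-\eta)^{-1/\xi_-}-(1+\eta)^{-1/\xi_-}\}$ up to $o(1)$, which vanishes as $\eta\to0$, while the Lipschitz piece contributes $C_n\E[\diam(L_b(x))]\to0$ by Corollaries~\ref{cor:diam} and~\ref{cor:flucts-2}, exactly as in Lemma~\ref{lem:f-2}. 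Since the forest expectation equals the tree expectation, $\E[T(x)]$ is asymptotically bounded by a quantity that tends to $0$ with $\eta$. For the variance, the same second-moment computations as in~\eqref{eq:var-tb} and Lemma~\ref{lem:f-2} give $\V[T_b(x)]=\mathcal O(n/k)$ (the Lipschitz piece additionally carrying $C_n^2\E[\diam(L_b(x))^2]$), and Lemma~\ref{lem:varforest} yields $\V[T(x)]\leq (s/n)\V[T_b(x)]\to0$ under~\eqref{eq:n-k-s}. Combining the two bounds gives $T(x)\stackrel{\P}{\to}0$ for each fixed $\eta$, and the arbitrariness of $\eta$ finishes the proof. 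I expect the main obstacle to be the bookkeeping in the logarithmic decomposition: ensuring that the band indicator and the $C_n\norm{X_i-x}_2$ bound interact so that the Lipschitz piece genuinely reduces to the already-established Lemma~\ref{lem:f-2} estimate, rather than introducing an extra factor that would spoil the $n/k$ scaling and hence the variance bound.
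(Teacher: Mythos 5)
Your proposal is correct and takes essentially the same route as the paper's proof: you dominate the indicator difference by the band indicator $\1\{(1-\eta)Q_{X_i}(\tau_n) < Y_i < (1+\eta)Q_{X_i}(\tau_n)\}$ on the event $A_n(\eta)$, bound the logarithm by $C_n\norm{X_i-x}_2$ plus a constant vanishing as $\eta\to 0$ (the paper uses $|\log(1-\eta)|$ where you use the slightly looser $\log\{(1+\eta)/(1-\eta)\}$), and then control expectation and variance tree-by-tree via Lemma~\ref{lem:reg-var-1}, Corollaries~\ref{cor:diam} and~\ref{cor:flucts-2}, and Lemma~\ref{lem:varforest}, concluding by the arbitrariness of $\eta$ exactly as the paper does. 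The only cosmetic discrepancy is your invocation of Lemma~\ref{lem:log-bound} for the Lipschitz control of $\log(Q_{X_i}(\tau_n)/Q_x(\tau_n))$, where Lemma~\ref{lem:flucts-1} alone is what is needed; this does not affect correctness.
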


  \begin{proof}
    Fix $\eta > 0$.
    On the event $A_n(\eta)$ defined in~\eqref{eq:event}, it holds for all $i = 1, \dots, n$,
    \begin{align*}
      \big|\1\{&Y_i > \hat{Q}_{X_i}(\tau_n)\} - \1\{Y_i > {Q}_{X_i}(\tau_n)\}\big|
      \\
      \leq &\ 
      \left|\1\{Y_i > {Q}_{X_i}(\tau_n)(1-\eta)\} - \1\{Y_i > {Q}_{X_i}(\tau_n)(1 + \eta)\}\right|
      \\
      = &\
      \1\{{Q}_{X_i}(\tau_n)(1-\eta) < Y_i < {Q}_{X_i}(\tau_n)(1 + \eta)\}.
    \end{align*}
  Therefore, it follows that
  \begin{align*}
    \left|\log\left( \frac{Y_i}{Q_x(\tau_n)} \right)\right| \leq \left|\log\left( \frac{Q_{X_i}(\tau_n)}{Q_x(\tau_n)} \right)\right| + \left|\log(1-\eta)\right|,\ \text{if}\ Y_i < Q_x(\tau_n),
    \\
    \left|\log\left( \frac{Y_i}{Q_x(\tau_n)} \right)\right| \leq \left|\log\left( \frac{Q_{X_i}(\tau_n)}{Q_x(\tau_n)} \right)\right| + \left|\log(1+\eta)\right|,\ \text{if}\ Y_i \geq Q_x(\tau_n).
  \end{align*}
  Since $|\log(1 + \eta)| < |\log(1 - \eta)|$, using Lemma~\ref{lem:flucts-1}, we have for all $i = 1, \dots, n$,
  \begin{align*}
    \left|\log\left( \frac{Y_i}{Q_x(\tau_n)} \right)\right| 
    \leq \left|\log\left( \frac{Q_{X_i}(\tau_n)}{Q_x(\tau_n)} \right)\right| + \left|\log(1-\eta)\right|
    \leq C_n \norm{X_i - x}_2 + \left|\log(1 - \eta)\right|.
  \end{align*}
  Fix $\zeta > 0$. By Lemma~\ref{lem:reg-var-1}, for $n$ large enough and all $x \in \mathcal{X}$ we have 
  \begin{align*}
    \frac{n}{k} &\P\left( Q_X(\tau_n)(1 - \eta) < Y < Q_X(\tau_n)(1 + \eta) \mid X = x \right)
    \\
    = &\
    \frac{n}{k} \left[ \P\left( Y > Q_X(\tau_n)(1 - \eta)\mid X = x \right) - \P\left( Y > Q_X(\tau_n)(1 + \eta) \mid X = x \right) \right]
    \\
    \leq &\
    (1 - \eta)^{-1/\xi_{-}-\zeta}
    -
    (1 + \eta)^{-1/\xi_{-}-\zeta}.
  \end{align*}
  Therefore, on the event $A_n(\eta)$, we have that
  \begin{align*}
    0 \leq &\
    \sum_{i = 1}^{n}w_i(x) \frac{n}{k}
    \left|\1\{Y_i > \hat{Q}_{X_i}(\tau_n)\} - \1\{Y_i > {Q}_{X_i}(\tau_n)\}\right| \left| \log\left(\frac{Y_i }{Q_x(\tau_n)}\right)\right|
    \\
    \leq &\
    \sum_{i = 1}^{n}w_i(x) \frac{n}{k}
    \left|\1\{{Q}_{X_i}(\tau_n)(1 - \eta) < Y_i < {Q}_{X_i}(\tau_n)(1 + \eta)\}\right|\left(  C_n \norm{X_i - x}_2 + \left|\log(1 - \eta)\right| \right)
    \\
    \eqqcolon &\ T(x).
  \end{align*}
  With similar calculations as in Lemma~\ref{lem:f-2}, it follows that
  $\E[T(x)] \to 0$ and $\V[T(x)] \to 0$ as $n \to \infty$.
  \end{proof}

  \begin{lemma}[Term~(\ref{eq:t7}) of $f_n$]\label{lem:f-7}
    It holds that
    \begin{align*}
      T(x) \coloneqq \sum_{i = 1}^{n}w_i(x) \frac{n}{k}
      \1\{Y_i > {Q}_{X_i}(\tau_n)\} \log\left(\frac{Y_i }{Q_x(U_i)}\right) \stackrel{\P}{\to} 0.
    \end{align*}
  \end{lemma}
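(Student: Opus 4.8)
The plan is to mimic the expectation--variance argument used for Lemma~\ref{lem:f-2}, the new ingredient being that the log-ratio is controlled by the Lipschitz structure of the conditional quantile function. First I would use the stochastic representation $Y_i \stackrel{d}{=} Q_{X_i}(U_i)$ with $U_i$ standard uniform and independent of $X_i$, so that $\1\{Y_i > Q_{X_i}(\tau_n)\} = \1\{U_i > \tau_n\}$ and $\log(Y_i/Q_x(U_i)) = \log(Q_{X_i}(U_i)/Q_x(U_i))$. Writing $Q_x$ in the form~\eqref{eq:reg-var} and the slowly varying factor $\ell_x$ through its Karamata representation~\eqref{eq:slowly-karamata}, I obtain for any $u \in (0,1)$
\begin{align*}
\log\frac{Q_{X_i}(u)}{Q_x(u)} = -(\xi(X_i)-\xi(x))\log(1-u) + (\log c(X_i) - \log c(x)) + \int_1^{(1-u)^{-1}} \frac{\alpha_{X_i}(t) - \alpha_x(t)}{t}\,\mathrm{d}t,
\end{align*}
and the Lipschitz bounds of Assumption~\ref{ass:lipschitz} then give
\begin{align*}
\left| \log\frac{Q_{X_i}(u)}{Q_x(u)} \right| \leq (L_\xi + L_\alpha)\,\norm{X_i - x}_2\, |\log(1-u)| + L_c\,\norm{X_i - x}_2.
\end{align*}

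Substituting $u = U_i$ and inserting this bound, $|T(x)|$ is dominated by the sum of the two nonnegative terms
\begin{align*}
T_A(x) &\coloneqq (L_\xi + L_\alpha)\sum_{i=1}^n w_i(x)\frac{n}{k}\1\{U_i > \tau_n\}\norm{X_i - x}_2\,|\log(1-U_i)|,\\
T_B(x) &\coloneqq L_c\sum_{i=1}^n w_i(x)\frac{n}{k}\1\{U_i > \tau_n\}\norm{X_i - x}_2.
\end{align*}
The term $T_B(x)$ is of the type handled in Lemma~\ref{lem:f-2}, but even simpler: the prefactor $L_c$ is a fixed constant rather than the diverging $C_n$, and because the indicator now involves the \emph{true} intermediate quantile $Q_{X_i}(\tau_n)$, the exceedance probability is exactly $\P(U_i>\tau_n)=k/n$, so no event $A_n(\eta)$ and no regularly varying bound (Lemma~\ref{lem:reg-var-1}) are needed; one gets $\E[T_B(x)] \lesssim L_c\,\E[\diam(L_b(x))] \to 0$ and $\V[T_B(x)]\to 0$ as in Lemma~\ref{lem:f-2}. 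For $T_A(x)$ I would pass to a single tree $T_{A,b}(x)$, using that the forest mean equals the tree mean and, via Lemma~\ref{lem:varforest}, that the forest variance is at most $(s/n)$ times the tree variance.

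For the moment computations I would exploit honesty, which gives $U_i \ind w_{i,b}(x)$ conditionally on $X_i$, together with the independence of $U_i$ from $X_i$, to factor out the conditional moments of the uniform. The only genuinely new quantities are the truncated uniform moments
\begin{align*}
\frac{n}{k}\E\bigl[\1\{U_i > \tau_n\}|\log(1-U_i)|^p\bigr] = \mathcal{O}\bigl((\log(n/k))^p\bigr), \qquad p = 1,2,
\end{align*}
which follow from $\int_0^{a}(-\log v)\,\mathrm{d}v = a(1-\log a)$ and its second-moment analogue, evaluated at $a = k/n$. These yield $\E[T_{A,b}(x)] \lesssim \log(n/k)\,\E[\diam(L_b(x))]$ and $\E[T_{A,b}(x)^2] \lesssim (n/k)\,\log^2(n/k)\,\E[\diam(L_b(x))^2]$, whence $\V[T_A(x)] \lesssim (s/k)\,\log^2(n/k)\,\E[\diam(L_b(x))^2]$. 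Since both $\E[\diam(L_b(x))]$ and $\E[\diam(L_b(x))^2]$ are $\mathcal{O}(s^{-0.5C_3})$ by Corollary~\ref{cor:diam}, and since $s = n^{\beta_s}$, $k = n^{\beta_k}$ with $\beta_s < \beta_k$ by~\eqref{eq:n-k-s}, the polynomial decay of the diameter dominates both the polylogarithmic $\log(n/k)$ factors and the polynomial gap $s/k = n^{\beta_s-\beta_k}$, so the mean and variance of $T_A(x)$ both vanish. Therefore $T_A(x)\stackrel{\P}{\to}0$ and $T_B(x)\stackrel{\P}{\to}0$, and Markov's inequality gives $T(x)\stackrel{\P}{\to}0$.

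The main obstacle is the unbounded factor $|\log(1-U_i)|$, which diverges as $U_i \uparrow 1$ and so cannot be controlled pointwise. The crux is precisely to establish that its moments truncated above the level $\tau_n$ grow only polylogarithmically in $n/k$, so that after multiplication by the polynomially small leaf diameter the whole contribution still disappears despite the $n/k$ inflation in the second moment.
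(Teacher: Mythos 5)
Your proof is correct and follows essentially the same route as the paper's: the paper likewise uses the representation $Y_i \stackrel{d}{=} Q_{X_i}(U_i)$, bounds $\left|\log\left(Q_{X_i}(U_i)/Q_x(U_i)\right)\right|$ by $\left\{ (L_\xi + L_\alpha)\log\frac{1}{1-U_i} + L_c \right\}\norm{X_i - x}_2$ (via Lemma~\ref{lem:flucts-1} with $U_i$ in place of $\tau_n$), and then runs the same honesty-based tree-level mean/variance computation, obtaining $|\E[T_b(x)]| = \mathcal{O}\left(\E\left[\diam(L_b(x))\right]\log(n/k)\right)$ and $\V[T_b(x)] = \mathcal{O}\left((n/k)\,\E\left[\diam(L_b(x))^2\right]\log^2(n/k)\right)$, after which Lemma~\ref{lem:varforest} and~\eqref{eq:n-k-s} give the conclusion exactly as you argue. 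The only cosmetic differences are that you split the bound into two separate sums $T_A$ and $T_B$ and compute the truncated uniform log-moments by direct integration, whereas the paper keeps a single sum and uses the representation $\log\frac{1}{1-U_i} \stackrel{d}{=} E_i + \log(n/k)$, $E_i \sim \mathrm{Exp}(1)$, conditional on $U_i > \tau_n$.
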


  \begin{proof}
    We will show that $\E[T(x)] \to 0$ and $\V[T(x)] \to 0$ as $n\to\infty$.
    Recall the stochastic representation $Y_i \stackrel{d}{=}Q_{X_i}(U_i)$.
    Fix a tree $b = 1, \dots, B$ and define
    \begin{align*}
      T_b(x) \coloneqq \sum_{i = 1}^{n}w_{i, b}(x) \frac{n}{k}
      \1\{U_i > \tau_n\} \log\left(\frac{Q_{X_i}(U_i)}{Q_x(U_i)}\right).
    \end{align*}
    From Lemma~\ref{lem:flucts-1}, by plugging in
    $U_i > \tau_n$ in place of $\tau_n$,
    we have that
    \begin{align*}
      \left|\log\left( \frac{Q_{X_i}(U_i)}{Q_{x}(U_i)}\right)\right|
      \leq &\ 
      \left\{ \log\left( \frac{1}{1-U_i} \right) (L_\xi + L_\alpha) + L_{c} \right\} \norm{X_i-x}_2.
    \end{align*}
    Moreover, note that conditional on the event $U_i > \tau_n$ we have the stochastic representation
    \begin{align*}
      \log\left( \frac{1}{1-U_i} \right) \stackrel{d}{=} E_i + \log\left( \frac{n}{k} \right),
    \end{align*}
    where $E_i \sim \mathrm{Exp}(1)$. 
    Therefore, we have that 
    \begin{align*}
      \E&\left[\left|\log\left(\frac{Q_{X_i}(U_i)}{Q_x(U_i)}\right)\right| \1\left\{ U_i > \tau_n \right\} \mid X_i\right]\\
      \leq &\ 
      \E\left[ \left\{ \log\left( \frac{1}{1-U_i} \right) (L_\xi + L_\alpha) + L_{c} \right\} \1\left\{ U_i > \tau_n \right\} \mid X_i\right] \norm{X_i-x}_2
      \\
      = &\ \E\left[ \left\{ \log\left( \frac{1}{1-U_i} \right) (L_\xi + L_\alpha) + L_{c} \right\} \mid U_i > \tau_n\right] \P\left( U_i > \tau_n \right) \norm{X_i - x}_2
      \\
      = &\ \E\left[\log\left( \frac{1}{1-U_i} \right) \mid U_i > \tau_n\right] \frac{k}{n} \norm{X_i - x}_2 (L_\xi + L_\alpha)
      + L_{c} \frac{k}{n} \norm{X_i - x}_2
      \\
      = &\ \left( 1 + \log\left( \frac{n}{k} \right) \right)
      \frac{k}{n} \norm{X_i - x}_2 (L_\xi + L_\alpha)
      + L_{c} \frac{k}{n} \norm{X_i - x}_2.
    \end{align*}
    Moreover, we have that
    \begin{align*}
      \E&\left[\left\{\log\left(\frac{Q_{X_i}(U_i)}{Q_x(U_i)}\right)\right\}^2 \1\left\{ U_i > \tau_n \right\} \mid X_i\right]
      \\
      \leq &\ 
      \E\left[ \left\{ \log\left( \frac{1}{1-U_i} \right) (L_\xi + L_\alpha) + L_{c} \right\}^2 \mid U_i > \tau_n, X_i\right] \frac{k}{n}\norm{X_i-x}_2^2
      \\
      \leq &\ \E\left[\log\left( \frac{1}{1-U_i} \right)^2 \mid U_i > \tau_n\right] \frac{k}{n} \norm{X_i - x}_2^2 (L_\xi + L_\alpha)^2
      + L_{c}^2 \frac{k}{n} \norm{X_i - x}_2^2
      \\
      &+ 2\  \E\left[\log\left( \frac{1}{1-U_i} \right)\mid U_i > \tau_n\right]\frac{k}{n} \norm{X_i - x}_2^2 (L_\xi + L_\alpha)L_{c}
      \\
      = &\
      \left( 2 + \log\left( \frac{n}{k} \right)^2 + 2\log\left( \frac{n}{k} \right) \right)\frac{k}{n} \norm{X_i - x}_2^2 (L_\xi + L_\alpha)^2
      + L_{c}^2 \frac{k}{n} \norm{X_i - x}_2^2
      \\
      & + \left(2 + 2\log\left( \frac{n}{k} \right) \right)\frac{k}{n} \norm{X_i - x}_2^2 (L_\xi + L_\alpha)L_{c}.
    \end{align*}
    Consider the expectation of $T_b(x)$. Using similar calculations as in Lemma~\ref{lem:f-2}, it is easy to show that 
    \begin{align*}
      |\E[T_b(x)]| = \mathcal{O}\left(\E\left[ \diam(L_b(x)) \right]\ \log\left( \frac{n}{k} \right) \right),
    \end{align*}
    which implies that $\E[T(x)] = \E[T_b(x)] \to 0$ as $n \to \infty$.
    Consider the variance of $T_b(x)$. Using similar calculations as in Lemma~\ref{lem:f-2}, it is easy to show that
    \begin{align*}
      \V[T_b(x)] \leq&\ \E[T_b(x)^2]\\
      = &\
      \mathcal{O}\left( \frac{n}{k}\ \E\left[ \diam(L_b(x))^2 \right]\ \log\left( \frac{n}{k} \right)^2 \right).
    \end{align*}
    Using Lemma~\ref{lem:varforest}, the variance of the forest is at most $s/n$ the variance of a tree. Therefore, using~(3.6), we have that $\V[T(x)] \leq s/n \V[T_b(X)] \to 0$.

  \end{proof}

  \begin{lemma}[Leading term of $g_n$]\label{lem:gn-1}
    It holds that
    \begin{align*}
     \tilde{T}_1(x) \coloneqq \sum_{i = 1}^{n} w_i(x) \1\left\{ \norm{X_i - x}_2 \leq \delta_n  \right\}\1\{Z_i > 0\}\frac{n}{k}\frac{1}{1 + Z_i/\hat{Q}_x(\tau_n)} \stackrel{\P}{\to} \frac{1}{1 + \xi(x)}.
    \end{align*}
  \end{lemma}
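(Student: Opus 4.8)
The plan is to follow the same two-step template used in Lemma~\ref{lem:denom} and Proposition~\ref{prop:gn}: work throughout on the high-probability event $A_n(\eta)$ from~\eqref{eq:event}, write the forest quantity $\tilde T_1(x)$ as the average over trees of single-tree contributions $\tilde T_{1,b}(x)$, and then control the expectation and the variance of one tree separately. By symmetry of the forest, $\E[\tilde T_1(x)] = \E[\tilde T_{1,b}(x)]$, and Lemma~\ref{lem:varforest} reduces the forest variance to $s/n$ times the tree variance, so it suffices to prove that $\E[\tilde T_{1,b}(x)]\to (1+\xi(x))^{-1}$ and $\V[\tilde T_{1,b}(x)] = \mathcal{O}(n/k)$.

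For the expectation I would use honesty, which gives $Y_i \ind w_{i,b}(x)$ conditionally on $X_i$, to write $\E[\tilde T_{1,b}(x)]$ as a weighted average over $i$ of the conditional expectations $\frac{n}{k}\E[\1\{Z_i>0\}(1 + Z_i/\hat Q_x(\tau_n))^{-1}\mid X_i]$, restricted to $\norm{X_i-x}_2\le\delta_n$. On $A_n(\eta)$ I would sandwich the indicator $\1\{Z_i>0\}=\1\{Y_i>\hat Q_{X_i}(\tau_n)\}$ between $\1\{Y_i>Q_{X_i}(\tau_n)(1\pm\eta)\}$ and bound $\hat Q_x(\tau_n)$ below and above by $(1\pm\eta)Q_x(\tau_n)$. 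Using the stochastic representation $Y_i = Q_{X_i}(U_i)$ as in Lemma~\ref{lem:f-4}, conditioning on $U_i>\tau_n$ makes $V_i \coloneqq (1-U_i)/(1-\tau_n)$ standard uniform, and by the regularly varying form~\eqref{eq:reg-var} together with the uniform convergence of the slowly varying parts one obtains $Z_i/\hat Q_x(\tau_n)\to V_i^{-\xi(x)}-1$. Hence the conditional expectation converges to $\E[V_i^{\xi(x)}] = \int_0^1 v^{\xi(x)}\,\d v = (1+\xi(x))^{-1}$, and since $\P(U_i>\tau_n)=k/n$ the normalizing factor $n/k$ makes the single-term expectation of order one. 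Letting $\eta\downarrow 0$ then yields $\E[\tilde T_{1,b}(x)]\to(1+\xi(x))^{-1}$.

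The role of the truncation $\1\{\norm{X_i-x}_2\le\delta_n\}$, with $\delta_n = C_1 s^{-0.51C_3}\to0$, is exactly to make the previous step uniform: for $\norm{X_i - x}_2 \le \delta_n$ the Lipschitz conditions of Assumption~\ref{ass:lipschitz} give $\xi(X_i)\to\xi(x)$ and, via the argument behind Lemma~\ref{lem:flucts-1}, $Q_{X_i}(\tau_n)/Q_x(\tau_n)\to1$, so that the local GPD parameters at $X_i$ may be replaced by those at $x$ with a vanishing error. The main obstacle is precisely this uniform replacement, because the limiting relation in Assumption~\ref{ass:doa} is only pre-asymptotic: one must show that the slowly varying ratio $\ell_{X_i}((1-U_i)^{-1})/\ell_{X_i}((1-\tau_n)^{-1})$ and its dependence on the index $X_i$ are controlled uniformly over the shrinking ball and over the exceedance levels $U_i>\tau_n$, which is where the uniform assumption $\lim_{t\to\infty}\tilde\alpha_x(t)=0$ and the Lipschitz bounds on $\log c$ and $\alpha_\cdot$ enter.

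Finally, for the variance I would note that $(1+Z_i/\hat Q_x(\tau_n))^{-1}\in(0,1]$ is bounded, so the same second-moment computation as in~\eqref{eq:var-tb} and Lemma~\ref{lem:f-2}, using $\frac{n}{k}\P(Y_i>Q_{X_i}(\tau_n)(1-\eta)\mid X_i) = \mathcal{O}(1)$ from Lemma~\ref{lem:reg-var-1}, gives $\E[\tilde T_{1,b}(x)^2]=\mathcal{O}(n/k)$ and hence $\V[\tilde T_{1,b}(x)]=\mathcal{O}(n/k)$. Lemma~\ref{lem:varforest} and the rate choices~\eqref{eq:n-k-s} then yield $\V[\tilde T_1(x)]\le\frac{s}{n}\V[\tilde T_{1,b}(x)]=\mathcal{O}(s/k)\to0$, which together with the convergence of the mean completes the proof.
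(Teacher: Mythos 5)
Your overall strategy is sound and rests on the same toolkit as the paper: work on the event $A_n(\eta)$ from~\eqref{eq:event}, reduce the forest to a single tree by symmetry and Lemma~\ref{lem:varforest}, identify the limit through the regular-variation integral $\E[V^{\xi(x)}]=(1+\xi(x))^{-1}$, use Corollary~\ref{cor:flucts-2} so that the truncation at $\delta_n$ lets you swap parameters at $X_i$ for parameters at $x$, and kill the variance with the rate $s/k\to0$ from~\eqref{eq:n-k-s}. The organization, however, is genuinely different from the paper's: the paper never computes moments of $\tilde T_1(x)$ itself. It telescopes the summand through the chain $(1+Z_i/\hat Q_x(\tau_n))^{-1}\approx \hat Q_{X_i}(\tau_n)/Y_i\approx Q_{X_i}(\tau_n)/Y_i\approx Q_x(\tau_n)/Q_x(U_i)$ together with the indicator replacement $\1\{Z_i>0\}\approx\1\{U_i>\tau_n\}$, splitting $\P\bigl(|\tilde T_1(x)-(1+\xi(x))^{-1}|>\varepsilon\bigr)$ into a main term (Lemma~\ref{lem:g-1}) and four error terms (Lemmas~\ref{lem:g-2}--\ref{lem:g-5}); only the main term, in which every estimated quantile has already been replaced by a population quantity evaluated at $x$, is handled by the moment method you describe. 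Your merged single computation buys brevity, but it forces all the replacement errors into one integral, and that is exactly where your sketch needs repair.

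Two steps, as written, would fail. First, the order of operations in your expectation step is backwards: honesty gives $Y_i \ind w_{i,b}(x)\mid X_i$, but your integrand $\1\{Z_i>0\}(1+Z_i/\hat Q_x(\tau_n))^{-1}$ is not a function of $(X_i,Y_i)$ alone, since $\hat Q_{X_i}(\tau_n)$ and $\hat Q_x(\tau_n)$ are fitted on the same training data (cf.\ Remark~\ref{rem:consistency}) and are therefore correlated with the splits and with every response. Consequently $\E\bigl[w_{i,b}(x)\,\1\{Z_i>0\}(1+Z_i/\hat Q_x(\tau_n))^{-1}\bigr]$ does not factor through $\E[\,\cdot\mid X_i]$, and the "conditional expectation" you invoke is itself a random quantity. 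The fix is the one the paper uses in Lemmas~\ref{lem:denom} and~\ref{lem:f-2}: on $A_n(\eta)$, first dominate the summand by quantities involving only the deterministic thresholds $Q_\cdot(\tau_n)(1\pm\eta)$, then apply honesty and Lemma~\ref{lem:varforest} to those bounds, and conclude via Markov and the arbitrariness of $\eta$; the sandwich must precede the honesty factorization, not follow it. Second, the sandwich itself is subtler than "bound the indicator and $\hat Q_x(\tau_n)$": the summand is the product of an indicator that is \emph{decreasing} in $\hat Q_{X_i}(\tau_n)$ and a fraction that is \emph{increasing} in $\hat Q_{X_i}(\tau_n)$ (through $Z_i$, which you must also de-randomize) and in $\hat Q_x(\tau_n)$, so it is not monotone in the estimated quantiles. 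Because both factors are non-negative you may still multiply bounds, but a valid upper bound pairs the indicator threshold $Q_{X_i}(\tau_n)(1-\eta)$ with the fraction evaluated at $Q_{X_i}(\tau_n)(1+\eta)$ and $Q_x(\tau_n)(1+\eta)$, and conversely for the lower bound. With these two repairs the limits of the sandwich bounds depend on $\eta$ and squeeze to $(1+\xi(x))^{-1}$ as $\eta\downarrow0$, and your variance argument (boundedness of the fraction, the computation in~\eqref{eq:var-tb}, and $s/k\to0$) goes through unchanged.
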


  \begin{proof}
    Fix $\varepsilon, \eta > 0$. Define the random variable $\1_{i, \delta_n}(x) \coloneqq \1\left\{ \norm{X_i - x}_2 \leq \delta_n  \right\}$ and let $A_n(\eta)$ denote the event defined in~\eqref{eq:event}. 
    We can rewrite
    \begin{align*}
      \P&\left( \left|\tilde{T}_1(x) -  \frac{1}{1 + \xi(x)}\right| > \varepsilon\right)\\
      \leq &\
      \P\left(\left|\frac{n}{k}\sum_{i = 1}^{n} w_i(x) \1_{i, \delta_n}(x)\1\{U_i > \tau_n\}\frac{Q_x(\tau_n)}{Q_x(U_i)} - \frac{1}{1 + \xi(x)}\right| > \frac{\varepsilon}{5} \right)
      \tag{$I$}\label{eq:g-t1}\\
      &+ \P\left(\sum_{i = 1}^{n} w_i(x) \1_{i, \delta_n}(x)\1\{U_i > \tau_n\}\frac{n}{k}\left|\frac{Q_{X_i}(\tau_n)}{Y_i}-\frac{Q_x(\tau_n)}{Q_x(U_i)}\right| > \frac{\varepsilon}{5} \right)
      \tag{$II$}\label{eq:g-t2}\\
      &+ \P\left(\sum_{i = 1}^{n} w_i(x) \1_{i, \delta_n}(x)\frac{n}{k}
      \left|\1\{U_i > \tau_n\} - \1\{Y_i > \hat{Q}_{X_i}(\tau_n)\}\right|
      \left|\frac{Q_{X_i}(\tau_n)}{Y_i}\right| > \frac{\varepsilon}{5} , A_n(\eta) \right)
      \tag{$III$}\label{eq:g-t3}\\
      &+\P\left(\sum_{i = 1}^{n} w_i(x) \1_{i, \delta_n}(x)\1\{Y_i > \hat{Q}_{X_i}(\tau_n)\}\frac{n}{k}\left|\frac{\hat{Q}_{X_i}(\tau_n)}{Y_i}-\frac{Q_{X_i}(\tau_n)}{Y_i}\right| > \frac{\varepsilon}{5} , A_n(\eta) \right)
      \tag{$IV$}\label{eq:g-t4}\\
      &+\P\left(\sum_{i = 1}^{n} w_i(x) \1_{i, \delta_n}(x)\1\{Y_i > \hat{Q}_{X_i}(\tau_n)\}\frac{n}{k}\left|
        \frac{1}{1+Z_i/\hat{Q}_x(\tau_n)}-\frac{\hat{Q}_{X_i}(\tau_n)}{Y_i}\right| > \frac{\varepsilon}{5} , A_n(\eta) \right)
      \tag{$V$}\label{eq:g-t5}\\
      &+\P\left( A_n(\eta)^c \right).
    \end{align*}

    Consider term~\eqref{eq:g-t1}. Using Lemma~\ref{lem:g-1}, it holds that~\eqref{eq:g-t1}$\to 0$ as $n\to \infty$.

    Consider term~\eqref{eq:g-t2}. Using Lemma~\ref{lem:g-2}, it holds that~\eqref{eq:g-t2}$\to 0$ as $n\to \infty$.

    Consider term~\eqref{eq:g-t3}. Using Lemma~\ref{lem:g-3}, it holds that~\eqref{eq:g-t3}$\to 0$ as $n\to \infty$.

    Consider term~\eqref{eq:g-t4}. Using Lemma~\ref{lem:g-4}, it holds that~\eqref{eq:g-t4}$\to 0$ as $n\to \infty$.

    Consider term~\eqref{eq:g-t5}. Using Lemma~\ref{lem:g-5}, it holds that~\eqref{eq:g-t5}$\to 0$ as $n\to \infty$.

    Putting everything together, we have that $\tilde{T}_1(x) \stackrel{\P}{\to} (1+\xi(x))^{-1}$.

  \end{proof}

  \begin{lemma}[Remainder term of $g_n$]\label{lem:gn-2}
    It holds that
    \begin{align*}
      \sum_{i = 1}^{n} w_i(x) \1\{\norm{X_i - x}_2 > \delta_n \}\1\{Z_i > 0\}\frac{n}{k} \stackrel{\P}{\to} 0.
    \end{align*}
  \end{lemma}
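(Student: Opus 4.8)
The plan is to prove the stronger statement that the expectation of the left-hand side vanishes, which suffices because the quantity is nonnegative. Write $T(x) \coloneqq \frac{n}{k}\sum_{i=1}^n w_i(x)\1\{\norm{X_i-x}_2 > \delta_n\}\1\{Z_i > 0\}$ with $\delta_n \coloneqq C_1 s^{-0.51 C_3}$ as in the proof of Proposition~\ref{prop:gn}. Fixing $\varepsilon,\eta > 0$ and splitting on the event $A_n(\eta)$ from~\eqref{eq:event}, I would first use $\P(A_n(\eta)^c)\to 0$ and, on $A_n(\eta)$, the bound $\1\{Z_i>0\}=\1\{Y_i>\hat Q_{X_i}(\tau_n)\}\le \1\{Y_i>Q_{X_i}(\tau_n)(1-\eta)\}$, exactly as in Lemma~\ref{lem:denom} and Lemma~\ref{lem:f-2}. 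This reduces the problem to showing $\E[\tilde T(x)]\to 0$, where $\tilde T(x)$ replaces $\1\{Z_i>0\}$ by the deterministic-threshold indicator $\1\{Y_i>Q_{X_i}(\tau_n)(1-\eta)\}$; Markov's inequality then gives $\P(\tilde T(x)>\varepsilon)\le \E[\tilde T(x)]/\varepsilon$.

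The key geometric observation is that a far-away observation can only be weighted if it lies in the leaf containing $x$, which forces that leaf to be large. Passing to a single tree $b$ (the forest and tree expectations coincide), and using honesty so that $Y_i\ind w_{i,b}(x)$ given $X_i$ together with the regular-variation bound of Lemma~\ref{lem:reg-var-1}, for any $\zeta>0$ and $n$ large enough I would obtain
\[
\E[\tilde T_b(x)] \le (1-\eta)^{-1/\xi_- - \zeta}\,\E\Big[\sum_{i=1}^n w_{i,b}(x)\1\{\norm{X_i-x}_2 > \delta_n\}\Big].
\]
Since $w_{i,b}(x)>0$ and $\norm{X_i-x}_2>\delta_n$ imply $X_i\in L_b(x)$ and hence $\diam(L_b(x))\ge \norm{X_i-x}_2>\delta_n$, and since the tree weights sum to one, the inner sum is bounded above by $\1\{\diam(L_b(x))>\delta_n\}$.

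It then remains to invoke the leaf-diameter bound~\eqref{eq:diam}, which with the exact choice $\delta_n = C_1 s^{-0.51 C_3}$ yields $\P(\diam(L_b(x))>\delta_n)\le C_2 s^{-0.50 C_3}$. Hence $\E[\tilde T_b(x)]\le (1-\eta)^{-1/\xi_- - \zeta}\,C_2\, s^{-0.50 C_3}\to 0$ as $s=n^{\beta_s}\to\infty$, and the claim follows since $\varepsilon,\eta$ are arbitrary. The main obstacle — and the reason the statement is nontrivial despite the $n/k$ prefactor diverging — is that this prefactor must be absorbed \emph{before} estimating the diameter: the quantity $\frac{n}{k}\P(Y_i>Q_{X_i}(\tau_n)(1-\eta)\mid X_i)=O(1)$ supplied by honesty and regular variation is exactly what cancels the $n/k$, after which the vanishing leaf diameter does the rest. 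Note that, unlike Lemma~\ref{lem:denom}, no variance computation is needed here, because nonnegativity lets a single first-moment bound close the argument.
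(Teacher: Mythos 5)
Your proposal is correct and follows essentially the same route as the paper's proof: the same reduction to the event $A_n(\eta)$ with the deterministic-threshold indicator $\1\{Y_i > Q_{X_i}(\tau_n)(1-\eta)\}$, the same tree-level argument combining honesty, the uniform regular-variation bound of Lemma~\ref{lem:reg-var-1}, and the observation that a positively weighted far-away observation forces $\diam(L_b(x)) > \delta_n$, concluded via~\eqref{eq:diam}. The only difference is that you drop the paper's additional variance bound (via Lemma~\ref{lem:varforest}), and this is a legitimate simplification: since the dominating quantity is nonnegative, your first-moment bound plus Markov's inequality already yields convergence in probability, so the paper's second-moment computation is redundant here.
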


  \begin{proof}
    This proof follows closely the proof of Lemma~\ref{lem:f-2}.
    Fix $\eta > 0$. On the event $A_n(\eta)$ defined in~\eqref{eq:event}, it holds
    \begin{align*}
      0 
      \leq &\
      \sum_{i = 1}^{n} w_i(x)\frac{n}{k}  \1\{\norm{X_i - x}_2 > \delta_n \}\1\{Y_i > \hat{Q}_{X_i}(\tau_n)\} \\
      \leq &\
      \sum_{i = 1}^{n} w_i(x)\frac{n}{k}  \1\{\norm{X_i - x}_2 > \delta_n \}\1\{Y_i > {Q}_{X_i}(\tau_n)(1-\eta)\} 
      \coloneqq T(x).
    \end{align*}
  We will show that $\E[T(x)] \to 0$ and $\V[T(x)] \to 0$ on the event $A_n(\eta)$.
  Fix a tree $b = 1, \dots, B$, and define
  \begin{align*}
    T_{b}(x) 
    \coloneqq 
    \sum_{i = 1}^{n} w_{i, b}(x)\frac{n}{k}  \1\{\norm{X_i - x}_2 > \delta_n \}\1\{Y_i > {Q}_{X_i}(\tau_n)(1-\eta)\} .
  \end{align*}
  We now consider the expectation of $T_b(x)$. 
  Fix $\zeta > 0$.
  With similar arguments as in Lemma~\ref{lem:f-2}, we have 
  \begin{align*}
    0 \leq \E\left[T_{b}(x)  \right]
    = &\  \sum_{i = 1}^{n} \E \left[ w_{i, b}(x) \1\left\{\norm{X_i - x}_2 > \delta_n\right\}  \frac{n}{k}\E\left[  \1\{Y_i > Q_{X_i}(\tau_n)(1 - \eta)\} \mid X_i\right]\right]\\
    < &\ (1 - \eta)^{-1/\xi_{-}-\zeta}\ \E \left[ \sum_{i = 1}^{n}  w_{i, b}(x)  \1\left\{\norm{X_i - x}_2 > \delta_n\right\}  \right]\\
    \leq &\ (1 - \eta)^{-1/\xi_{-}-\zeta}\ \P\left(\diam(L_b(x)) > \delta_n \right).
  \end{align*}
    Notice that for every observation $i$ satisfying that $w_{i, b}(x) > 0$ we have that $X_i \in L_b(x)$ and so $\norm{X_i - x}_2 \leq \diam(L_b(x))$.
    Therefore, the random variable $\1\{\diam(L_b(x)) > \delta_n\} \geq \1\{\norm{X_i - x}_2 > \delta_n\}$.
  The expectation of the forest $T(x)$ is equal to the expectation of a single tree $T_{b}(x)$. Furthermore, by~\eqref{eq:diam}, it holds that $\P\left(\diam(L_b(x)) > \delta_n \right) \to 0$ as $n \to \infty$.
  Since $\eta, \zeta > 0$ are arbitrary, it follows that $\E[T(x)] \to 0$.

  We now consider the variance of $T_{b}$.
  Fix $\zeta > 0$.
  With similar arguments as in Lemma~\ref{lem:f-2}, we have
  \begin{align*}
    \V&\left[ T_{b}(x) \right]
      \leq  \E\left[ T_{b}(x)^2 \right]\\
      < &\ \E\left[ \left(  \sum_{i = 1}^{n} w_{i, b}(x)  \frac{n}{k}  \1\{Y_i > Q_{X_i}(\tau_n)(1 - \eta)\} \1\left\{ \norm{X_i - x}_2 > \delta_n \right\}\right)^2 \right]\\
      \leq &\ \P\left(\diam(L_b(x))>\delta_n \right) \left( \frac{n}{k} (1 - \eta)^{-1/\xi_{-}-\zeta} + (1 - \eta)^{-2/\xi_{-}-2\zeta} \right).
  \end{align*}
  Using Lemma~\ref{lem:varforest}, the variance of the forest is at most $s/n$ the variance of a tree. Therefore, using~(3.6), we have that
  \begin{align*}
    \V\left[ T(x) \right] 
    \leq &\ \frac{s}{n} \V\left[ T_{b}(x) \right]\\ 
    \leq &\ \P\left(\diam(L_b(x))>\delta_n \right)  \left( \frac{s}{k} (1 - \eta)^{-1/\xi_{-}-\zeta} + \frac{s}{n}(1 - \eta)^{-2/\xi_{-}-2\zeta} \right) \to 0,
  \end{align*}
  as $n \to \infty$.
  \end{proof}

  \begin{lemma}[Term~(\ref{eq:g-t1}) of $g_n$]\label{lem:g-1}
    It holds that 
    \begin{align*}
      \frac{n}{k}\sum_{i = 1}^{n} w_i(x)\1\left\{ \norm{X_i - x}_2 \leq \delta_n  \right\}\1\{U_i > \tau_n\}\frac{Q_x(\tau_n)}{Q_x(U_i)} \stackrel{\P}{\to} \frac{1}{1 + \xi(x)}.
    \end{align*}
  \end{lemma}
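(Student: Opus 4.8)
The plan is to reduce the claim to a first- and second-moment computation at the level of a single tree and then transfer to the forest via Lemma~\ref{lem:varforest}. Write $T(x)$ for the left-hand side and, for a fixed tree $b$, set
\[
  T_b(x) \coloneqq \frac{n}{k}\sum_{i=1}^{n} w_{i,b}(x)\,\1\{\norm{X_i-x}_2\le\delta_n\}\,\1\{U_i>\tau_n\}\,\frac{Q_x(\tau_n)}{Q_x(U_i)},
\]
so that $\E[T(x)]=\E[T_b(x)]$ and, by Lemma~\ref{lem:varforest}, $\V[T(x)]\le (s/n)\V[T_b(x)]$. I would then show $\E[T_b(x)]\to 1/(1+\xi(x))$ and $\V[T_b(x)]=\mathcal O(n/k)$, which by $s/k\to 0$ (cf.\ Remark~\ref{rem:consistency}(i)) forces $\V[T(x)]\to 0$ and yields the claim.

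For the expectation I would use honesty, which gives $U_i\ind w_{i,b}(x)$ conditionally on $X_i$, together with $U_i\ind X_i$, to factor the inner expectation. Conditioning on $U_i>\tau_n$, the variable $W_i\coloneqq(1-U_i)/(1-\tau_n)$ is uniform on $(0,1)$, and the regular-variation representation~\eqref{eq:reg-var}, $Q_x(\tau)=(1-\tau)^{-\xi(x)}\ell_x((1-\tau)^{-1})$, gives
\[
  \frac{Q_x(\tau_n)}{Q_x(U_i)} = W_i^{\xi(x)}\,R_n(W_i),\qquad R_n(w)\coloneqq\frac{\ell_x\!\big((n/k)/w\big)}{\ell_x(n/k)}.
\]
Since $\P(U_i>\tau_n)=k/n$, the $n/k$ prefactor cancels and $\E[T_b(x)]=\E[W^{\xi(x)}R_n(W)]\sum_i\E\big[w_{i,b}(x)\1\{\norm{X_i-x}_2\le\delta_n\}\big]$. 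The slowly varying property gives $R_n(w)\to 1$ pointwise, and Potter's bounds (together with the uniform slow variation from Assumption~\ref{ass:lipschitz}) provide an integrable majorant of the form $w^{\xi(x)-\varepsilon}$ on $(0,1)$, so dominated convergence yields $\E[W^{\xi(x)}R_n(W)]\to\int_0^1 w^{\xi(x)}\,\d w=1/(1+\xi(x))$. For the remaining mass factor, the forest weights satisfy $\sum_i w_{i,b}(x)=1$ and are supported on $L_b(x)$; on the event $\{\diam(L_b(x))\le\delta_n\}$ every index with positive weight satisfies $\norm{X_i-x}_2\le\delta_n$, so the factor equals $1$ there, while its complement has probability at most $C_2 s^{-0.50C_3}\to 0$ by~\eqref{eq:diam}. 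Hence $\sum_i\E[w_{i,b}(x)\1\{\norm{X_i-x}_2\le\delta_n\}]\to 1$ and $\E[T_b(x)]\to 1/(1+\xi(x))$.

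For the variance I would bound $\V[T_b(x)]\le\E[T_b(x)^2]$ and split into diagonal and off-diagonal terms. By honesty the $U_i$ are conditionally independent given the $X_i$ and the weights, so each off-diagonal term factors and, since $\sum_i w_{i,b}(x)\1\{\norm{X_i-x}_2\le\delta_n\}\le 1$, their total contribution is $\mathcal O(1)$. The diagonal term produces a factor $(n/k)\,\E[W^{2\xi(x)}R_n(W)^2]\to (n/k)/(2\xi(x)+1)$ multiplied by $\sum_i\E[w_{i,b}(x)^2\1\{\norm{X_i-x}_2\le\delta_n\}]$, and since $\sum_i w_{i,b}(x)^2=1/|L_b(x)|\le 1$, this is $\mathcal O(n/k)$. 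Thus $\V[T_b(x)]=\mathcal O(n/k)$ and $\V[T(x)]\le(s/n)\,\mathcal O(n/k)=\mathcal O(s/k)\to 0$.

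The main obstacle I anticipate is making the passage $\E[W^{\xi(x)}R_n(W)]\to 1/(1+\xi(x))$ rigorous uniformly enough: the ratio $R_n(W)$ evaluates the slowly varying function at the argument $(n/k)/W$, which ranges over all of $[n/k,\infty)$ as $W$ ranges over $(0,1)$, so pointwise convergence alone is insufficient near $W=0$. This is precisely where Potter-type bounds and the uniform control of $\alpha_x$ and $\tilde\alpha_x$ from Assumption~\ref{ass:lipschitz} are needed to secure an integrable dominating function and, with it, dominated convergence; the same domination is what keeps the second-moment constant $\E[W^{2\xi(x)}R_n(W)^2]$ finite and convergent to $1/(2\xi(x)+1)$.
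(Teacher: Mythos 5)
Your proposal is correct, and its mathematical core coincides with the paper's: both arguments reduce the claim to first- and second-moment computations for a single tree (honesty to decouple $U_i$ from the weights, a Potter-type bound on the regularly varying quantile function to get the limit $1/(1+\xi(x))$ for the first moment and the $\mathcal{O}(n/k)$ bound for the second), and then transfer to the forest via Lemma~\ref{lem:varforest} together with $s/k\to 0$ from~\eqref{eq:n-k-s}. The one organizational difference is how the locality indicator $\1\{\norm{X_i-x}_2\le\delta_n\}$ is handled: the paper first strips it off, bounding $Q_x(\tau_n)/Q_x(U_i)\le 1$ and sending the resulting remainder term~\eqref{eq:gn-conv-ii} to Lemma~\ref{lem:gn-2}, and then analyzes the ``clean'' term~\eqref{eq:gn-conv-i} without the indicator; you instead keep the indicator throughout and show the mass factor $\sum_i\E[w_{i,b}(x)\1\{\norm{X_i-x}_2\le\delta_n\}]\to 1$ directly on the event $\{\diam(L_b(x))\le\delta_n\}$, whose probability tends to one by~\eqref{eq:diam}. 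Since Lemma~\ref{lem:gn-2} itself rests on the same diameter bound, the two routes are equivalent in substance; yours is slightly more self-contained, the paper's is more modular (the remainder lemma is reused elsewhere). Your evaluation of the limit via the change of variables $W=(1-U)/(1-\tau_n)\sim\mathrm{Unif}(0,1)$ and dominated convergence is the same computation the paper performs as a direct integral $\int_{\tau_n}^1 Q_x(\tau_n)/Q_x(u)\,\d u$ with the bound $V_x(ty)/V_x(t)\ge(1-\varepsilon)y^{\xi(x)-\varepsilon}$. One small slip: your $R_n$ is inverted; with your convention one has $Q_x(\tau_n)/Q_x(U_i)=W_i^{\xi(x)}\,\ell_x(n/k)/\ell_x\bigl((n/k)/W_i\bigr)$, not $W_i^{\xi(x)}\,\ell_x\bigl((n/k)/W_i\bigr)/\ell_x(n/k)$. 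This is harmless, since Potter bounds control the ratio of slowly varying functions in both directions and your dominating function $w^{\xi(x)-\varepsilon}$ remains valid.
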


  \begin{proof}
    Fix $\varepsilon > 0$, and consider
    \begin{align*}
      \P&\left( \left|\frac{n}{k}\sum_{i = 1}^{n} w_i(x)\1\left\{ \norm{X_i - x}_2 \leq \delta_n  \right\}\1\{U_i > \tau_n\}\frac{Q_x(\tau_n)}{Q_x(U_i)} - \frac{1}{1 + \xi(x)} \right| > \varepsilon \right)
      \\
      \leq &\
      \P\left( \left|
        \frac{n}{k}\sum_{i = 1}^{n} w_i(x)\1\{U_i > \tau_n\}\frac{Q_x(\tau_n)}{Q_x(U_i)} 
        - \frac{1}{1 + \xi(x)} \right| > \frac{\varepsilon}{2} \right)\label{eq:gn-conv-i}\numberthis
        \\
      &+
      \P\left(\left|
        \frac{n}{k}\sum_{i = 1}^{n} w_i(x)\left( \1\left\{ \norm{X_i - x}_2 \leq \delta_n  \right\} - 1 \right)\1\{U_i > \tau_n\}\frac{Q_x(\tau_n)}{Q_x(U_i)} \right| > \frac{\varepsilon}{2} \right).\label{eq:gn-conv-ii}\numberthis 
    \end{align*}
    Consider~\eqref{eq:gn-conv-ii}. We can upper bound it by
    \begin{align*}
      \P& \left(\sum_{i = 1}^{n} w_i(x)\left| \1\left\{ \norm{X_i - x}_2 \leq \delta_n  \right\} - 1 \right|\1\{U_i > \tau_n\}\frac{n}{k} > \frac{\varepsilon}{2}  \right)
      \\
      = &\ P\left(\sum_{i = 1}^{n} w_i(x)\1\left\{ \norm{X_i - x}_2 > \delta_n  \right\} \1\{U_i > \tau_n\}\frac{n}{k} > \frac{\varepsilon}{2}  \right) \to 0,
    \end{align*}
    by Lemma~\ref{lem:gn-2}. 
   
    Consider~\eqref{eq:gn-conv-i}.
    Define $T(x) \coloneqq\sum_{i = 1}^{n} w_i(x) \frac{n}{k}\1\{U_i > \tau_n\}\frac{Q_x(\tau_n)}{Q_x(U_i)}$. We will show that $\E[T(x)] \to (1 + \xi(x))^{-1}$ and $\V[T(x)] \to 0$ as $n \to \infty$.

    First, we show that 
    \begin{align}\label{eq:exp-gn}
      \E\left[\frac{n}{k} \1\{U_i > \tau_n\}\frac{Q_x(\tau_n)}{Q_x(U_i)}\right] \to \frac{1}{1 + \xi(x)},
    \end{align}
    as $n \to \infty$.
    Let $t_n = 1/(1-\tau_n) \to \infty$ as $n\to\infty$, and $y_n(u) = (1-\tau_n)/(1-u)$, which is greater or equal to 1 for $u \geq \tau_n$.  Furthermore, define $V_x(t) \coloneqq Q_x(1 - 1/t)$. Then, for any fixed $\varepsilon > 0$ there exists a sample size $n_0$ such that for all $n > n_0$ we have that 
    \begin{align*}
      \E&\left[  \1\{U_i > \tau_n\}\frac{Q_x(\tau_n)}{Q_x(U_i)}   \right]
      = 
      \int_{\tau_n}^{1} \frac{Q_x(\tau_n)}{Q_x(u)}  \mathrm{d}u
      \\
      = &\
      \int_{\tau_n}^{1} \frac{V_x(t_n)}{V_x(t_n y_n(u))}  \mathrm{d}u
      \\
      \leq &\ \frac{1}{1-\varepsilon} \int_{\tau_n}^{1} y_n(u)^{-\xi(x) + \varepsilon}  \mathrm{d}u \label{eq:V-reg-var}\numberthis
      \\
      = &\ \frac{1}{1-\varepsilon} \int_{\tau_n}^{1}\left(\frac{1-u}{1-\tau_n} \right)^{\xi(x) - \varepsilon} \mathrm{d}u 
      \\
      = &\
      \frac{1}{1-\varepsilon} \left[  \frac{\tau_n-1}{\xi(x) -\varepsilon +1} \left(\frac{1-u}{1-\tau_n} \right)^{\xi(x) - \varepsilon+1} \right]_{\tau_n}^1
      \\
      = &\ \frac{1}{1-\varepsilon} \left[  \frac{1-\tau_n}{\xi(x) -\varepsilon +1}  \right]
      \\
      = &\      
      \frac{k}{n} \frac{1}{(1+\xi(x) -\varepsilon)(1-\varepsilon)}.
    \end{align*}
    In~\eqref{eq:V-reg-var}, we use that $V_x$ is regularly varying at infinity with index $\xi(x)$, and the corresponding bound $V_x(ty)/V_x(t) \geq (1-\varepsilon)y^{\xi(x)-\varepsilon}$ for $t\geq t_0$ and $y\geq 1.$
    The lower bound can be established similarly.
    
    Furthermore, we have that
    \begin{align*}
      \E&\left[  \1\{U_i > \tau_n\}\left( \frac{Q_x(\tau_n)}{Q_x(U_i)} \right)^2   \right]
      = 
      \int_{\tau_n}^{1} \left( \frac{Q_x(\tau_n)}{Q_x(U_i)} \right)^2   \mathrm{d}u
      \\
      \leq &\ \left( \frac{1}{1-\varepsilon} \right)^2 \int_{\tau_n}^{1} y_n(u)^{-2\xi(x) + 2\varepsilon}  \mathrm{d}u
      \\
      = &\ \frac{k}{n} \frac{1}{(1+2\xi(x) -2\varepsilon)(1-\varepsilon)^2},
    \end{align*}
    so that we can upper bound 
    \begin{align}\label{eq:var-gn}
      \E\left[\left(\frac{n}{k} \1\{U_i > \tau_n\}\frac{Q_x(\tau_n)}{Q_x(U_i)}  \right)^2\right] = \mathcal{O}(n/k).
    \end{align}

    Define
    \begin{align*}
      T_b(x) \coloneqq \sum_{i = 1}^{n} w_{i, b}(x) \1\{U_i > \tau_n\}\frac{n}{k} \frac{Q_x(\tau_n)}{Q_x(U_i)}.
    \end{align*}
    Consider the expectation of $T_b(x)$. Honesty implies that $U_i \ind w_{i, b}(x)$ conditionally on $X_i$. Therefore, from convergence in~\eqref{eq:exp-gn}, for every $\varepsilon >0$ there exists a sample size $n_0$ such that for all $n > n_0$
    \begin{align*}
      \bigg|\E&[T_b(x)]- \frac{1}{1+\xi(x)}\bigg|\\ 
      \leq &\
      \sum_{i = 1}^{n}\E\left\{w_{i,b}(x)\left|\E\left[ \frac{n}{k}\1\{U_i > \tau_n\}\frac{Q_x(\tau_n)}{Q_x(U_i)} \mid X_i, w_{i, b}(x) \right] - \frac{1}{1 +\xi(x)}\right|\right\}\\
      = &\
      \sum_{i = 1}^{n}\E\left\{w_{i,b}(x) \left|\E\left[ \frac{n}{k}\1\{U_i > \tau_n\}\frac{Q_x(\tau_n)}{Q_x(U_i)}\right] - \frac{1}{1 +\xi(x)}\right|\right\}\\
      < &\ \E\left\{\sum_{i = 1}^{n}w_{i,b}(x) \varepsilon\right\}< \varepsilon.
    \end{align*}
    It follows that $\E[T_b(x)] \to \xi(x)$, and therefore $\E[T(x)] \to \xi(x)$, too.

    Consider the variance of $T_b(x)$.  Using~\eqref{eq:var-gn},
    we have that $\V[T_b(x)] = \mathcal{O}(n/k)$.
    Using Lemma~\ref{lem:varforest}, the variance of the forest $T(x)$ is at most $s/n$ the variance of a tree. Therefore, using~(3.6), we have that $\V[T(x)] \leq s/n \V[T_b(X)] \to 0$.

  \end{proof}
  \begin{lemma}[Term~(\ref{eq:g-t2}) of $g_n$]\label{lem:g-2}
    It holds that
    \begin{align*}
      \sum_{i = 1}^{n} w_i(x) \1_{i, \delta_n}(x)\1\{U_i > \tau_n\}\frac{n}{k}\left|\frac{Q_{X_i}(\tau_n)}{Y_i}-\frac{Q_x(\tau_n)}{Q_x(U_i)}\right| \stackrel{\P}{\to}0.
    \end{align*}
  \end{lemma}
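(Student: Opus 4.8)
The plan is to follow the same tree-expectation/tree-variance scheme as in Lemmas~\ref{lem:f-2} and~\ref{lem:f-7}: reduce the claim to controlling the expectation and variance of a single tree, and then pass to the forest via Lemma~\ref{lem:varforest}. First I would invoke the stochastic representation $Y_i \stackrel{d}{=} Q_{X_i}(U_i)$, so that on $\{U_i > \tau_n\}$ the quantity inside the absolute value equals $|R_{X_i} - R_x|$, where $R_v \coloneqq Q_v(\tau_n)/Q_v(U_i)$ for $v \in \{X_i, x\}$. Since $Q_v$ is strictly increasing and positive in the tail, both ratios lie in $(0,1]$ on $\{U_i > \tau_n\}$, whence the elementary bound $|a-b| \le \max(a,b)\,|\log a - \log b|$ gives $|R_{X_i} - R_x| \le |\log R_{X_i} - \log R_x|$. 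This linearization is what lets me recycle the machinery already used for the log-ratios in Lemma~\ref{lem:f-7}.

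The core step is to control this log-difference through Assumption~\ref{ass:lipschitz}. Substituting the normalized Karamata representation~\eqref{eq:reg-var}--\eqref{eq:slowly-karamata} into $\log R_v = \log Q_v(\tau_n) - \log Q_v(U_i)$, the constant $\log c(v)$ cancels, leaving $\log R_v = \xi(v)\log\tfrac{1-U_i}{1-\tau_n} - \int_{(1-\tau_n)^{-1}}^{(1-U_i)^{-1}} \alpha_v(t)/t\,\d t$. Subtracting the two instances and using $|\xi(X_i)-\xi(x)| \le L_\xi\norm{X_i-x}_2$ together with the uniform bound $\sup_{t\ge 1}|\alpha_{X_i}(t)-\alpha_x(t)| \le L_\alpha\norm{X_i-x}_2$ yields the clean estimate $|\log R_{X_i} - \log R_x| \le (L_\xi + L_\alpha)\norm{X_i-x}_2\,\log\tfrac{1-\tau_n}{1-U_i}$. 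Conditional on $\{U_i > \tau_n\}$ the factor $\log\tfrac{1-\tau_n}{1-U_i}$ has the representation $\stackrel{d}{=}E_i$ with $E_i \sim \mathrm{Exp}(1)$, exactly as in Lemma~\ref{lem:f-7}; in particular $\E[\tfrac{n}{k}\1\{U_i>\tau_n\}\log\tfrac{1-\tau_n}{1-U_i}] = 1$ and its square has expectation $\mathcal{O}(n/k)$. (The same bound follows from the triangle inequality and two applications of Lemma~\ref{lem:flucts-1}, at the cost of an extra $\log(n/k)$ factor, which does not change the conclusion.)

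With this, I would define the tree analogue $T_b(x)$ and bound it above by $(L_\xi+L_\alpha)\sum_i w_{i,b}(x)\norm{X_i-x}_2\,\1\{U_i>\tau_n\}\tfrac{n}{k}\log\tfrac{1-\tau_n}{1-U_i}$. Honesty gives $U_i \ind w_{i,b}(x)$ given $X_i$, and $U_i$ is independent of $X_i$; since $\norm{X_i-x}_2 \le \diam(L_b(x))$ whenever $w_{i,b}(x)>0$ and the weights sum to one, this gives $\E[T_b(x)] \le (L_\xi+L_\alpha)\,\E[\diam(L_b(x))] \to 0$ by Corollary~\ref{cor:diam}. For the variance, splitting $T_b(x)^2$ into diagonal and off-diagonal terms as in Lemma~\ref{lem:f-2} and using the second-moment bound $\mathcal{O}(n/k)$ above gives $\V[T_b(x)] = \mathcal{O}\big(\tfrac{n}{k}\,\E[\diam(L_b(x))^2]\big)$. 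Lemma~\ref{lem:varforest} and the rates $k=n^{\beta_k}$, $s=n^{\beta_s}$ from~\eqref{eq:n-k-s} then yield $\V[T(x)] \le \tfrac{s}{n}\V[T_b(x)] = \mathcal{O}\big(\tfrac{s}{k}\,\E[\diam(L_b(x))^2]\big) \to 0$, because $s/k = n^{\beta_s-\beta_k}\to 0$ and $\E[\diam(L_b(x))^2]\to 0$. Chebyshev's inequality finishes the proof.

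I expect the main obstacle to be the linearization and the cancellation in the log-difference rather than the subsequent bookkeeping. One must genuinely ensure $R_{X_i}, R_x \in (0,1]$ (positivity of the tail quantiles, which holds since $\tau_n\to 1$ under the heavy-tailed representation~\eqref{eq:reg-var}), and one must exploit the \emph{uniform} Lipschitz control of $\alpha_\cdot$ to bound the integral term uniformly over the whole range $[(1-\tau_n)^{-1},(1-U_i)^{-1}]$ rather than pointwise. Once the bound $|R_{X_i}-R_x| \le (L_\xi+L_\alpha)\norm{X_i-x}_2\log\tfrac{1-\tau_n}{1-U_i}$ is established, everything downstream is the routine expectation/variance argument already carried out in Lemmas~\ref{lem:f-2} and~\ref{lem:f-7}.
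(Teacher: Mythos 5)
Your proof is correct, and it reaches the conclusion by a genuinely cleaner route than the paper's at the key analytical step. The paper first splits $\bigl|Q_{X_i}(\tau_n)/Y_i - Q_x(\tau_n)/Q_x(U_i)\bigr|$ by the triangle inequality into an intermediate-level ratio term and a $U_i$-level ratio term, linearizes each with the bound $|t-1|\le 2|\log t|$ (valid only when $t$ is close to $1$, which is why the localization $\1_{i,\delta_n}(x)$ and the smallness of $C_n\delta_n$ are invoked), arriving at the bound $2C_n\norm{X_i-x}_2 + 2\bigl|\log\bigl(Q_x(U_i)/Q_{X_i}(U_i)\bigr)\bigr|$, and then disposes of the two resulting sums by the arguments of Lemmas~\ref{lem:f-2} and~\ref{lem:f-7}, respectively. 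You instead observe that both ratios lie in $(0,1]$ on $\{U_i>\tau_n\}$, so the unconditional inequality $|a-b|\le\max(a,b)\,|\log a-\log b|\le|\log a-\log b|$ applies, and you then exploit an exact cancellation in the normalized Karamata representation: both $\log c(\cdot)$ and the common scale of the two quantiles drop out, leaving the single bound $(L_\xi+L_\alpha)\norm{X_i-x}_2\log\tfrac{1-\tau_n}{1-U_i}$, whose conditional law given $\{U_i>\tau_n\}$ is exactly $\mathrm{Exp}(1)$. The downstream forest bookkeeping (honesty, $\E[T_b(x)]=\mathcal{O}(\E[\diam(L_b(x))])$, $\V[T_b(x)]=\mathcal{O}\bigl((n/k)\,\E[\diam(L_b(x))^2]\bigr)$, Lemma~\ref{lem:varforest}, and the rates in~\eqref{eq:n-k-s}) is identical in structure to the paper's, and dropping the indicator $\1_{i,\delta_n}(x)$ is harmless since all summands are non-negative. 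What your route buys: it needs neither the $\delta_n$-localization nor the ``sufficiently small'' caveat --- which in the paper is actually delicate for the $U_i$-level ratio, since $\log\tfrac{1}{1-U_i}$ is unbounded on the exceedance event --- and the resulting constants are cleaner (no $L_c$ term and no $\log(n/k)$ factors in the moment bounds, because they cancel). What the paper's route buys is maximal reuse of Lemmas~\ref{lem:f-2} and~\ref{lem:f-7} as black boxes, so the written proof of Lemma~\ref{lem:g-2} itself occupies only a few lines.
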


  \begin{proof}
    Let $i$ be an observation satisfying  $\norm{X_i - x}_2 < \delta_n$.
    For $n$ large enough,
    using Lemma~\ref{lem:flucts-1}, we can make $|\log(Q_x(\tau_n)) - \log(Q_{X_i}(\tau_n))| \leq C_n \delta_n$ arbitrarily small.
    Moreover, using the mean value theorem, it holds that $|x - 1| \leq 2 |\log(x)|$ when $x$ is sufficiently small. Therefore, we can use the following upper bound,
    \begin{align*}
      \left|\frac{Q_{X_i}(\tau_n)}{Y_i}-\frac{Q_x(\tau_n)}{Q_x(U_i)}\right|
      \leq &\
      \left|\frac{Q_{X_i}(\tau_n) - Q_x(\tau_n)}{Y_i} \right| + \left| \frac{Q_x(\tau_n)}{Y_i} - \frac{Q_x(\tau_n)}{Q_x(U_i)}\right|\\
      \leq &\
      \left|1 - \frac{Q_x(\tau_n)}{Q_{X_i}(\tau_n)} \right| + \left|\frac{Q_x(\tau_n)}{Q_x(U_i)}\right|\left|\frac{Q_x(U_i)-Q_{X_i}(U_i)}{Q_{X_i}(U_i)} \right|\\
      \leq &\ 2 \left|\log\left( \frac{Q_x(\tau_n)}{Q_{X_i}(\tau_n)} \right)\right| + 2 \left|\log\left( \frac{Q_x(U_i)}{Q_{X_i}(U_i)} \right)\right|\\
      \leq &\ 2C_n \norm{X_i - x}_2 + 2\left|\log\left( \frac{Q_x(U_i)}{Q_{X_i}(U_i)} \right)\right|.
    \end{align*}
    We can then split the term as follows,
    \begin{align*}
      0 \leq &\ \sum_{i = 1}^{n} w_i(x) \1_{i, \delta_n}(x)\1\{U_i > \tau_n\}\frac{n}{k}\left|\frac{Q_{X_i}(\tau_n)}{Y_i}-\frac{Q_x(\tau_n)}{Q_x(U_i)}\right|\\
      \leq &\
      2 \sum_{i = 1}^{n} w_i(x) \1_{i, \delta_n}(x)\1\{U_i > \tau_n\}\frac{n}{k}C_n \norm{X_i - x}_2\\
      &+ 2 \sum_{i = 1}^{n} w_i(x) \1_{i, \delta_n}(x)\1\{U_i > \tau_n\}\frac{n}{k}\left|\log\left( \frac{Q_x(U_i)}{Q_{X_i}(U_i)} \right)\right|\\
      \leq &\ S_1(x) + S_2(x).
    \end{align*}
    The term $S_1(x) \stackrel{\P}{\to} 0$,
      by very similar arguments to the proof of
      by Lemma~\ref{lem:f-2}. The term $S_2(x) \stackrel{\P}{\to} 0$, by very similar arguments to the proof of Lemma~\ref{lem:f-7}.
  \end{proof}

  \begin{lemma}[Term~(\ref{eq:g-t3}) of $g_n$]\label{lem:g-3}
    It holds that
    \begin{align*}
      \sum_{i = 1}^{n} w_i(x) \1_{i, \delta_n}(x)\frac{n}{k}
      \left|\1\{U_i > \tau_n\} - \1\{Y_i > \hat{Q}_{X_i}(\tau_n)\}\right|
      \left|\frac{Q_{X_i}(\tau_n)}{Y_i}\right| \stackrel{\P}{\to} 0.
    \end{align*}
    
  \end{lemma}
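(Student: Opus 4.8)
The plan is to mimic the argument for Lemma~\ref{lem:f-6}, which handled an almost identical indicator difference; the only change is that the unbounded logarithmic factor there is here replaced by the ratio $|Q_{X_i}(\tau_n)/Y_i|$, which turns out to be bounded on the relevant event and hence makes the argument even simpler. First I would fix $\eta > 0$ and work on the high-probability event $A_n(\eta)$ defined in~\eqref{eq:event}, recalling that $\P(A_n(\eta))\to 1$ by Assumption~\ref{ass:hatq_x-uniform-consistent} and Lemma~\ref{lem:flucts-1}. On $A_n(\eta)$ we have $Q_{X_i}(\tau_n)(1-\eta) < \hat Q_{X_i}(\tau_n) < Q_{X_i}(\tau_n)(1+\eta)$, so, exactly as in Lemma~\ref{lem:f-6},
\[
\left|\1\{U_i > \tau_n\} - \1\{Y_i > \hat Q_{X_i}(\tau_n)\}\right| \leq \1\{Q_{X_i}(\tau_n)(1-\eta) < Y_i < Q_{X_i}(\tau_n)(1+\eta)\}.
\]
Moreover, whenever this indicator is nonzero we have $Y_i > Q_{X_i}(\tau_n)(1-\eta) > 0$, so $|Q_{X_i}(\tau_n)/Y_i| < 1/(1-\eta)$. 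Dropping the harmless factor $\1_{i,\delta_n}(x) \leq 1$, this bounds the term of interest by
\[
T(x) \leq \frac{1}{1-\eta}\sum_{i=1}^n w_i(x)\frac{n}{k}\1\{Q_{X_i}(\tau_n)(1-\eta) < Y_i < Q_{X_i}(\tau_n)(1+\eta)\} =: \tilde T(x).
\]

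Next I would control $\E[\tilde T(x)]$ and $\V[\tilde T(x)]$ through a single-tree computation. Writing $\tilde T_b(x)$ for the analogue with tree weights $w_{i,b}(x)$ and using honesty (so that $Y_i \ind w_{i,b}(x)$ given $X_i$) together with Lemma~\ref{lem:reg-var-1}, for any $\zeta>0$ and $n$ large,
\[
\frac{n}{k}\P\!\left(Q_X(\tau_n)(1-\eta) < Y < Q_X(\tau_n)(1+\eta)\mid X=x\right) \leq (1-\eta)^{-1/\xi_- - \zeta} - (1+\eta)^{-1/\xi_- - \zeta} =: c(\eta,\zeta),
\]
so that $\E[\tilde T_b(x)] \leq c(\eta,\zeta)/(1-\eta)$ since the weights sum to one. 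The identical diagonal/off-diagonal split as in Lemma~\ref{lem:f-2} gives $\V[\tilde T_b(x)] = \mathcal{O}\big((n/k)\,c(\eta,\zeta)\big)$, and Lemma~\ref{lem:varforest} then yields $\V[\tilde T(x)] \leq (s/n)\V[\tilde T_b(x)] = \mathcal{O}\big((s/k)\,c(\eta,\zeta)\big) \to 0$ by~\eqref{eq:n-k-s}.

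The one subtlety to flag is that, unlike in Lemma~\ref{lem:f-2}, the expectation of the dominating term does \emph{not} vanish for fixed $\eta$: it converges to a positive constant of size $c(\eta,\zeta)$. Hence I cannot conclude via $\E[\tilde T(x)]\to 0$ alone. Instead I would combine $\V[\tilde T(x)]\to 0$ with a Chebyshev bound and then let $\eta,\zeta \downarrow 0$: for any $\varepsilon>0$ one chooses $\eta,\zeta$ small enough that $\limsup_n \E[\tilde T(x)] \leq c(\eta,\zeta)/(1-\eta) < \varepsilon/2$, whence $\limsup_n \P(\tilde T(x) > \varepsilon) = 0$; together with $\P(A_n(\eta)^c)\to 0$ this gives $T(x)\stackrel{\P}{\to}0$. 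This passage to the limit in $\eta$ is the only genuinely non-mechanical step, everything else being a direct transcription of the expectation/variance bookkeeping already carried out in Lemmas~\ref{lem:f-2} and~\ref{lem:f-6}.
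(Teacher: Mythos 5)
Your proposal is correct and follows essentially the same route as the paper's proof: working on the event $A_n(\eta)$ from~\eqref{eq:event}, bounding the indicator difference by the band indicator $\1\{Q_{X_i}(\tau_n)(1-\eta) < Y_i < Q_{X_i}(\tau_n)(1+\eta)\}$, bounding the ratio $|Q_{X_i}(\tau_n)/Y_i|$ by $1/(1-\eta)$, and then running the tree-level expectation/variance bookkeeping through honesty, Lemma~\ref{lem:reg-var-1}, and Lemma~\ref{lem:varforest}. The subtlety you flag at the end is real and well spotted: the paper disposes of it with the terse phrase ``with similar calculations as in Lemma~\ref{lem:f-2}, $\E[T(x)] \to 0$,'' but since the dominating term here lacks the vanishing factor $C_n\,\E[\diam(L_b(x))]$ present in Lemma~\ref{lem:f-2}, its mean only satisfies $\limsup_n \E[T(x)] \leq c(\eta,\zeta)$ for fixed $\eta,\zeta$, and the conclusion genuinely requires the final passage $\eta,\zeta \downarrow 0$ that you make explicit--so your write-up is, if anything, a more precise rendering of the intended argument.
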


  \begin{proof}
    Fix $\eta > 0$.
    On the event $A_n(\eta)$ defined in~\eqref{eq:event}, it holds for all $i = 1, \dots, n$,
    \begin{align*}
      \big|\1\{&Y_i > \hat{Q}_{X_i}(\tau_n)\} - \1\{Y_i > {Q}_{X_i}(\tau_n)\}\big|
      \\
      \leq &\ 
      \left|\1\{Y_i > {Q}_{X_i}(\tau_n)(1-\eta)\} - \1\{Y_i > {Q}_{X_i}(\tau_n)(1 + \eta)\}\right|
      \\
      = &\ \1\{{Q}_{X_i}(\tau_n)(1-\eta) < Y_i < {Q}_{X_i}(\tau_n)(1 + \eta)\}.
    \end{align*}
    Fix $\zeta > 0$. By Lemma~\ref{lem:reg-var-1}, for $n$ large enough and all $x \in \mathcal{X}$ we have 
    \begin{align*}
      \frac{n}{k} &\P\left( Q_X(\tau_n)(1 - \eta) < Y < Q_X(\tau_n)(1 + \eta) \mid X = x \right)
      \\
      = &\
      \frac{n}{k} \left[ \P\left( Y > Q_X(\tau_n)(1 - \eta)\mid X = x \right) - \P\left( Y > Q_X(\tau_n)(1 + \eta) \mid X = x \right) \right]
      \\
      \leq &\
      (1 - \eta)^{-1/\xi_{-} - \zeta}
      -
      (1 + \eta)^{-1/\xi_{-} - \zeta}.
    \end{align*}
    Therefore, on the event $A_n(\eta)$, we have that
    \begin{align*}
      0 \leq &\
      \sum_{i = 1}^{n} w_i(x) \1_{i, \delta_n}(x)\frac{n}{k}
      \left|\1\{U_i > \tau_n\} - \1\{Y_i > \hat{Q}_{X_i}(\tau_n)\}\right|
      \left|\frac{Q_{X_i}(\tau_n)}{\hat Q_{X_i}(\tau_n)} \frac{\hat Q_{X_i}(\tau_n)}{Y_i} \right|
      \\
      \leq &\
      \sum_{i = 1}^{n} w_i(x) \1_{i, \delta_n}(x)\frac{n}{k}\1\{{Q}_{X_i}(\tau_n)(1-\eta) < Y_i < {Q}_{X_i}(\tau_n)(1 + \eta)\}\left|\frac{1}{(1 - \eta)}\right|
      \\
      \eqqcolon &\ T(x).
    \end{align*}
    With similar calculations as in Lemma~\ref{lem:f-2}, it follows that
    $\E[T(x)] \to 0$ and $\V[T(x)] \to 0$ as $n \to \infty$.
  \end{proof}

  \begin{lemma}[Term~(\ref{eq:g-t4}) of $g_n$]\label{lem:g-4}
    It holds that
    \begin{align*}
      \sum_{i = 1}^{n} w_i(x) \1_{i, \delta_n}(x)\1\{Y_i > \hat{Q}_{X_i}(\tau_n)\}\frac{n}{k}\left|\frac{\hat{Q}_{X_i}(\tau_n)}{Y_i}-\frac{Q_{X_i}(\tau_n)}{Y_i}\right| \stackrel{\P}{\to} 0.
    \end{align*}
    
  \end{lemma}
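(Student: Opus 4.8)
The plan is to run the argument on the high-probability event $A_n(\eta)$ defined in~\eqref{eq:event}, on which the \emph{relative} error of the intermediate threshold is uniformly small. First I would rewrite the absolute difference inside the sum as
\[
  \left|\frac{\hat{Q}_{X_i}(\tau_n)}{Y_i}-\frac{Q_{X_i}(\tau_n)}{Y_i}\right| = \frac{Q_{X_i}(\tau_n)}{Y_i}\left|\frac{\hat Q_{X_i}(\tau_n)}{Q_{X_i}(\tau_n)} - 1\right|.
\]
On $A_n(\eta)$ the second factor is bounded by $\eta$ uniformly in $i$ by Assumption~\ref{ass:hatq_x-uniform-consistent} and Lemma~\ref{lem:flucts-1}. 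Moreover, the exceedance indicator together with $\hat Q_{X_i}(\tau_n) > Q_{X_i}(\tau_n)(1-\eta)$ forces $Y_i > Q_{X_i}(\tau_n)(1-\eta)$, so the first factor is at most $(1-\eta)^{-1}$, while the exceedance event $\{Y_i > \hat Q_{X_i}(\tau_n)\}$ is contained in $\{Y_i > Q_{X_i}(\tau_n)(1-\eta)\}$.

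Combining these bounds and discarding $\1_{i,\delta_n}(x)\le 1$, on $A_n(\eta)$ the sum of interest is dominated by
\[
  \frac{\eta}{1-\eta}\,\frac{n}{k}\sum_{i=1}^n w_i(x)\,\1\{Y_i > Q_{X_i}(\tau_n)(1-\eta)\} \eqqcolon \frac{\eta}{1-\eta}\, S(x).
\]
The quantity $S(x)$ is exactly of the form treated in Lemma~\ref{lem:denom}: a tree-level computation using honesty, the regular-variation tail bound of Lemma~\ref{lem:reg-var-1}, and the forest variance reduction of Lemma~\ref{lem:varforest} shows that $\E[S(x)]\le (1-\eta)^{-1/\xi_- -\zeta}$ for $n$ large and $\V[S(x)]\to 0$, so $S(x)$ is bounded in probability with a bound independent of $n$. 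The remaining step is a standard $\varepsilon$--$\eta$ argument: for fixed $\varepsilon>0$ I would split $\P(\,\cdot\, >\varepsilon)\le \P(\,\cdot\,>\varepsilon,\,A_n(\eta)) + \P(A_n(\eta)^c)$; the second term vanishes, and on $A_n(\eta)$ the first is controlled by $\P\bigl(\tfrac{\eta}{1-\eta}S(x)>\varepsilon\bigr)$. Since $\tfrac{\eta}{1-\eta}(1-\eta)^{-1/\xi_- -\zeta}\to 0$ as $\eta\to 0$, choosing $\eta$ small enough makes this probability tend to zero, which yields the claim.

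I do not expect a genuine obstacle here: the essential content is that the relative threshold error is uniformly $O(\eta)$ on $A_n(\eta)$, which factors a multiplicative $\eta$ out of every summand, while the residual weighted count of exceedances $S(x)$ is $O_\P(1)$ by the already-established denominator estimate. The only care needed is to route the bound through the $(1-\eta)$-inflated threshold so that the tail estimate of Lemma~\ref{lem:reg-var-1} applies verbatim, and then to send $n\to\infty$ before letting $\eta\to 0$.
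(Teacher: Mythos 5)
Your proof is correct and follows essentially the same route as the paper's: on the event $A_n(\eta)$ you factor the relative threshold error $\eta$ out of each summand via $\bigl|\hat Q_{X_i}(\tau_n)/Y_i-Q_{X_i}(\tau_n)/Y_i\bigr| = \tfrac{Q_{X_i}(\tau_n)}{Y_i}\bigl|\hat Q_{X_i}(\tau_n)/Q_{X_i}(\tau_n)-1\bigr| \le \tfrac{\eta}{1-\eta}$, reduce to the weighted exceedance count, which is bounded in probability by (the argument of) Lemma~\ref{lem:denom}, and let $\eta\to 0$ after $n\to\infty$. The only cosmetic difference is that the paper keeps the indicator $\1\{Y_i>\hat Q_{X_i}(\tau_n)\}$ and cites Lemma~\ref{lem:denom} directly, whereas you enlarge it to $\1\{Y_i>Q_{X_i}(\tau_n)(1-\eta)\}$ and rerun that lemma's internal tree-level computation (honesty, Lemma~\ref{lem:reg-var-1}, Lemma~\ref{lem:varforest}), which amounts to the same thing.
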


  \begin{proof}
    Fix $\eta > 0$ and let $i$ be an observation satisfying  $\norm{X_i - x}_2 < \delta_n$ and $Z_i > 0$. Therefore, on the event $A_n(\eta)$ defined in~\eqref{eq:event}, we can upper bound 
    \begin{align*}
      \bigg|&\frac{\hat{Q}_{X_i}(\tau_n) - Q_{X_i}(\tau_n)}{Y_i}\bigg|
      = 
      \left|\frac{\hat{Q}_{X_i}(\tau_n) - Q_{X_i}(\tau_n)}{Q_{X_i}(\tau_n)} \right| \left|\frac{Q_{X_i}(\tau_n)}{Y_i}\right|
      \leq
      \eta \left|\frac{Q_{X_i}(\tau_n)}{Q_{X_i}(\tau_n)(1 - \eta)}\right|
      = \frac{\eta}{1 - \eta} .
    \end{align*}
    We can then upper bound
    \begin{align*}
      0
      \leq &\
      \sum_{i = 1}^{n} w_i(x) \1_{i, \delta_n}(x)\1\{Y_i > \hat{Q}_{X_i}(\tau_n)\}\frac{n}{k}\left|\frac{\hat{Q}_{X_i}(\tau_n)}{Y_i}-\frac{Q_{X_i}(\tau_n)}{Y_i}\right|
      \\
      \leq &\
      \sum_{i = 1}^{n} w_i(x) \1_{i, \delta_n}(x)\1\{Y_i > \hat{Q}_{X_i}(\tau_n)\}\frac{n}{k}\left( \frac{\eta}{1 - \eta} \right)
      \eqqcolon S(x).
    \end{align*}
    By Lemma~\ref{lem:denom}, the term $S(x) \stackrel{\P}{\to}0$.
  \end{proof}

  \begin{lemma}[Term~(\ref{eq:g-t5}) of $g_n$]\label{lem:g-5}
    It holds that
    \begin{align*}
      \sum_{i = 1}^{n} w_i(x) \1_{i, \delta_n}(x)\1\{Y_i > \hat{Q}_{X_i}(\tau_n)\}\frac{n}{k}\left|
        \frac{1}{1+Z_i/\hat{Q}_x(\tau_n)}-\frac{\hat{Q}_{X_i}(\tau_n)}{Y_i}\right| \stackrel{\P}{\to}0.
    \end{align*}
    
  \end{lemma}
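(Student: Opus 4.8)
The plan is to follow the same template as Lemmas~\ref{lem:f-2}, \ref{lem:g-3} and~\ref{lem:g-4}: restrict attention to the high-probability event $A_n(\eta)$ of~\eqref{eq:event}, dominate the summand by a deterministic-in-$\eta$ multiple of a denominator-type sum, and finish by letting $\eta\to 0$. The whole difficulty is concentrated in one algebraic simplification. Writing $a=\hat Q_x(\tau_n)$, $b=\hat Q_{X_i}(\tau_n)$, $y=Y_i$ and noting $1+Z_i/\hat Q_x(\tau_n)=(a+y-b)/a$ with $y-b=Z_i$, the key is the exact identity
\begin{align*}
  \frac{1}{1+Z_i/\hat Q_x(\tau_n)}-\frac{\hat Q_{X_i}(\tau_n)}{Y_i}
  =\frac{a}{a+y-b}-\frac{b}{y}
  =\frac{(a-b)(y-b)}{y\,(a+y-b)}
  =\frac{\bigl(\hat Q_x(\tau_n)-\hat Q_{X_i}(\tau_n)\bigr)\,Z_i}{Y_i\,\bigl(\hat Q_x(\tau_n)+Z_i\bigr)}.
\end{align*}
Although neither fraction is close to the other in any obvious term-by-term sense, their difference is \emph{exactly} proportional to the localization error $\hat Q_x(\tau_n)-\hat Q_{X_i}(\tau_n)$, which is controlled precisely because the indicator $\1_{i,\delta_n}(x)$ restricts the sum to $\norm{X_i-x}_2\le\delta_n$.

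Next I would discard the two harmless factors. On the support of $\1\{Y_i>\hat Q_{X_i}(\tau_n)\}=\1\{Z_i>0\}$ we have $Z_i/(\hat Q_x(\tau_n)+Z_i)<1$ and $Y_i>\hat Q_{X_i}(\tau_n)$, so the summand is bounded by $\bigl|\hat Q_x(\tau_n)/\hat Q_{X_i}(\tau_n)-1\bigr|$. On $A_n(\eta)$ both ratios $\hat Q_x(\tau_n)/Q_x(\tau_n)$ and $\hat Q_{X_i}(\tau_n)/Q_{X_i}(\tau_n)$ lie in $(1-\eta,1+\eta)$, while Lemma~\ref{lem:flucts-1} yields $|\log Q_x(\tau_n)-\log Q_{X_i}(\tau_n)|\le C_n\norm{X_i-x}_2\le C_n\delta_n$ on the $\delta_n$-ball. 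Factoring $\hat Q_x/\hat Q_{X_i}=(\hat Q_x/Q_x)(Q_x/Q_{X_i})(Q_{X_i}/\hat Q_{X_i})$ and using $C_n\delta_n\to 0$ (as in Lemma~\ref{lem:g-2}, via Corollary~\ref{cor:flucts-2}), this ratio is eventually trapped in $\bigl[\tfrac{1-\eta}{1+\eta}e^{-C_n\delta_n},\,\tfrac{1+\eta}{1-\eta}e^{C_n\delta_n}\bigr]$, so on $A_n(\eta)$
\begin{align*}
  \left|\frac{\hat Q_x(\tau_n)}{\hat Q_{X_i}(\tau_n)}-1\right|\le \rho_n(\eta)\coloneqq \frac{1+\eta}{1-\eta}e^{C_n\delta_n}-1,
\end{align*}
a bound uniform in $i$ with $\rho_n(\eta)\to \tfrac{2\eta}{1-\eta}$ as $n\to\infty$.

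I would then collect the pieces. On $A_n(\eta)$ the full sum is dominated by
\begin{align*}
  \rho_n(\eta)\;\frac{n}{k}\sum_{i=1}^{n} w_i(x)\,\1_{i,\delta_n}(x)\,\1\{Z_i>0\},
\end{align*}
and the right-hand sum converges to $1$ in probability: Lemma~\ref{lem:denom} gives $\tfrac{n}{k}\sum_i w_i(x)\1\{Z_i>0\}\stackrel{\P}{\to}1$, while Lemma~\ref{lem:gn-2} shows the complementary part over $\norm{X_i-x}_2>\delta_n$ tends to $0$. Hence the dominating expression converges in probability to $\tfrac{2\eta}{1-\eta}$. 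Given $\varepsilon>0$, I fix $\eta$ so small that $\tfrac{2\eta}{1-\eta}<\varepsilon/2$; then the probability that the sum exceeds $\varepsilon$ while $A_n(\eta)$ holds tends to $0$, and since $\P(A_n(\eta)^c)\to 0$ by Assumption~\ref{ass:hatq_x-uniform-consistent} and Lemma~\ref{lem:flucts-1}, the claim follows.

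The only genuinely delicate step is the cancellation identity. One must resist estimating the two fractions separately, since each converges only to an $\mathcal O(1)$ quantity and their naive difference does not vanish; exploiting that the difference is exactly linear in $\hat Q_x(\tau_n)-\hat Q_{X_i}(\tau_n)$ is what makes the localization effective. After that, everything reduces to the already-established diameter, denominator, and fluctuation estimates, so no new forest-level concentration argument is required.
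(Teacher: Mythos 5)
Your proof is correct, and its core coincides exactly with the paper's: both arguments hinge on the same algebraic cancellation
\[
\frac{1}{1+Z_i/\hat{Q}_x(\tau_n)}-\frac{\hat{Q}_{X_i}(\tau_n)}{Y_i}
=\frac{Z_i\bigl(\hat{Q}_x(\tau_n)-\hat{Q}_{X_i}(\tau_n)\bigr)}{Y_i\bigl(\hat{Q}_x(\tau_n)+Z_i\bigr)},
\]
and both then use $Z_i/(\hat{Q}_x(\tau_n)+Z_i)<1$ and $Y_i>\hat{Q}_{X_i}(\tau_n)$ to reduce everything to the ratio $\bigl|1-\hat{Q}_x(\tau_n)/\hat{Q}_{X_i}(\tau_n)\bigr|$. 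Where you diverge is the final bookkeeping. The paper controls this ratio on $A_n(\eta)$ via the mean-value inequality $|t-1|\le 2|\log t|$, obtaining the bound $2C_n\norm{X_i-x}_2+2\log\{(1+\eta)/(1-\eta)\}$, and then splits the weighted sum into a localization term, sent to zero by the forest-concentration Lemma~\ref{lem:f-2}, and a denominator term handled by Lemma~\ref{lem:denom}. You instead exploit the indicator $\1_{i,\delta_n}(x)$ appearing in the statement to replace $\norm{X_i-x}_2$ by $\delta_n$, so that Lemma~\ref{lem:flucts-1} and Corollary~\ref{cor:flucts-2} give a \emph{deterministic} bound $\rho_n(\eta)$, uniform in $i$, with $\rho_n(\eta)\to 2\eta/(1-\eta)$; the whole sum is then dominated by $\rho_n(\eta)$ times a truncated denominator sum, and only Lemma~\ref{lem:denom} is really needed (your appeal to Lemma~\ref{lem:gn-2} to pin the limit at exactly $1$ is not even necessary, since an upper bound by $T_2(x)\stackrel{\P}{\to}1$ suffices). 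Your finish is thus slightly more economical, trading the stochastic concentration step of Lemma~\ref{lem:f-2} for a purely deterministic estimate made possible by the $\delta_n$-ball restriction; the paper's version has the minor advantage of not needing $C_n\delta_n\to 0$ and of reusing a lemma already established for other terms. Both conclude identically by letting $\eta\to 0$ and using $\P(A_n(\eta)^c)\to 0$.
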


  \begin{proof}
    Fix $\eta > 0$ and let $i$ be an observation satisfying  $\norm{X_i - x}_2 < \delta_n$ and $Z_i > 0$.
    For $n$ large enough,
    using Lemma~\ref{lem:flucts-1}, we can make $|\log(Q_x(\tau_n)) - \log(Q_{X_i}(\tau_n))| \leq C_n \delta_n$ arbitrarily small.
    Moreover, using the mean value theorem, it holds that $|x - 1| \leq 2 |\log(x)|$ when $x$ is sufficiently small. Therefore, on the event $A_n(\eta)$ defined in~\eqref{eq:event}, we can use the following upper bound, 
    \begin{align*}
      \bigg|&
      \frac{1}{1+Z_i/\hat{Q}_x(\tau_n)}-\frac{\hat{Q}_{X_i}(\tau_n)}{Y_i}\bigg|
      =
      \left|\frac{\hat{Q}_x(\tau_n)Y_i - \hat{Q}_{X_i}(\tau_n)\hat{Q}_x(\tau_n) - \hat{Q}_{X_i}(\tau_n)Z_i}{Y_i(\hat{Q}_x(\tau_n) + Z_i)} \right|
      \\
      = &\
      \left| \frac{Z_i\left( \hat{Q}_x(\tau_n) - \hat{Q}_{X_i}(\tau_n) \right)}{Y_i(\hat{Q}_x(\tau_n) + Z_i)}\right|
      \leq
      \left|1 - \frac{\hat{Q}_x(\tau_n)}{\hat{Q}_{X_i}(\tau_n)}\right|
      \leq 2 \left| \log\left( \frac{{Q}_x(\tau_n)}{{Q}_{X_i}(\tau_n)} \right)\right| + 2 \log\left( \frac{1 + \eta}{1 - \eta} \right)
      \\
      \leq &\
      2 C_n \norm{X_i - x}_2 + 2 \log\left( \frac{1 + \eta}{1 - \eta} \right).
    \end{align*}
    We can then split the term as follows,
    \begin{align*}
      0
      \leq &\
      \sum_{i = 1}^{n} w_i(x) \1_{i, \delta_n}(x)\1\{Y_i > \hat{Q}_{X_i}(\tau_n)\}\frac{n}{k}\left|
      \frac{1}{1+Z_i/\hat{Q}_x(\tau_n)}-\frac{\hat{Q}_{X_i}(\tau_n)}{Y_i}\right|
      \\
      \leq &\
      2 \sum_{i = 1}^{n} w_i(x) \1_{i, \delta_n}(x)\1\{Y_i > \hat{Q}_{X_i}(\tau_n)\} \frac{n}{k}C_n \norm{X_i - x}_2
      \\
      &+
      2 \sum_{i = 1}^{n} w_i(x) \1_{i, \delta_n}(x)\1\{Y_i > \hat{Q}_{X_i}(\tau_n)\} \frac{n}{k}\log\left( \frac{1 + \eta}{1 - \eta} \right)
      \\
      \eqqcolon &\ S_1(x) + S_2(x).
    \end{align*}
    The term $S_1(x) \stackrel{\P}{\to}0$ by Lemma~\ref{lem:f-2}.
    The term $S_2(x)\stackrel{\P}{\to}0$ by Lemma~\ref{lem:denom}.
  \end{proof}

  \subsection{Other results}
  
  \begin{lemma}[Quantile function is Lipschitz and eventually unbounded uniformly]\label{lem:flucts-1}
    Suppose Assumptions~1 and~3 from the main text hold. %
    Then, the quantile function $x \mapsto Q_x(\tau_n)$ has bounded fluctuations, that is,
    there exists a sequence $C_n > 0$ such that
    for all $x, y \in \mathcal{X}$ satisfies
    \begin{align}
      \left|\log(Q_x(\tau_n)) - \log(Q_y(\tau_n))\right| \leq C_n \norm{x - y}_2,
    \end{align}
  where $C_n \coloneqq  \log(n/k) (L_\xi + L_\alpha) + L_{c}$ and $L_\xi$, $L_\alpha$ and $L_{c}$ are the Lipschitz constants, see Assumption~3. %

  Moreover, the quantile function $Q_x$ is eventually uniformly unbounded, that is,
  \[  \inf_{x \in \mathcal{X}} Q_x(\tau_n)
  \to \infty, \qquad \text{as } n \to \infty. \]

  \end{lemma}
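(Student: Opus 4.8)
The plan is to work directly with the explicit representation of the conditional quantile function. Combining the regular-variation form \eqref{eq:reg-var} with the normalized Karamata representation \eqref{eq:slowly-karamata}, and using that $\tau_n = 1-k/n$ so that $(1-\tau_n)^{-1} = n/k$, I would first record, for every $x \in \mathcal{X}$,
\[
\log Q_x(\tau_n) = \xi(x)\log(n/k) + \log c(x) + \int_1^{n/k}\frac{\alpha_x(t)}{t}\,\mathrm{d}t .
\]
Everything then reduces to estimating the three terms on the right-hand side by means of the Lipschitz hypotheses of Assumption~\ref{ass:lipschitz}.

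For the first claim I would subtract the representations at $x$ and $y$ and apply the triangle inequality to
\[
\log Q_x(\tau_n) - \log Q_y(\tau_n) = \bigl(\xi(x)-\xi(y)\bigr)\log(n/k) + \bigl(\log c(x) - \log c(y)\bigr) + \int_1^{n/k}\frac{\alpha_x(t)-\alpha_y(t)}{t}\,\mathrm{d}t .
\]
The shape term is bounded by $L_\xi \log(n/k)\norm{x-y}_2$ via \eqref{eq:lipschitz-xi}, the constant term by $L_c\norm{x-y}_2$ via \eqref{eq:lipschitz-c_a}, and for the integral I would pull the uniform-in-$t$ Lipschitz bound $\sup_{t\geq1}\lvert\alpha_x(t)-\alpha_y(t)\rvert\leq L_\alpha\norm{x-y}_2$ out of the integral, leaving $\int_1^{n/k} t^{-1}\,\mathrm{d}t = \log(n/k)$. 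Summing the three contributions yields exactly $C_n\norm{x-y}_2$ with $C_n = (L_\xi + L_\alpha)\log(n/k) + L_c$, which is the asserted inequality.

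For the second claim I would lower-bound $\log Q_x(\tau_n)$ uniformly in $x$. Since $\xi$ is Lipschitz, hence continuous, on the compact set $\mathcal{X}$ and strictly positive, it attains a positive minimum $\xi_- := \min_{x\in\mathcal{X}}\xi(x) > 0$, so the leading term dominates: $\xi(x)\log(n/k)\geq \xi_-\log(n/k)\to\infty$ because $n/k = n^{1-\beta_k}\to\infty$. Likewise $\log c$ is Lipschitz on $\mathcal{X}$, hence bounded below. The delicate term is the integral, which I would control by first upgrading the pointwise convergence $\alpha_x(t)\to 0$ (built into the representation \eqref{eq:slowly-karamata}) to convergence that is \emph{uniform} in $x$: covering $\mathcal{X}$ by finitely many balls of radius $\varepsilon/(2L_\alpha)$ and combining pointwise convergence at the centers with the uniform Lipschitz bound on $\alpha_\cdot$ produces, for every $\varepsilon>0$, a threshold $T$ independent of $x$ with $\sup_{x}\sup_{t\geq T}\lvert\alpha_x(t)\rvert<\varepsilon$. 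Splitting the integral at $T$, the piece over $[1,T]$ is bounded uniformly in $x$ by a finite constant $M(T)$ (using $\lvert\alpha_x\rvert\leq\lvert\alpha_{x_0}\rvert+L_\alpha\diam(\mathcal{X})$ for a fixed reference point $x_0$), while the piece over $[T,n/k]$ is at least $-\varepsilon\log(n/k)$. Taking $\varepsilon = \xi_-/2$ then gives $\inf_{x\in\mathcal{X}}\log Q_x(\tau_n)\geq \tfrac{1}{2}\xi_-\log(n/k) - M(T) + \min_{x}\log c(x)\to\infty$, which is the statement.

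The main obstacle is precisely this integral term in the second part: the stated assumptions only guarantee pointwise convergence $\alpha_x(t)\to0$ together with a uniform-in-$t$ Lipschitz bound, so the crux is the compactness argument promoting these to genuinely uniform convergence of $\alpha_x(t)\to0$ over $x\in\mathcal{X}$ — this is what prevents the possibly negative integral from eroding the leading $\xi_-\log(n/k)$ growth. Note that the analogous uniform convergence for the tail slowly-varying part $\tilde\alpha_x$ must be \emph{assumed} outright in Assumption~\ref{ass:lipschitz}, since no uniform Lipschitz bound is imposed there; the quantile-side version is instead earned via compactness. The remaining estimates in both parts are routine once the representation above is in place.
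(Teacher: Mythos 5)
Your proof is correct, and its first half coincides exactly with the paper's argument: the same representation $\log Q_x(\tau_n) = \xi(x)\log(n/k) + \log c(x) + \int_1^{n/k} t^{-1}\alpha_x(t)\,\mathrm{d}t$, the same three-term triangle inequality, and the same constant $C_n = (L_\xi + L_\alpha)\log(n/k) + L_c$. For the uniform unboundedness your route is genuinely different in where compactness is applied. The paper works at the level of the quantile function itself: for each fixed $x$ it invokes the pointwise slow-variation bound $\log Q_x(\tau_n) \geq (\xi(x)-\delta)\log(n/k)$ (valid for $n$ large, depending on $x$), spreads it to a ball $B_{\varepsilon_x}(x)$ via the $C_n$-Lipschitz property just established — choosing $\varepsilon_x < (\xi(x)-\delta)/\{2(L_\xi+L_\alpha)\}$ so the loss $\varepsilon_x C_n$ eats only half the growth — and then extracts a finite subcover. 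You instead work at the level of the Karamata remainder: a finite cover by balls of radius $\varepsilon/(2L_\alpha)$, pointwise convergence at the centers, and the uniform-in-$t$ Lipschitz bound from Assumption~\ref{ass:lipschitz} upgrade $\alpha_x(t)\to 0$ to convergence \emph{uniform in} $x$; splitting the integral at the resulting threshold $T$ and taking $\varepsilon=\xi_-/2$ then gives the uniform lower bound directly. Both are compactness-plus-Lipschitz arguments resting on the same ingredients, but they buy slightly different things: the paper's version reuses the fluctuation bound of part 1 as a black box and only ever needs pointwise slow variation at finitely many centers, whereas yours establishes the stronger intermediate fact $\sup_{x\in\mathcal X}\sup_{t\geq T}|\alpha_x(t)|<\varepsilon$ — i.e., the quantile-side analogue of the uniform decay that Assumption~\ref{ass:lipschitz} postulates outright for $\tilde\alpha_x$ — which, as you note, is earned rather than assumed, and could be of independent use elsewhere in the proofs.
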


  \begin{proof}
   Suppose that for all $\tau \in (0, 1)$ it holds
    \begin{align*}
      Q_x(\tau) = (1-\tau)^{-\xi(x)}\ell_x\left( (1-\tau)^{-1} \right),
    \end{align*}
    where the slowly varying function $\ell_x: (0, 1) \to \R$ is normalized \citep[see][]{bingham1989} as in~(3.5). %
    Define $\tau_n \coloneqq 1 - k/n$, and note that
    \begin{align}
      Q_x(\tau_n) 
      =&\ (k/n)^{-\xi(x)} \ell_x(n/k)
      = (k/n)^{-\xi(x)} c(x) \exp\left\{ \int_{1}^{n/k}\frac{\alpha_x(t)}{t}\mathrm{d}t \right\}.
    \end{align}
    Therefore,
    \begin{align}\label{eq:ratio-0}
    \begin{split}
      \frac{Q_x(\tau_n)}{Q_y(\tau_n)} 
      = &\ \left( \frac{n}{k} \right)^{\xi(x) - \xi(y)} \frac{c(x)}{c(y)}
      \exp\left\{ \int_1^{n/k} \frac{\alpha_x(t) - \alpha_y(t)}{t} \mathrm{d}t \right\},
    \end{split}
    \end{align}
    and so
    \begin{align}\label{eq:ratio}
    \begin{split}
      |\log&(Q_x(\tau_n)) - \log(Q_y(\tau_n))|\\
        \leq &\ \log(n/k)\left| \xi(y) - \xi(x) \right|
        + |\log(c(x)) - \log(c(y))|
        +  \int_1^{n/k} \frac{|\alpha_x(t) - \alpha_y(t)|}{t} \mathrm{d}t.
    \end{split}
    \end{align}
    Recall from Assumption~3 %
    in the main text that $\xi(x)$, $c(x)$ and $\alpha_x(t)$, for every $t \geq 1$, are Lipschitz.
    Therefore, from~\eqref{eq:ratio} we have that
    \begin{align}
      \left|\log(Q_x(\tau_n)) - \log(Q_y(\tau_n))\right|
        \leq &\ \left( \log(n/k) L_\xi + L_{c} + L_\alpha \log(n/k) \right) \norm{x - y}_2.
    \end{align}

    For the second part, let $x \in \mathcal X$ and $\varepsilon_x > 0$. For every $y\in B_{\varepsilon_x}(x)$, the open ball around $x$ with radius $\varepsilon_x$, note that by the Lipschitz property of the quantile function we have for some small $\delta >0$
    \begin{align*} 
      \log(Q_y(\tau_n)) &\geq  \log(Q_x(\tau_n)) - \varepsilon_x C_n \\
      & \geq  \log(n/k)\left[(\xi(x)-\delta) - \varepsilon_x (L_\xi + L_\alpha + L_{c}/\log(n/k))  \right]\\
      & \geq \log(n/k)\frac{(\xi(x)-\delta)}{2}, 
    \end{align*}
    for $n$ large enough, and where we chose $\varepsilon_x < \{\xi(x)-\delta\} / \{2(L_\xi + L_\alpha)\}$ in the last inequality. Therefore, we have
    \[ \inf_{y \in B_{\varepsilon_x}(x)} Q_y(\tau_n)
    \to \infty, \qquad \text{as } n \to \infty. \] 
    This yields an open cover of the predictor space
    \[ \mathcal X \subseteq \bigcup_{x \in \mathcal X} B_{\varepsilon_x}(x). \] 
    Since $\mathcal X$ is compact, there exists $x_1,\dots, x_K \in \mathcal X$ that form a finite subcover 
    \[ \mathcal X \subseteq \bigcup_{j=1}^K B_{\varepsilon_{x_j}}(x_j). \] 
    Consequently, we obtain a uniform lower bound on the quantile function by
    \[ \inf_{y \in \mathcal X} Q_y(\tau_n)
    \geq \log(n/k)\frac{(\min_{j=1}^K \xi(x_j)-\delta)}{2}, \] 
    which yields the assertion since $\xi(x) > 0$ for all $x\in \mathcal X$.
  \end{proof}
  
  \begin{cor}[Rate of convergence of $C_n$ relative to leaf's diameter]\label{cor:flucts-2}
    Suppose that the Assumptions of Lemma~\ref{lem:flucts-1} and Equation~(3.6) %
    hold.
    Then, the fluctuation constant $C_n$ of the quantile function satisfies
    \begin{align}
      C_n^a\ s^{-b} \to 0,\ \text{as}\ n \to \infty,
    \end{align}
  for any $a, b > 0$.
  \end{cor}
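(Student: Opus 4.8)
The plan is to substitute the explicit expression for $C_n$ from Lemma~\ref{lem:flucts-1} together with the polynomial scalings of $k$ and $s$ in~\eqref{eq:n-k-s}, and then observe that a polylogarithmic factor is always dominated by any polynomial decay. There is essentially no technical difficulty here; the whole point is the elementary observation that $C_n$ grows only \emph{logarithmically} in $n$, whereas $s^{-b}$ decays \emph{polynomially}.

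First I would recall from Lemma~\ref{lem:flucts-1} that the fluctuation constant has the explicit form $C_n = \log(n/k)(L_\xi + L_\alpha) + L_{c}$, where $L_\xi$, $L_\alpha$, and $L_c$ are fixed Lipschitz constants independent of $n$. Using $k = n^{\beta_k}$ from~\eqref{eq:n-k-s}, I would compute $\log(n/k) = (1-\beta_k)\log n$, so that
\begin{align*}
  C_n = (1-\beta_k)(L_\xi + L_\alpha)\log n + L_{c} = \mathcal{O}(\log n).
\end{align*}
In particular, for any fixed $a > 0$ we have $C_n^a = \mathcal{O}\bigl((\log n)^a\bigr)$.

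Next, using $s = n^{\beta_s}$ from~\eqref{eq:n-k-s}, I would write $s^{-b} = n^{-b\beta_s}$, and combine the two bounds to obtain
\begin{align*}
  C_n^a\, s^{-b} = \mathcal{O}\bigl((\log n)^a\, n^{-b\beta_s}\bigr).
\end{align*}
Since $\beta_s > 0$ and $b > 0$, the exponent $b\beta_s$ is strictly positive, and it is a standard fact that $(\log n)^a / n^{c} \to 0$ as $n \to \infty$ for any $a > 0$ and any $c > 0$ (polynomial decay dominates polylogarithmic growth). Applying this with $c = b\beta_s$ yields $C_n^a\, s^{-b} \to 0$, which is the claim. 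The only step worth flagging is the first one: the logarithmic order of $C_n$ stems entirely from the $\log(n/k)$ dependence in Lemma~\ref{lem:flucts-1}, and it is precisely this slow growth, against the polynomial decay of $s^{-b}$, that makes the product vanish for \emph{every} choice of $a, b > 0$.
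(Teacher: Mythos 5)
Your proof is correct and follows essentially the same route as the paper: substitute $k = n^{\beta_k}$ and $s = n^{\beta_s}$ from~\eqref{eq:n-k-s} into the explicit form of $C_n$ from Lemma~\ref{lem:flucts-1}, obtaining $C_n^a s^{-b} = \mathcal{O}\bigl((\log n)^a\, n^{-\beta_s b}\bigr) \to 0$ since polynomial decay dominates polylogarithmic growth. Your write-up merely spells out the intermediate step $\log(n/k) = (1-\beta_k)\log n$, which the paper leaves implicit.
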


  \begin{proof}
    From~(3.6), %
    we have that $k = n^{\beta_k}$, and $s = n^{\beta_s}$ with $0 < \beta_s < \beta_k < 1$.
    It follows that
    \begin{align}
      C_n^a s^{-b} = \mathcal{O}\left( \log(n/k)^a  n^{-\beta_s b} \right) = \mathcal{O}\left(\log( n)^a n^{-\beta_s b} \right).
    \end{align}
    
  \end{proof}

  \begin{lemma}[Logarithm bound]\label{lem:log-bound}
    Let $b = 1, \dots, B$ be a tree of the forest. Let $\eta > 0$ and consider the event $A_n(\eta)$ defined in~\eqref{eq:event}.
    Then, for all observations satisfying $Z_i > 0$ and $w_{i, b}(x) > 0$, on the event $A_n(\eta)$ it holds
    \begin{align*}
      \Bigg|\log\left( 1 + Z_i/ \hat{Q}_x(\tau_n) \right)-\log\left( \frac{Y_i}{Q_x(\tau_n)} \right) \Bigg|
      \leq &\ 
      C_n \norm{X_i - x}_2
      + \log\left(\frac{1+\eta}{(1 - \eta)^2}  \right).
    \end{align*}
    where $C_n > 0$ is the sequence defined in Lemma~\ref{lem:flucts-1}.
  \end{lemma}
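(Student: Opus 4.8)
The plan is to reduce the left-hand side to a single logarithm and then bound it by two elementary sources of error. Since we are on the event $Z_i>0$, by the definition of the exceedances we have $Z_i = Y_i - \hat Q_{X_i}(\tau_n)$, hence $Y_i = Z_i + \hat Q_{X_i}(\tau_n)$. Writing $A := 1 + Z_i/\hat Q_x(\tau_n) = (\hat Q_x(\tau_n)+Z_i)/\hat Q_x(\tau_n)$ and $B := Y_i/Q_x(\tau_n) = (Z_i+\hat Q_{X_i}(\tau_n))/Q_x(\tau_n)$, the first step is the exact factorization
\[
  \log A - \log B = \log\frac{\hat Q_x(\tau_n)+Z_i}{Z_i+\hat Q_{X_i}(\tau_n)} + \log\frac{Q_x(\tau_n)}{\hat Q_x(\tau_n)} =: \log P + \log S,
\]
where $P$ compares the estimated thresholds at $x$ and at $X_i$ after adding the common excess $Z_i$, and $S$ measures the relative deviation of $\hat Q_x(\tau_n)$ from its population counterpart. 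Note that on $A_n(\eta)$ with $\eta<1$ all quantities are strictly positive, so every logarithm is well defined, and the constraint $w_{i,b}(x)>0$ plays no role in the bound itself; it merely fixes the observations to which the estimate is later applied.

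For $S$ I would simply invoke the event $A_n(\eta)$, which yields $\hat Q_x(\tau_n)/Q_x(\tau_n)\in(1-\eta,1+\eta)$ and hence $S=Q_x(\tau_n)/\hat Q_x(\tau_n)\in\bigl(1/(1+\eta),\,1/(1-\eta)\bigr)$. For $P$ the key elementary observation is that for positive $a,b$ and any $c\ge 0$ the ratio $(a+c)/(b+c)$ always lies between $a/b$ and $1$, so that $\min\{1,a/b\}\le (a+c)/(b+c)\le \max\{1,a/b\}$. Applying this with $a=\hat Q_x(\tau_n)$, $b=\hat Q_{X_i}(\tau_n)$, $c=Z_i$ bounds $P$ in terms of $\hat Q_x(\tau_n)/\hat Q_{X_i}(\tau_n)$ alone and eliminates $Z_i$ (and therefore $Y_i$) entirely. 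I would then decompose $\hat Q_x(\tau_n)/\hat Q_{X_i}(\tau_n) = (\hat Q_x/Q_x)\,(Q_x/Q_{X_i})\,(Q_{X_i}/\hat Q_{X_i})$, bound the two outer factors on $A_n(\eta)$ and bound the middle one with the Lipschitz fluctuation estimate of Lemma~\ref{lem:flucts-1}, i.e.\ $e^{-C_n\norm{X_i-x}_2}\le Q_x(\tau_n)/Q_{X_i}(\tau_n)\le e^{C_n\norm{X_i-x}_2}$.

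Combining these gives the two-sided multiplicative bound
\[
  \frac{1-\eta}{(1+\eta)^2}\,e^{-C_n\norm{X_i-x}_2}\ \le\ \frac{A}{B}\ \le\ \frac{1+\eta}{(1-\eta)^2}\,e^{C_n\norm{X_i-x}_2},
\]
and taking logarithms yields $\bigl|\log(A/B)\bigr|\le C_n\norm{X_i-x}_2 + \max\bigl\{\log\tfrac{1+\eta}{(1-\eta)^2},\ \log\tfrac{(1+\eta)^2}{1-\eta}\bigr\}$. The main obstacle—indeed essentially the only delicate point—is the bookkeeping of the $\eta$-constants so that the final additive constant matches the stated $\log\frac{1+\eta}{(1-\eta)^2}$ exactly. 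This requires checking that the upper deviation dominates the lower one, which holds because $\frac{1+\eta}{(1-\eta)^2}\big/\frac{(1+\eta)^2}{1-\eta}=(1-\eta^2)^{-1}>1$ for $\eta\in(0,1)$. A careless triangle-inequality split of $|\log P|$ into three separate logarithms would instead leak an extra factor and produce the looser constant $\log\frac{1}{(1-\eta)^3}$; it is precisely the monotonicity inequality for $(a+c)/(b+c)$, which lets one bound $P$ multiplicatively rather than $|\log P|$ additively, that secures the sharp constant.
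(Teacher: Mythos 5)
Your proposal is correct and takes essentially the same route as the paper: your split $\log A-\log B=\log\frac{\hat Q_x(\tau_n)+Z_i}{\hat Q_{X_i}(\tau_n)+Z_i}+\log\frac{Q_x(\tau_n)}{\hat Q_x(\tau_n)}$ is term-for-term the paper's triangle-inequality decomposition, and your monotonicity fact that $(a+c)/(b+c)$ lies between $1$ and $a/b$ is precisely the paper's inequality $|\log((t-1)x+1)|\le|\log t|$ for $x\in(0,1)$, $t>0$, applied with $t=\hat Q_x(\tau_n)/\hat Q_{X_i}(\tau_n)$. The only difference is bookkeeping: the paper stays with absolute values and groups the two estimation-error ratios to get $|\log(\hat Q_x(\tau_n)/\hat Q_{X_i}(\tau_n))|\le|\log(Q_x(\tau_n)/Q_{X_i}(\tau_n))|+\log\frac{1+\eta}{1-\eta}$, while you carry signed two-sided multiplicative bounds and take a maximum at the end; both yield the exact constant $\log\frac{1+\eta}{(1-\eta)^2}$.
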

  \begin{proof}
    Fix a tree $b = 1, \dots, B$, fix an observation satisfying $Z_i > 0$ and $w_{i, b}(x) > 0$, and fix $0 < \eta < 1$.
    Notice that
    \begin{align*}\label{eq:ineq}
       \Bigg| & \log\left( 1 + Z_i / \hat{Q}_x(\tau_n) \right) -  \log\left( \frac{Y_i}{{Q}_x(\tau_n)} \right) \Bigg|\\
      \leq&\ 
       \left | \log\left( 1 + Z_i / \hat{Q}_x(\tau_n) \right) -  \log\left( \frac{Y_i}{\hat{Q}_x(\tau_n)} \right)\right |  + \left | \log\left( \frac{{Q}_x(\tau_n)}{\hat{Q}_x(\tau_n)} \right) \right |. \numberthis
    \end{align*}  
    On the event $A_n(\eta)$, recall that $\hat{Q}_{X_i}(\tau_n) > 1 - \eta$ and $\hat{Q}_{x}(\tau_n) > 1 - \eta$.    

    We bound the first term in~\eqref{eq:ineq}. We have that
    \begin{align*}
      \Bigg|\log&\left( 1 + Z_i / \hat{Q}_x(\tau_n) \right) - \log\left( \frac{Y_i}{\hat{Q}_x(\tau_n)} \right)\Bigg|
      = 
      \left| \log\left( 1 + \frac{Y_i - \hat{Q}_{X_i}(\tau_n)}{\hat{Q}_x(\tau_n)} \right) - \log\left( \frac{Y_i}{\hat{Q}_x(\tau_n)} \right)\right| \\
      =&\ 
       \left| \log\left( \frac{\hat Q_x(\tau_n) - \hat{Q}_{X_i}(\tau_n) + Y_i}{Y_i} \right)\right|
      = 
      \left| \log\left( \left[\frac{\hat Q_x(\tau_n)}{\hat{Q}_{X_i}(\tau_n)} - 1\right]\frac{\hat{Q}_{X_i}(\tau_n)}{Y_i} + 1 \right)\right| \\
      \leq&\
      \left|\log\left(\frac{\hat{Q}_{x}(\tau_n)}{\hat{Q}_{X_i}(\tau_n)} \right) \right|,
    \end{align*}
    since $|\log((t-1) x + 1)| \leq |\log(t)|$ for $x \in (0,1)$ and $t>0$, and since ${\hat{Q}_{X_i}(\tau_n)}/{Y_i}\in (0,1)$ for $Z_i > 0$.
    On the event $A_n(\eta)$, it holds that
    \begin{align*}
      \left|\log\left( \frac{\hat{Q}_x(\tau_n)}{\hat{Q}_{X_i}(\tau_n)} \right)\right|
      \leq &\  \left|\log\left(\frac{Q_x(\tau_n)}{Q_{X_i}(\tau_n)}  \right)\right|
      + \log\left( \frac{1+\eta}{1-\eta} \right)\\
      \leq &\  C_n \norm{X_i - x}_2 + \log\left( \frac{1+\eta}{1-\eta} \right).
    \end{align*}
    where in the last inequality we used Lemma~\ref{lem:flucts-1}.

    We now bound the second term in~\eqref{eq:ineq}. On the event $A_n(\eta)$, it holds that
    \begin{align*}
      \left|\log\left( \frac{Q_x(\tau_n)}{\hat{Q}_x(\tau_n)} \right)\right|
      \leq &\ 
      \left|  \log\left( \frac{Q_x(\tau_n)}{Q_x(\tau_n)(1 - \eta)} \right)  \right|
      = 
      \left|  \log\left(1-\eta \right)  \right|.
    \end{align*}

    Putting everything together, we have that
    \begin{align*}
      \Bigg|\log\left( 1 + Z_i/ \hat{Q}_x(\tau_n) \right)-\log\left( \frac{Y_i}{Q_x(\tau_n)} \right) \Bigg|
      \leq &\ 
      C_n \norm{X_i - x}_2
      + \log\left(\frac{1+\eta}{(1 - \eta)^2}  \right).
    \end{align*}
  \end{proof}

  \begin{lemma}[Uniform bound on regular varying tails]\label{lem:reg-var-1}
    Let $\eta, \zeta > 0$, and define $\xi_{+} \coloneqq \max\{\xi(x): x \in \mathcal{X}\}$ and $\xi_{-} \coloneqq \min\{\xi(x): x \in \mathcal{X}\}$. 
    Then, there exists a sample size $n_0$ such that for all $n > n_0$ it holds
    \begin{align*}
      1
      < \sup_{x \in \mathcal{X}}\frac{n}{k}\P&\left( Y > Q_X(\tau_n)(1 - \eta)\mid X = x \right) 
      < (1-\eta)^{-1/\xi_{-} - \zeta},
      \\
      (1+\eta)^{-1/\xi_{-} - \zeta}
      < \sup_{x \in \mathcal{X}} \frac{n}{k}\P&\left( Y > Q_X(\tau_n)(1 + \eta)\mid X = x \right) 
      < 1.
    \end{align*}
  \end{lemma}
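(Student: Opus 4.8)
The plan is to rewrite each of the four quantities as a ratio of conditional tail probabilities at two thresholds and then to exploit the exact calibration of the population quantile together with the regular variation of the conditional tail. First I would note that by Assumption~\ref{ass:doa} the conditional distribution function $F_x$ is continuous and strictly increasing, so that $Q_x(\tau_n)$ is the exact $\tau_n$-quantile of $Y\mid X=x$; consequently $1-F_x(Q_x(\tau_n)) = 1-\tau_n = k/n$ and
\begin{align*}
  \frac{n}{k}\,\P\left(Y > Q_x(\tau_n)\,a \mid X = x\right) = \frac{1-F_x\left(Q_x(\tau_n)\,a\right)}{1-F_x\left(Q_x(\tau_n)\right)},
\end{align*}
for any $a>0$. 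The two outer inequalities (the bound $>1$ for $a=1-\eta$ and the bound $<1$ for $a=1+\eta$) then follow immediately from the strict monotonicity of $F_x$, since $(1-\eta)Q_x(\tau_n) < Q_x(\tau_n) < (1+\eta)Q_x(\tau_n)$; these require no asymptotics and hold for every $n$ and every $x\in\mathcal X$.

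For the two regular-variation bounds I would use the representation~\eqref{eq:reg-var}, writing $1-F_x(y) = y^{-1/\xi(x)}\tilde\ell_x(y)$ with $\tilde\ell_x$ slowly varying and characterizing tuple $(\tilde c(x),\tilde\alpha_x)$ as in~\eqref{eq:slowly-karamata}. Setting $y = Q_x(\tau_n)$, the ratio above equals
\begin{align*}
  a^{-1/\xi(x)}\,\frac{\tilde\ell_x(ay)}{\tilde\ell_x(y)} = a^{-1/\xi(x)}\exp\left(\int_{y}^{ay}\frac{\tilde\alpha_x(t)}{t}\,\mathrm{d}t\right).
\end{align*}
By the second part of Lemma~\ref{lem:flucts-1} we have $\inf_{x\in\mathcal X}Q_x(\tau_n)\to\infty$, and by the last condition in Assumption~\ref{ass:lipschitz}, $\tilde\alpha_x(t)\to 0$ uniformly in $x$. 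Hence for any $\varepsilon>0$ there is $n_0$ so that for all $n>n_0$ the integrand is bounded by $\varepsilon/t$ uniformly over $x$ on the entire range of integration, giving $\left|\int_{y}^{ay}\tilde\alpha_x(t)/t\,\mathrm{d}t\right| \le \varepsilon\left|\log a\right|$ and thus an exponential correction factor between $a^{-\varepsilon}$ and $a^{\varepsilon}$. Combining this with the elementary computation that $\sup_{x}a^{-1/\xi(x)} = a^{-1/\xi_{-}}$ when $a=1-\eta<1$ yields the upper bound $(1-\eta)^{-1/\xi_{-}-\zeta}$, while evaluating the ratio at the minimizer $x_{-}$ of $\xi$ over the compact set $\mathcal X$ (where the power equals $a^{-1/\xi_{-}}$) lower-bounds the supremum by $(1+\eta)^{-1/\xi_{-}-\zeta}$ when $a=1+\eta>1$; in both cases one chooses $\varepsilon<\zeta$ to absorb the correction factor into the stated slack.

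The main obstacle is the uniformity over $x\in\mathcal X$: the single-distribution version of this statement is classical (Potter's bounds), but here both the slowly varying part $\tilde\ell_x$ and the index $\xi(x)$ vary with $x$, so I cannot simply invoke a pointwise regular-variation limit. The device that makes the argument go through is to feed the two uniform ingredients---the uniform divergence $\inf_x Q_x(\tau_n)\to\infty$ from Lemma~\ref{lem:flucts-1} and the uniform convergence $\tilde\alpha_x(t)\to0$ from Assumption~\ref{ass:lipschitz}---into the Karamata integral representation, which converts both of them into a single uniform bound on the exponential correction factor. Once that factor is controlled, the remaining $x$-dependence sits only in the explicit power $a^{-1/\xi(x)}$, whose supremum and infimum over the compact set are attained and evaluated directly in terms of $\xi_{-}$.
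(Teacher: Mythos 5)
Your proposal is correct and takes essentially the same route as the paper's own proof: both rewrite $\frac{n}{k}\P\left(Y > Q_x(\tau_n)a \mid X = x\right)$ as a ratio of conditional tail probabilities using the exact quantile calibration, apply the normalized Karamata representation of $\tilde\ell_x$, bound the exponential correction uniformly via $\tilde\alpha_x(t)\to 0$ uniformly in $x$ together with the uniform divergence of $Q_x(\tau_n)$, and then optimize the explicit power $a^{-1/\xi(x)}$ over the compact set $\mathcal{X}$. The only cosmetic difference is that you route the uniform divergence explicitly through Lemma~\ref{lem:flucts-1} and introduce an auxiliary $\varepsilon<\zeta$, whereas the paper folds both directly into the choice of $n_0$ and the bound $\sup_{x}\sup_{t\geq Q_x(\tau_n)(1-\eta)}|\tilde\alpha_x(t)|<\zeta$.
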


\begin{proof}

  Fix $\eta, \zeta > 0$.
  From Assumption~3 in the main text, %
  there exists a sample size $n_0$ such that for all $n > n_0$ it holds
  \begin{align*}
    \sup_{x \in \mathcal{X}}\sup_{t \geq Q_x(\tau_n)(1 - \eta)} |\tilde\alpha_x(t)| < \zeta.
  \end{align*}
    For the first result, using the regular variation of the tail, we observe that for any $x\in\mathcal X$ we have
    \begin{align*}
      \frac{n}{k}\P\left( Y > Q_X(\tau_n)(1 - \eta)\mid X = x \right) &=  \frac{\P\left( Y > Q_X(\tau_n)(1 - \eta)\mid X = x \right)}{\P\left( Y > Q_X(\tau_n)\mid X = x \right)}\\
      & = (1-\eta)^{-1/\xi(x)} \exp\left\{ - \int_{Q_x(\tau_n)(1-\eta)}^{Q_x(\tau_n)} \frac{\tilde\alpha_x(t)}{t}\mathrm{d}t \right\}\\
      & < (1-\eta)^{-1/\xi(x)} \exp\{ -\zeta \log(1-\eta) \}\\
      & \leq (1-\eta)^{-1/\xi_{-} - \zeta}.
    \end{align*}
    The lower bound is trivial.

    Similarly, for the second result, we observe that for any $x\in\mathcal X$ we have
    \begin{align*}
      \frac{n}{k}\P\left( Y > Q_X(\tau_n)(1 + \eta)\mid X = x \right) &=  \frac{\P\left( Y > Q_X(\tau_n)(1 + \eta)\mid X = x \right)}{\P\left( Y > Q_X(\tau_n)\mid X = x \right)}\\
      & = (1+\eta)^{-1/\xi(x)} \exp\left\{ \int_{Q_x(\tau_n)}^{Q_x(\tau_n)(1+\eta)} \frac{\tilde\alpha_x(t)}{t}\mathrm{d}t \right\}\\
      & > (1+\eta)^{-1/\xi(x)} \exp\{ -\zeta \log(1+\eta) \}\\
      & \geq (1+\eta)^{-1/\xi_{-} - \zeta}.
    \end{align*}
    The upper bound is trivial.

  \end{proof}
  
  \subsection[Proof strategy when shape parameter is negative]{Proof strategy when $\xi(x) \in(-1, 0)$}\label{app:neg-shape}
  When $\xi(x) \in(-1, 0)$, the proof strategy of Theorem~1 %
  must be adapted (notice that $\xi(x) > -1$ is necessary to ensure consistency even in the i.i.d. setting).
  The structure of the proof would follow \citep[][Appendix~A, proof of Theorem~2.1]{zhou2009}.
  The first difference, compared to the proof of Theorem~1, %
  is to define the approximate solution 
  \begin{align}\label{eq:approx-sol-2}
    t^{(\delta)}_{x} \coloneqq - \frac{(1 + \delta)}{Q_x(1) - Q_x(\tau_n)}, 
  \end{align}
  for a fixed $\delta \in (-1/2, 0)$, where $Q_x(1)$ denotes the finite upper endpoint. 
  Unlike the approximate solution in~\eqref{eq:approx-sol} for the case $\xi(x) > 0$, here $t^{(\delta)}_{x}$ is not an estimator since it depends on the population quantities $Q_x(1)$ and $Q_x(\tau_n)$.
  Following~\cite{zhou2009}, the second main difference, compared to the proof of Theorem~1, %
  is to show for all $\delta \in (-1/2, 0)$ that
  \begin{align}
    f_n(t^{(\delta)}_{x}) &\stackrel{\P}{\to} 1 + \int_0^1 \log\left( (1 + \delta) u^{-\xi(x)} - \delta \right) \d u \label{eq:fn-def-2}\\
    g_n(t^{(\delta)}_{x}) &\stackrel{\P}{\to} \int_0^1 \frac{1}{(1 + \delta) u^{-\xi(x)} - \delta} \ \d u, \label{eq:gn-def-2}
  \end{align}
  where $f_n$ and $g_n$ are defined in~\eqref{eq:fn-def} and~\eqref{eq:gn-def}, respectively.
  To establish~\eqref{eq:fn-def-2} and~\eqref{eq:gn-def-2} one would need to adapt the bounds from Propositions~\ref{prop:hill} and~\ref{prop:gn} and use the fact that $Q_x(1) - Q_x(U_i)$ is regularly varying at 1 with index $\xi(x)$, for every $x \in \mathcal{X}$ and $i = 1, \dots, n$.

  \section{Weight Function Estimation}\label{app:sim_weights}

  In quantile regression tasks, the weight function $(x, y) \mapsto w_n(x, y)$ estimated by GRF measures the similarity between $x$ and $y$ according to their conditional distribution.

  Figure~\ref{fig:weight} shows the localizing weights $w_n(x, X_{i})$, $x, X_i \in \R^p$, for two test predictors $x$ with $x_1 = -0.2, 0.5$, respectively.
  The data is generated according to Example~1, %
  with $n = 2000$ observations and $p = 40$ predictors.
  In the left panel of Figure~\ref{fig:weight},
  the observations $(X_i, Y_i)$ with $X_{i1} < 0$ are the ones influencing most the test predictor $x$ with $x_1 = -0.2$. This is because they share the same conditional distribution. A similar argument holds for the right panel of Figure~\ref{fig:weight}.
  \begin{figure}[H]
    \centering
    \includegraphics[scale=.8]{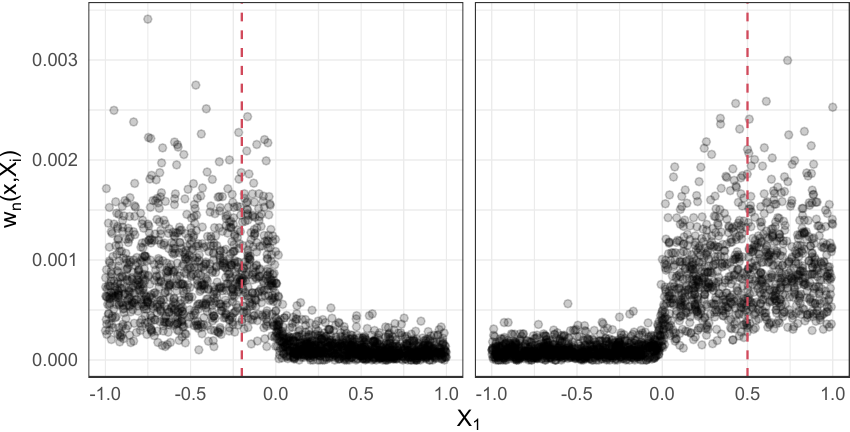}
    \caption{The height of the points represents the localizing weights $w_n(x, X_i)$ between a test predictor $x\in\R^p$ and each training observation $X_i\in\R^p$. The dashed line indicates the first coordinate of the test predictor values.}
    \label{fig:weight}
  \end{figure}

\section{Hyperparameter Tuning}\label{sec:cv}
Generalized random forests have several tuning parameters, such as the number of predictors selected at each split and the minimum node size.
This section presents a cross-validation scheme to tune such hyperparameters within our algorithm.
For large values of $\tau \approx 1$, the quantile loss is not a reliable scoring function since there might be few or no test observations above this level.
In our case, we can instead rely on the tail approximation in~(3.1) %
and use the deviance of the GPD as a reasonable metric for cross-validation.
Let $\NN_1, \dots, \NN_M$ be a random partitioning of $\{1, \dots, n\}$ into $M$ equally sized folds of the training data. %
For a sequence $\alpha_1, \dots, \alpha_J$ of tuning parameters,
we fit an \texttt{erf} object on the training set $(\g X_i, Y_i)$, $i \notin \NN_m$,
for each $\alpha_j$ and each fold $m$ as described in the \textsc{ERF-Fit} function in Algorithm~1. %
Given the fitted \texttt{erf} object, we estimate the GPD parameter vector $\hat\theta_m(X_i; \alpha_j)$ on the validation set $(\g X_i, Y_i)$, $i \in \NN_m$, as in the \textsc{ERF-Predict} function in Algorithm~1, %
and evaluate the cross-validation error by
\begin{equation}\label{eq:cv}
  CV(\alpha_j) = \sum_{m = 1}^{M}  \sum_{i \in \NN_m} \ell_{\hat\theta_m(X_i; \alpha_j)}(Z_i) 1\{Z_i > 0\},
\end{equation}
where $\theta \mapsto \ell_\theta(z)$ is the deviance of the GPD and $Z_i := (Y_i - \hat Q_{X_i}(\tau_n))_+$ are the exceedances.
Finally, we select the optimal tuning parameter $\alpha^*$ as the minimizer of $ CV(\alpha_j)$, $j = 1, \dots, J$.
To make the problem computationally tractable, we first fit the intermediate quantile function $x \mapsto \hat Q_{x}(\tau_n)$ on the entire data set. Then, on each fold, we estimate the similarity weight function $(x, y) \mapsto w_n(x, y)$ with ``small'' forests made of 50 trees.   We repeat the cross-validation scheme several times to reduce the variability of the results.

Even though, in principle, one could perform cross-validation on several tuning parameters, we find that the minimum node size $\kappa \in \mathbb{N}$ plays the most critical role for ERF.
The reason is that $\kappa$ controls the model complexity of the individual trees in the forest and consequently of the similarity weights $w_n(\cdot, \cdot)$.
Small (large) values of $\kappa$ correspond to trees with few (many) observations in each leaf and produce strongly (weakly)
localized weight functions $w_n(\cdot, \cdot)$.
The estimates of the shape parameter $\hat\xi(x)$ in~(3.4) %
may be sensitive to small changes of the localizing weights in the covariate space, leading to unstable quantile predictions through~(2.5). %
To reduce the variance of $\hat\xi(x)$, it is helpful to stabilize the log-likelihood $x \mapsto L_n(\theta; x)$ by estimating the similarity weights $w_n(\cdot, \cdot)$ with a forest made of trees with relatively large leaves. Notice that $w_n(x, y)$ influences the effective number of observations used in the weighted (negative) log-likelihood $L_n(\theta; x)$ equation~(3.3). %

Figure~\ref{fig:cv-works} shows numerical results of cross-validating the minimum node size $\kappa$ for the model described in Example~1. %
Here, we perform 5-fold cross-validation repeated three times by growing forests of 50 trees on each fold.
We measure the performance as the square root of the mean integrated squared error (MISE) between the estimated and the true quantile function over 50 simulations; see Section~4 %
for the definition of the MISE. We observe that the cross-validated performance of ERF (dashed line) is close to the minimum square root MISE, suggesting that the proposed cross-validation scheme works well.
\begin{figure}[!tb]
  \centering
  \includegraphics[scale=1]{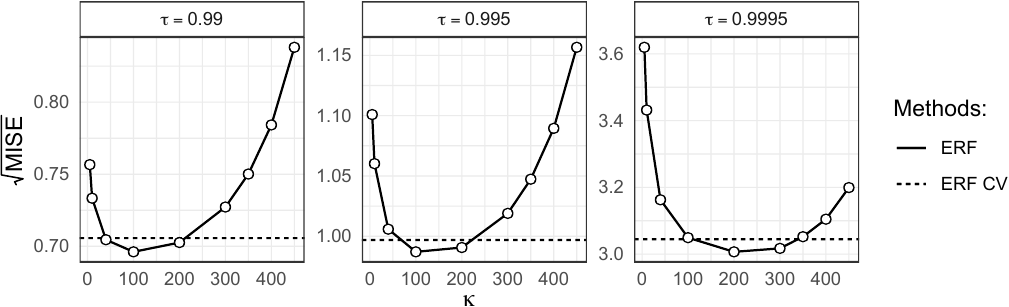}
  \caption{Solid line shows the square root of the MISE of ERF for different minimum node sizes $\kappa$ over 50 simulations. The dashed line shows the square root MISE of the cross-validated ERF. The data is generated according to Example~1. %
  }
  \label{fig:cv-works}
\end{figure}

  \section{Additional Material for Simulation Study}

  \subsection{Experiment 3}\label{app:experiment3}

  In this section, we consider more complex regression functions depending on more signal variables both in the scale and shape parameters. While the predictor variables $\g X$ are uniform distributed on $[-1,1]^p$ with $p=10$, the conditional response follows three different models
  \begin{align*}
    (Y \mid \g X = \g x) \sim s_j(\g x)T_{\nu(x)}, \quad j = 1, 2, 3,
  \end{align*}
  where we allow both degrees of freedom $\nu(x)$ and the scale $s_j(x)$ of the Student's $t$ distribution to depend on the predictors.
  In particular, we model the degrees of freedom as a decreasing function of the first predictor as $\nu(\g x) =  3 [2 + \tanh(-2x_1)]$, and the different scale functions as
  \begin{align*}
    \begin{split}
      & s_1(\g x) = [2 + \tanh(2 x_1)] (1 +  x_2 / 2),\\
      & s_2(\g x) = 4 - (x_1^2 + 2 x_2^2),\\
      & s_3(\g x) = 1 + 2 \pi \varphi(2x_1, 2x_2),
    \end{split}
  \end{align*}
  where $\varphi$ denotes a centered bivariate Gaussian density with unit variance and correlation coefficient equal to $0.75$.
  The first scale function $s_1(x)$ is non-linear with respect to the first predictor and contains an interaction effect between the first two predictors. The function $s_2(x)$ is quadratic and decreasing in the first two dimensions. The third scale function $s_3(x)$ is non-linear in the first two predictors and contains an interaction effect. The sample size is $n=5000$.

  In this experiment we compare ERF, GRF, GBEX, EGP Tail and the unconditional method. We leave out EGAM because we observed it performs poorly in the scenarios considered here.
  Figure~\ref{fig:boxplots-complex} shows the boxplots of $\rise$ over $m = 50$ simulations over different models, methods, and quantile levels. For better visualization, we remove large outliers of GRF, QRF, and EGP Tail.
  We observe that ERF and GBEX generally outperform the other methods over all models and quantile levels, where GBEX has a slight advantage in high quantiles for Models~2 and~3. GRF and QRF seem to deteriorate completely for very large quantiles.
  \begin{figure}[!ht]
    \centering
    \hspace*{-.5in}
    \includegraphics[scale=1]{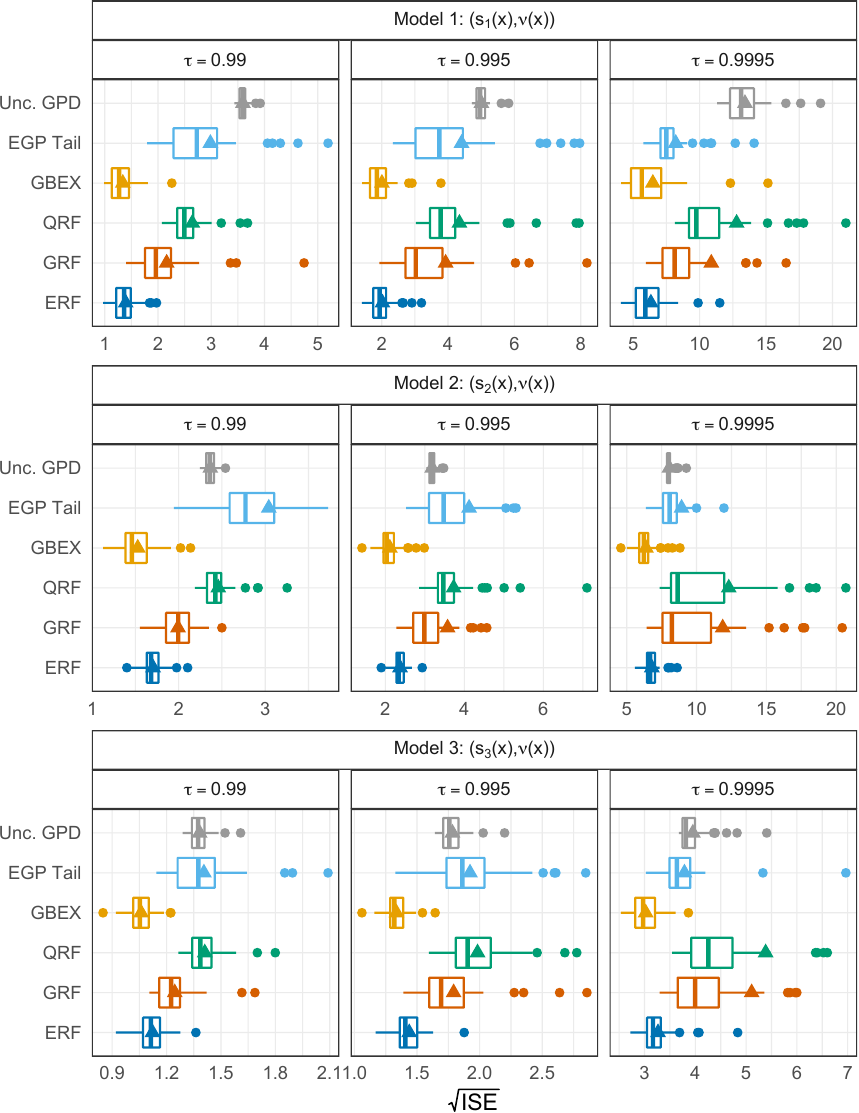}
    \caption{Boxplots of $\rise$ over $m=50$ simulations for different generative models (rows) and quantile levels (columns). The predictor space dimension is set to $p = 10$. Triangles represent the average values.}
    \label{fig:boxplots-complex}
  \end{figure}

  \subsection{Sensitivity of ERF and two Alternative Random Forest Methods to the Intermediate Threshold Level}\label{app:sensitivity}

  In this section, we study the sensitivity of ERF and the two alternative random forest method for the Weissman extrapolation mentioned in Section~(3.2) %
  to different choices of the intermediate quantile level $\tau_n$.

  While ERF relies on the approximation~(2.5) %
  for extreme quantile estimation, the two alternative methods we compare are both based on the Weissman approximation
  \begin{align}\label{eq:extrap_simple}
    Q_x(\tau) \approx Q_{x}(\tau_n)\left(\frac{1 - \tau}{1 - \tau_n}\right) ^{-\xi(x)},
  \end{align}
which only requires estimation of the intermediate quantile and the shape parameter (but only works for heavy-tailed data). The first method, which we refer to as the random forest Hill estimator, uses our new localized Hill estimator introduced in~(3.8). %
The second method, suggested by a referee and referred to as the random forest shape estimator, relies on the fact that the log-transformed exceedances are approximately exponential distributions with mean $\xi(x)$, that is, approximately $\log(Y_i / \hat{Q}_{X_i}(\tau_n))_+ \sim \mathrm{Exp}(1 / \xi(X_i))$ for $n$ large enough and all $i$ with $Y_i > \hat{Q}_{X_i}(\tau_n)$. We therefore can fit a regression random forest to the mean parameter $\xi(x)$ and estimate the target quantiles using~\eqref{eq:extrap_simple}. All methods use the same intermediate quantile estimator, namely a quantile random forest.

  Figure~\ref{fig:sensitivity} shows the prediction error of ERF compared to the two Weissman-type methods as a function of the intermediate quantile level $\tau_n$ for a fixed target quantile $\tau = 0.9995$, and three data-generating processes. 
  We measure the performance as the square root of the median integrated squared error (ISE) between the estimated and the true quantile function over $m = 100$ simulations. We choose the median instead of the mean ISE to remove the effect of large outliers in the Weissman-type methods.

 When the conditional response $Y \mid X = x$ follows a Student's $t$-distribution (left panels of Figure~\ref{fig:sensitivity}), the pre-asymptotic bias of the Weissman-type methods dominates their smaller variance, compared to ERF. As a consequence, we observe that these methods are very sensitive to the choice of the intermediate quantile $\tau_n$, and in particular, it must be chosen very high to decrease the bias. In comparison, ERF does not seem to be very sensitive to the choice of $\tau_n$.
  In the less realistic case where the conditional response $Y \mid X = x$ follows exactly a Pareto distribution (right panels of Figure~\ref{fig:sensitivity}), the pre-asymptotic bias of the Weissman-type methods disappears by construction, and we can observe the effect of the variance. As expected, we see that the Weissman-type methods have a slight advantage over ERF due to their smaller variance (since they estimate one parameter instead of two). In particular, our random forest Hill estimator seems to perform well in this case. In general, we recommend using ERF since in practice, the presence of an (unknown) pre-asymptotic bias can usually not be excluded.

  \begin{figure}[ht]
    \centering
    \hspace*{-.5in}
    \includegraphics[scale=.75]{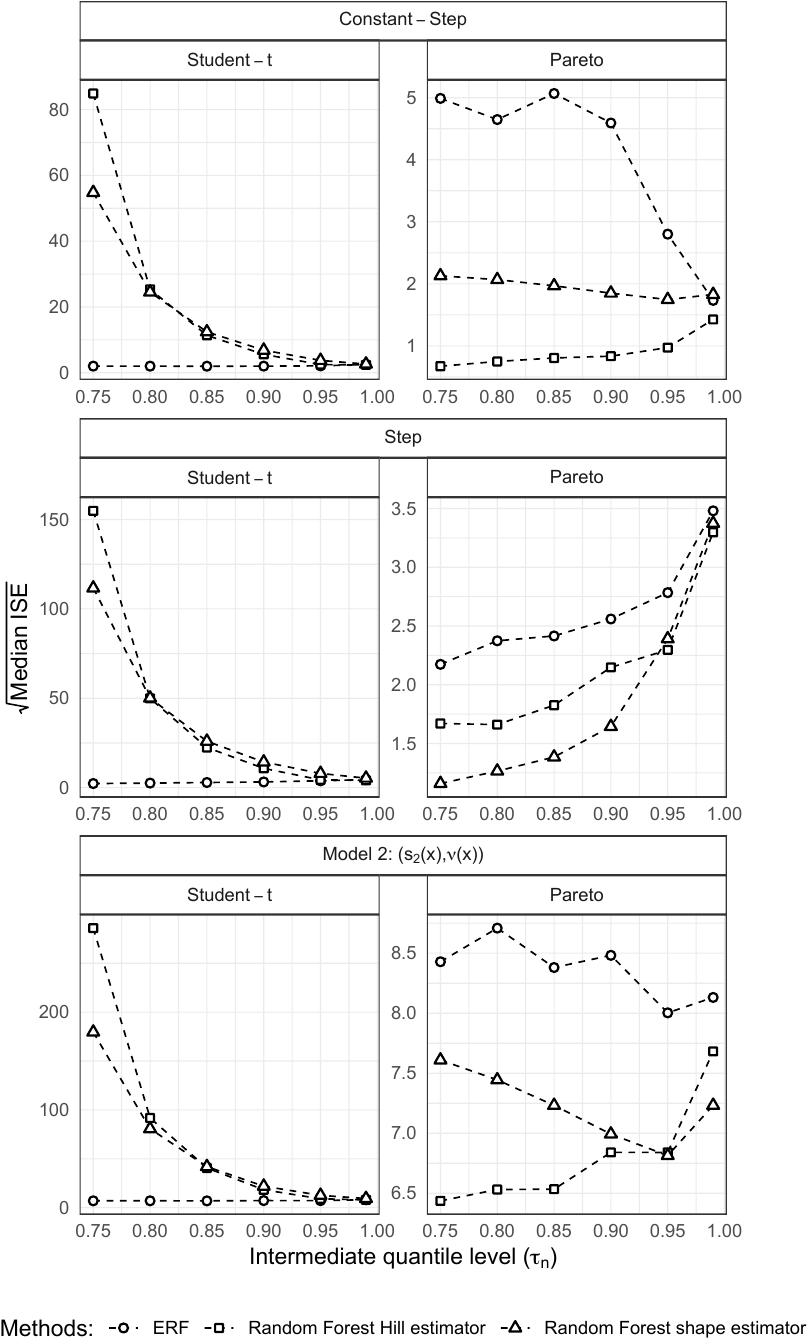}
    \caption{Square root of the median ISE for different intermediate quantile levels $\tau_n$ over $m = 100$ simulations for ERF (circles), random forest Hill estimator (squares) and random forest shape estimator (triangles).
    The target quantile level is set to $\tau = 0.9995$ and the training and test sample sizes are $n = 1000$ and $n'= 100$.
    The response variable $Y \mid X = x$ follows a Student's $t$-distribution (left) and a Pareto distribution (right) 
    with constant scale $s(x) \equiv 1$ and shape parameter $\xi(x) = 1/(4 + 8 \cdot \1\{x_2 > 0\})$ (top),
    with scale $s(x) = 1 + \1\{x_1 > 0\}$ and constant shape parameter $\xi(x) \equiv 0.25$ (middle),
    and with scale $s_2(\g x) = 4 - [x_1^2 + 2 x_2^2]$ and shape parameter $\xi(\g x) =  1 / [6 + 3\tanh(-2x_1)]$ and (bottom).
    The predictor space has $p = 2$ dimensions.}
    \label{fig:sensitivity}
  \end{figure}

  \subsection{Bias--Variance decomposition of the MISE}
  In this section, we consider again the experiments of Section~4.3 %
  where we decompose the MISE into its bias and variance terms (see Figure~\ref{fig:bias-var}).
  In the top three panels of Figure~\ref{fig:bias-var}, we fix the dimension to $p = 10$ and study the performance as the target quantile $\tau$ grows. 
  We observe that the poor performance of classical forest-based methods is mainly driven by a large variance, since there are few or no observations available at very high quantile levels.
  On the other hand, the methods that rely on extrapolation have much lower variance and bias.
  In the bottom three panels of Figure~\ref{fig:bias-var}, we fix the target quantile $\tau = 0.9995$ and study the performance as the dimension of the predictor space $p$ grows.
  We can clearly observe here that EGAM poor performance is mainly driven by its bias since the method is not designed to scale with larger dimensions.

\begin{figure}[!tb]
  \centering
  \includegraphics[scale=1]{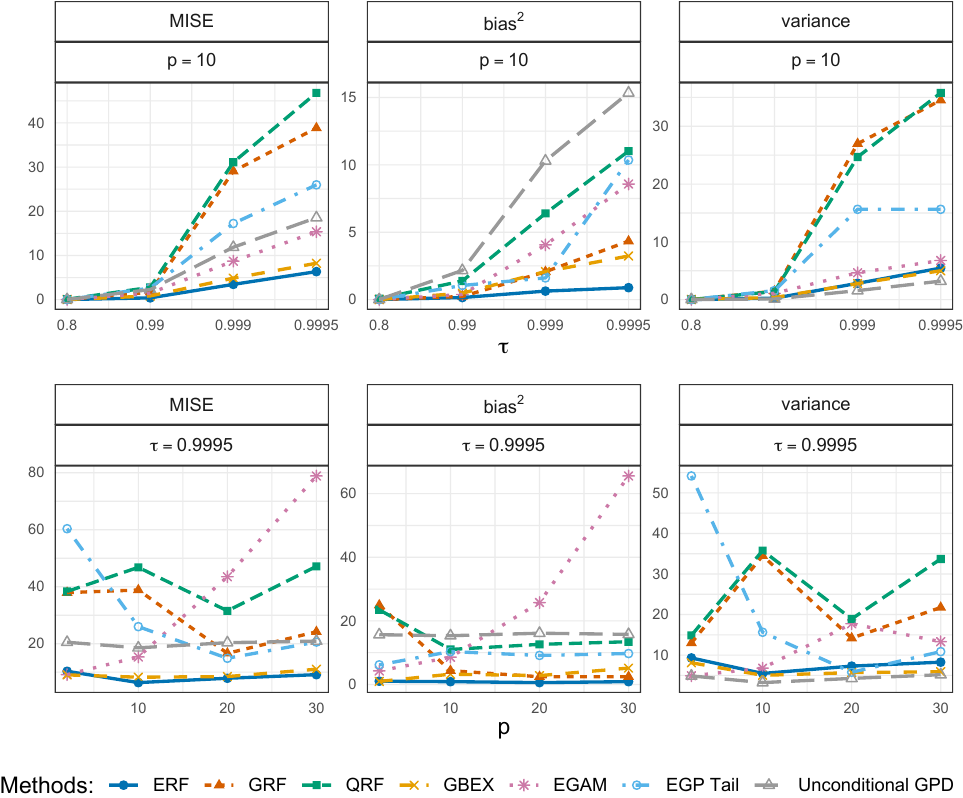}
  \caption{Square root MISE and its bias and variance decomposition for different methods against the quantile level $\tau$ in dimension $p = 10$ (top three panels), and against the model dimension $p$ for quantile levels $\tau=0.9995$ (bottom three panels).
  }
  \label{fig:bias-var}
\end{figure}	

  \section{Additional Material for U.S.~Wage Analysis}\label{app:additional-app}

  \subsection{Additional Figure}
  \label{app:additional-app-1}

  Figure~\ref{fig:erf-params-age} shows that estimated GPD parameters $\hat\theta(x)$ for the original response as a function of age for groups with less or more than 15 years of education.

  \begin{figure}[!ht]
    \centering
    \includegraphics[scale=1]{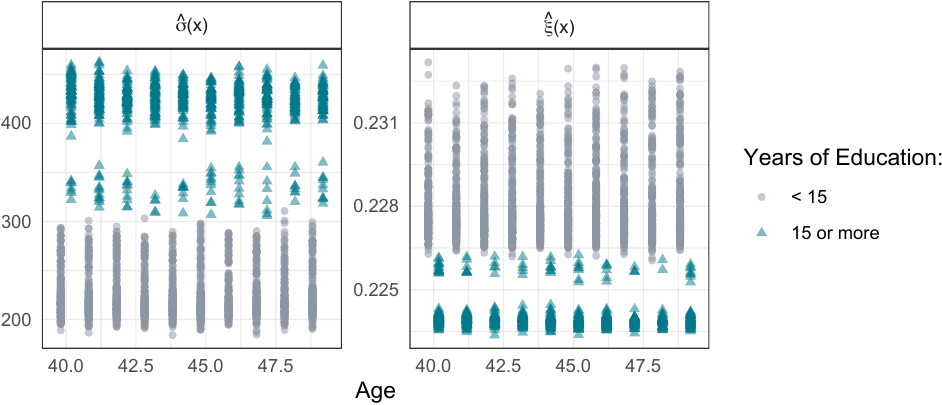}
    \caption{Estimated GPD parameters $\hat\theta(x)$ as a function of age for groups with less (circles) or more (triangles) than 15 years of education.
    }
    \label{fig:erf-params-age}
  \end{figure}

  \subsection{Analysis with Log-Transformed Response}
  \label{app:additional-app-2}

  Following \cite{angrist2009}, we consider here the natural logarithm of the wage as a response variable for quantile regression. We perform the same analysis as in Section~5 %
  again with this log-transformed response since it highlights several interesting properties of the ERF algorithm.
  Figure~\ref{fig:erf-params-log} shows the GPD parameters $\hat\theta^{\log}(x)$ estimated by ERF as a function of years of education when the response is $\log(Y)$.
  We notice that the log transformation makes the response lighter-tailed, with estimated shape parameters $\hat\xi^{\log}(x)$ fairly close to $0$. The scale parameters $\hat\sigma^{\log}(x)$ still show a certain structure, but they vary on a much smaller scale compared to $\hat\sigma(x)$ estimated on the original response; see Figure~5 %
  in the main text. These observations are consistent with theory since it is well-known that the log-transformation renders heavy-tailed data into light-tailed \citep[][Example 3.3.33]{embr2012}. Moreover, the shape parameter on the original data then essentially acts as a scale parameter in the GPD approximation of the log-transformed data, explaining the smaller variation of $\hat\sigma^{\log}(x)$.

  \begin{figure}[!t]
    \centering
    \includegraphics[scale=1]{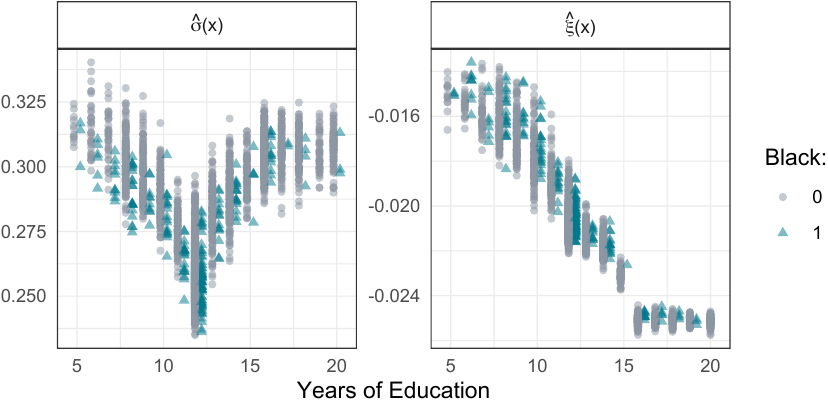}
    \caption{Estimated GPD parameters $\hat\theta(x)$ for the log-response as a function of the years of education for the black (triangles) and white (circles) subgroups.
    }
    \label{fig:erf-params-log}
  \end{figure}

  Figure~\ref{fig:erf-predicts-log} shows the (exponentiated) predicted quantiles $\exp\{\hat Q_x^{\log}(\tau)\}$ of the different methods as a function of years of education when the response is $\log(Y)$; we removed again all quantiles above 6,000 predicted by GRF. By construction, GRF is invariant to the log-transformation, while the methods based on extrapolation may produce predictions that differ from $\hat Q_x(\tau)$ in Figure~6 %
  fitted on the original data. The reason is that the approximation by the GPD is done on heavy-tailed data on the original scale and on much lighter-tailed data on the log scale. We observe in Figure~\ref{fig:erf-predicts-log} that the flexible methods ERF and GBEX have the desirable property that the predictions do not change much under marginal transformations.
  The unconditional method on the other hand seems to be sensitive to marginal transformation and works better on the log-transformed data as it captures a larger variability of the conditional quantiles even for high $\tau$.
  This is confirmed by Figure~\ref{fig:wage-cv-log} where we observe that the unconditional method has a smaller loss, especially for higher quantiles, while all other methods have a similar performance as on the original data.
  To better understand this behavior, we recall the GPD approximation~(2.5) %
  for large quantiles estimated on the original response as   
  \begin{align}\label{eq:quantiles-y}
    \hat Q_x(\tau) \approx \hat Q_x(\tau_n) + G^{-1}\left(\frac{\tau-\tau_n}{1-\tau_n}; \hat\theta(x)\right),
  \end{align}
  where $G^{-1}$ is the inverse of the distribution function~(2.2) %
  of the GPD; see Figure~6 %
  in the main text. On the other hand, first estimating the quantiles of the log-transformed data with a similar approximation and then exponentiating these estimates results in
  \begin{align}\label{eq:quantiles-logy}
    \exp\{\hat Q_x^{\log}(\tau) \} \approx \hat Q_x(\tau_n) \exp \left\{G^{-1}\left(\frac{\tau-\tau_n}{1-\tau_n}; \hat\theta^{\log}(x)\right)\right\},
  \end{align}
  where $\hat\theta^{\log}(x)$ is the parameter vector of the GPD fitted for the response $\log(Y)$; see Figure~\ref{fig:erf-predicts-log}. We note that $\hat Q_x(\tau_n)$ is the same in both approximations since it is fitted using quantile GRF, which is invariant under marginal transformations. Comparing \eqref{eq:quantiles-y} and \eqref{eq:quantiles-logy} shows that the intermediate quantiles have an additive and multiplicative influence on the extreme quantiles, respectively. This explains why using the unconditional method for the GPD with $\hat\theta^{\log}(x) \equiv \hat\theta^{\log}$ seems to work better on the log-transformed data. Indeed, the different multiplicative scalings observed for ERF and GBEX in Figure~6 %
  in the main text cannot be represented by \eqref{eq:quantiles-y} with unconditional GPD, but they can be represented by \eqref{eq:quantiles-logy} if the intermediate quantile already carries the structure.

  \begin{figure}[!ht]
    \centering
    \hspace*{-.5in}
    \includegraphics[scale=1]{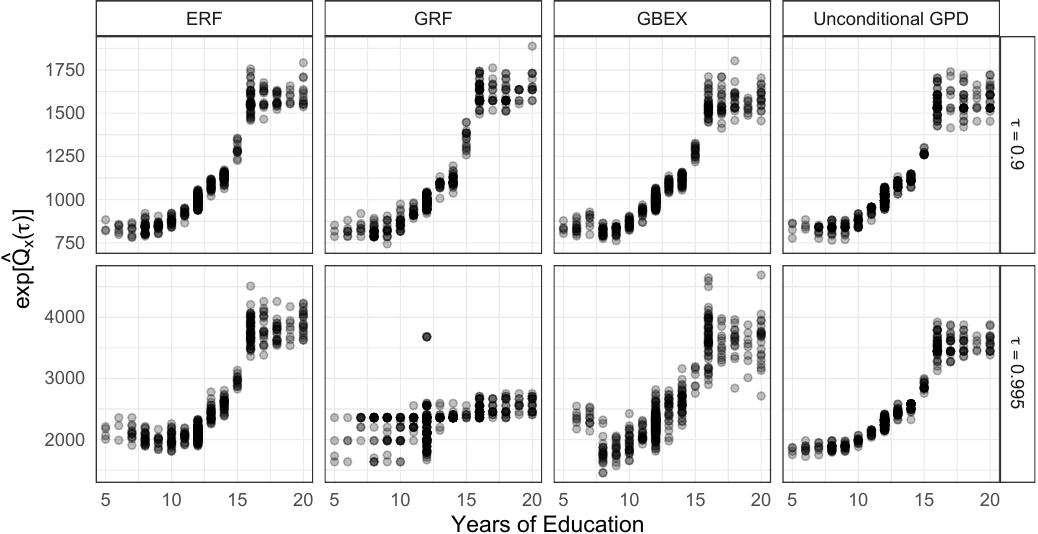}
    \caption{Predicted quantiles at levels $\tau = 0.9, 0.995$ for ERF, GRF, GBEX, and the unconditional method fitted on the log-response.
    }
    \label{fig:erf-predicts-log}
  \end{figure}

  \begin{figure}[!ht]
    \centering
    \hspace*{-0.75in}
    \includegraphics[scale=1]{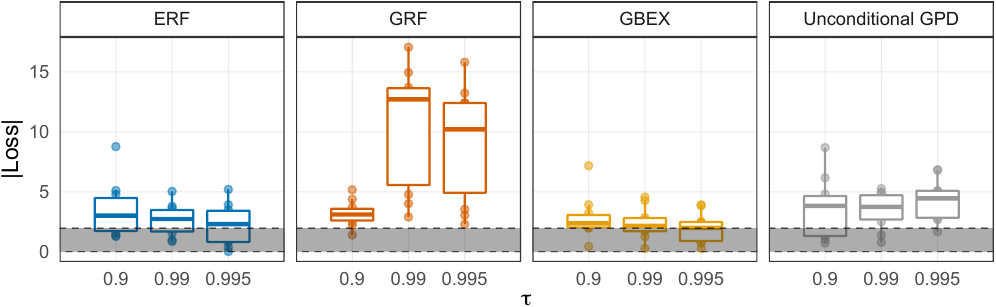}
    \caption{Absolute value of the loss~(5.1) %
    for the different methods fitted on the log-response of the U.S.~wage data. The shaded area represents the 95\% interval of the absolute value of a standard normal distribution.}
    \label{fig:wage-cv-log}
  \end{figure}

\end{appendix}

\clearpage
\bibliography{reference}

\begin{thebibliography}{66}
\providecommand{\natexlab}[1]{#1}
\providecommand{\url}[1]{\texttt{#1}}
\expandafter\ifx\csname urlstyle\endcsname\relax
  \providecommand{\doi}[1]{doi: #1}\else
  \providecommand{\doi}{doi: \begingroup \urlstyle{rm}\Url}\fi

\bibitem[Allouche et~al.(2022)Allouche, El~Methni, and Girard]{girard2022}
M.~Allouche, J.~El~Methni, and S.~Girard.
\newblock A refined {W}eissman estimator for extreme quantiles.
\newblock \emph{Extremes}, pages 1--28, 2022.

\bibitem[Angrist et~al.(2006)Angrist, Chernozhukov, and
  Fernández-Val]{angrist2009b}
J.~D. Angrist, V.~Chernozhukov, and I.~Fernández-Val.
\newblock Quantile regression under misspecification, with an application to
  the {U.S.} wage structure.
\newblock \emph{Econometrica}, 74\penalty0 (2):\penalty0 539--563, 2006.
\newblock ISSN 00129682, 14680262.
\newblock URL \url{http://www.jstor.org/stable/3598810}.

\bibitem[Angrist et~al.(2009)Angrist, Chernozhukov, and
  Fernández-Val]{angrist2009}
J.~D. Angrist, V.~Chernozhukov, and I.~Fernández-Val.
\newblock Replication data for: {Q}uantile regression under misspecification,
  with an application to the {U.S.} wage structure, 2009.
\newblock URL \url{https://doi.org/10.7910/DVN/JNEOLQ}.
\newblock https://doi.org/10.7910/DVN/JNEOLQ.

\bibitem[Athey et~al.(2019)Athey, Tibshirani, and Wager]{athe2019}
S.~Athey, J.~Tibshirani, and S.~Wager.
\newblock Generalized random forests.
\newblock \emph{The Annals of Statistics}, 47\penalty0 (2):\penalty0
  1148–1178, 2019.
\newblock URL \url{https://doi.org/10.1214/18-AOS1709}.

\bibitem[Balkema and de~Haan(1974)]{balk1974}
A.~A. Balkema and L.~de~Haan.
\newblock {Residual Life Time at Great Age}.
\newblock \emph{The Annals of Probability}, 2\penalty0 (5):\penalty0 792 --
  804, 1974.
\newblock \doi{10.1214/aop/1176996548}.
\newblock URL \url{https://doi.org/10.1214/aop/1176996548}.

\bibitem[Beirlant et~al.(2004)Beirlant, Wet, and Goegebeur]{beirlant2004}
J.~Beirlant, T.~D. Wet, and Y.~Goegebeur.
\newblock Nonparametric estimation of extreme conditional quantiles.
\newblock \emph{Statistical Computation and Simulation}, 74\penalty0
  (8):\penalty0 567 -- 580, 2004.
\newblock \doi{10.1080/00949650310001623407}.
\newblock URL \url{https://doi.org/10.1080/00949650310001623407}.

\bibitem[Beirlant et~al.(2005)Beirlant, Dierckx, and Guillou]{beir2005}
J.~Beirlant, G.~Dierckx, and A.~Guillou.
\newblock {Estimation of the extreme-value index and generalized quantile
  plots}.
\newblock \emph{Bernoulli}, 11\penalty0 (6):\penalty0 949 -- 970, 2005.
\newblock \doi{10.3150/bj/1137421635}.
\newblock URL \url{https://doi.org/10.3150/bj/1137421635}.

\bibitem[Biau(2012)]{biau12a}
G.~Biau.
\newblock Analysis of a random forests model.
\newblock \emph{Journal of Machine Learning Research}, 13\penalty0
  (38):\penalty0 1063--1095, 2012.
\newblock URL \url{http://jmlr.org/papers/v13/biau12a.html}.

\bibitem[Bingham et~al.(1989)Bingham, Goldie, and Teugels]{bingham1989}
N.~H. Bingham, C.~M. Goldie, and J.~L. Teugels.
\newblock \emph{Regular variation}, volume~27 of \emph{Encyclopedia of
  Mathematics and its Applications}.
\newblock Cambridge University Press, Cambridge, 1989.
\newblock ISBN 0-521-37943-1.

\bibitem[Breiman(2001)]{brei2001}
L.~Breiman.
\newblock Random forests.
\newblock \emph{Machine Learning, 45, 5–32}, 2001.
\newblock ISSN 0885-6125.
\newblock \doi{10.1023/A:1010933404324}.
\newblock URL \url{http://dx.doi.org/10.1023/A:1010933404324}.

\bibitem[B{\"u}cher et~al.(2020)B{\"u}cher, Lilienthal, Kinsvater, and
  Fried]{bucher2020penalized}
A.~B{\"u}cher, J.~Lilienthal, P.~Kinsvater, and R.~Fried.
\newblock Penalized quasi-maximum likelihood estimation for extreme value
  models with application to flood frequency analysis.
\newblock \emph{Extremes}, pages 1--24, 2020.
\newblock \doi{10.1007/s10687-020-00379-y}.
\newblock URL \url{https://doi.org/10.1007/s10687-020-00379-y}.

\bibitem[Chavez-Demoulin and Davison(2005)]{CDD05}
V.~Chavez-Demoulin and A.~C. Davison.
\newblock Generalized additive modelling of sample extremes.
\newblock \emph{Journal of the Royal Statistical Society: Series C (Applied
  Statistics)}, 54\penalty0 (1):\penalty0 207--222, 2005.
\newblock \doi{https://doi.org/10.1111/j.1467-9876.2005.00479.x}.
\newblock URL
  \url{https://rss.onlinelibrary.wiley.com/doi/abs/10.1111/j.1467-9876.2005.00479.x}.

\bibitem[Chernozhukov(2005)]{chernozhukov2005}
V.~Chernozhukov.
\newblock {Extremal quantile regression}.
\newblock \emph{The Annals of Statistics}, 33\penalty0 (2):\penalty0 806 --
  839, 2005.
\newblock \doi{10.1214/009053604000001165}.
\newblock URL \url{https://doi.org/10.1214/009053604000001165}.

\bibitem[Coles and Dixon(1999)]{coles1999likelihood}
S.~G. Coles and M.~J. Dixon.
\newblock Likelihood-based inference for extreme value models.
\newblock \emph{Extremes}, 2\penalty0 (1):\penalty0 5--23, 1999.

\bibitem[Daouia et~al.(2011)Daouia, Gardes, Girard, and Lekina]{abdelaati2011}
A.~Daouia, L.~Gardes, S.~Girard, and A.~Lekina.
\newblock Kernel estimators of extreme level curves.
\newblock \emph{Test, Spanish Society of Statistics and Operations
  Research/Springer}, 20\penalty0 (2):\penalty0 311 -- 333, 2011.
\newblock \doi{10.1007/s11749-010-0196-0}.

\bibitem[Davison(1984)]{davison1984}
A.~C. Davison.
\newblock \emph{Modelling Excesses over High Thresholds, with an Application},
  pages 461--482.
\newblock Springer Netherlands, Dordrecht, 1984.
\newblock ISBN 978-94-017-3069-3.
\newblock \doi{10.1007/978-94-017-3069-3_34}.
\newblock URL \url{https://doi.org/10.1007/978-94-017-3069-3_34}.

\bibitem[de~Haan and Ferreira(2006)]{deh2006a}
L.~de~Haan and A.~Ferreira.
\newblock \emph{Extreme Value Theory}.
\newblock Springer, New York, 2006.

\bibitem[de~Zea~Bermudez and Turkman(2003)]{de2003bayesian}
P.~de~Zea~Bermudez and M.~A. Turkman.
\newblock Bayesian approach to parameter estimation of the generalized pareto
  distribution.
\newblock \emph{Test}, 12\penalty0 (1):\penalty0 259--277, 2003.

\bibitem[Dombry(2015)]{dombry2015}
C.~Dombry.
\newblock {Existence and consistency of the maximum likelihood estimators for
  the extreme value index within the block maxima framework}.
\newblock \emph{Bernoulli}, 21\penalty0 (1):\penalty0 420 -- 436, 2015.
\newblock \doi{10.3150/13-BEJ573}.
\newblock URL \url{https://doi.org/10.3150/13-BEJ573}.

\bibitem[Drees et~al.(2004)Drees, Ferreira, and de~Haan]{dre2004}
H.~Drees, A.~Ferreira, and L.~de~Haan.
\newblock On maximum likelihood estimation of the extreme value index.
\newblock \emph{Ann. Appl. Probab.}, 14\penalty0 (3):\penalty0 1179--1201,
  2004.
\newblock ISSN 1050-5164.
\newblock \doi{10.1214/105051604000000279}.
\newblock URL \url{https://doi.org/10.1214/105051604000000279}.

\bibitem[El~Methni et~al.(2012)El~Methni, Gardes, Girard, and
  Guillou]{girard2012}
J.~El~Methni, L.~Gardes, S.~Girard, and A.~Guillou.
\newblock Estimation of extreme quantiles from heavy and light tailed
  distributions.
\newblock \emph{Journal of Statistical Planning and Inference}, 142\penalty0
  (10):\penalty0 2735--2747, 2012.

\bibitem[Embrechts et~al.(2012)Embrechts, Kl{\"u}ppelberg, and
  Mikosch]{embr2012}
P.~Embrechts, C.~Kl{\"u}ppelberg, and T.~Mikosch.
\newblock \emph{Modelling Extremal Events for Insurance and Finance}.
\newblock Stochastic Modelling and Applied Probability. Springer Heidelberg New
  York Dordrecht London, $9^{th}$ edition, 2012.
\newblock ISBN 978-3-540-60931-5.
\newblock \doi{10.1007/978-3-642-33483-2}.

\bibitem[Engelke et~al.(2019)Engelke, de~Fondeville, and Oesting]{eng2017a}
S.~Engelke, R.~de~Fondeville, and M.~Oesting.
\newblock Extremal behaviour of aggregated data with an application to
  downscaling.
\newblock \emph{Biometrika}, 106:\penalty0 127--144, 2019.
\newblock \doi{10.1093/biomet/asy052}.

\bibitem[Farkas et~al.(2020)Farkas, Lopez, and Thomas]{FLT20}
S.~Farkas, O.~Lopez, and M.~Thomas.
\newblock Cyber claim analysis through generalized pareto regression trees with
  applications to insurance pricing and reserving.
\newblock Preprint at https://hal.archives-ouvertes.fr/hal-02118080v2, 2020.

\bibitem[Ferreira et~al.(2012)Ferreira, de~Haan, and Zhou]{fer2012}
A.~Ferreira, L.~de~Haan, and C.~Zhou.
\newblock Exceedance probability of the integral of a stochastic process.
\newblock \emph{J. Multivariate Anal.}, 105:\penalty0 241 -- 257, 2012.

\bibitem[Fisher and Tippett(1928)]{fish1928}
R.~A. Fisher and L.~H.~C. Tippett.
\newblock Limiting forms of the frequency distribution of the largest or
  smallest member of a sample.
\newblock \emph{Mathematical Proceedings of the Cambridge Philosophical
  Society}, 24\penalty0 (2):\penalty0 180–190, 1928.
\newblock \doi{10.1017/S0305004100015681}.

\bibitem[Friedman(2001)]{frie2001}
J.~H. Friedman.
\newblock Greedy function approximation: a gradient boosting machine.
\newblock \emph{The Annals of Statistics}, 29\penalty0 (5):\penalty0
  1189--1232, 2001.

\bibitem[Friedman(2002)]{fried2002}
J.~H. Friedman.
\newblock Stochastic gradient boosting.
\newblock \emph{Computational Statistics and Data Analysis}, 38\penalty0
  (4):\penalty0 367--378, 2002.

\bibitem[Fu and Knight(2000)]{fu2000}
W.~Fu and K.~Knight.
\newblock {Asymptotics for lasso-type estimators}.
\newblock \emph{The Annals of Statistics}, 28\penalty0 (5):\penalty0 1356 --
  1378, 2000.

\bibitem[Gardes and Stupfler(2014)]{gar2014}
L.~Gardes and G.~Stupfler.
\newblock Estimation of the conditional tail index using a smoothed local
  {H}ill estimator.
\newblock \emph{Extremes}, 17\penalty0 (1):\penalty0 45--75, 2014.
\newblock ISSN 1386-1999.
\newblock \doi{10.1007/s10687-013-0174-5}.
\newblock URL \url{https://doi.org/10.1007/s10687-013-0174-5}.

\bibitem[Gardes and Stupfler(2019)]{GardesStupfler2019}
L.~Gardes and G.~Stupfler.
\newblock An integrated functional {W}eissman estimator for conditional extreme
  quantiles.
\newblock \emph{REVSTAT}, 17\penalty0 (1):\penalty0 109--144, 2019.
\newblock ISSN 1645-6726.
\newblock \doi{10.1007/s10687-013-0174-5}.
\newblock URL \url{https://doi.org/10.1007/s10687-013-0174-5}.

\bibitem[Gnedenko(1943)]{gned1943}
B.~Gnedenko.
\newblock Sur la distribution limite du terme maximum d'une s\'{e}rie
  al\'{e}atoire.
\newblock \emph{Ann. of Math. (2)}, 44:\penalty0 423--453, 1943.
\newblock ISSN 0003-486X.
\newblock \doi{10.2307/1968974}.
\newblock URL \url{https://doi.org/10.2307/1968974}.

\bibitem[Goegebeur et~al.(2014)Goegebeur, Guillou, and Schorgen]{goe2014}
Y.~Goegebeur, A.~Guillou, and A.~Schorgen.
\newblock Nonparametric regression estimation of conditional tails: the random
  covariate case.
\newblock \emph{Statistics}, 48\penalty0 (4):\penalty0 732--755, 2014.
\newblock ISSN 0233-1888.
\newblock \doi{10.1080/02331888.2013.800064}.
\newblock URL \url{https://doi.org/10.1080/02331888.2013.800064}.

\bibitem[Goegebeur et~al.(2015)Goegebeur, Guillou, and Stupfler]{goe2015}
Y.~Goegebeur, A.~Guillou, and G.~Stupfler.
\newblock {Uniform asymptotic properties of a nonparametric regression
  estimator of conditional tails}.
\newblock \emph{Annales de l'Institut Henri Poincaré, Probabilités et
  Statistiques}, 51\penalty0 (3):\penalty0 1190 -- 1213, 2015.
\newblock \doi{10.1214/14-AIHP624}.
\newblock URL \url{https://doi.org/10.1214/14-AIHP624}.

\bibitem[Grimshaw(1993)]{grimshaw1993}
S.~D. Grimshaw.
\newblock Computing maximum likelihood estimates for the generalized {P}areto
  distribution.
\newblock \emph{Technometrics}, 35\penalty0 (2):\penalty0 185--191, 1993.
\newblock ISSN 0040-1706.
\newblock \doi{10.2307/1269663}.
\newblock URL \url{https://doi.org/10.2307/1269663}.

\bibitem[Halton(1964)]{halton1964}
J.~H. Halton.
\newblock Algorithm 247: Radical-inverse quasi-random point sequence.
\newblock \emph{Commun. ACM}, 7\penalty0 (12):\penalty0 701–702, Dec. 1964.
\newblock ISSN 0001-0782.
\newblock \doi{10.1145/355588.365104}.
\newblock URL \url{https://doi.org/10.1145/355588.365104}.

\bibitem[Hastie et~al.(2009)Hastie, Tibshirani, and Friedman]{hast2009}
T.~J. Hastie, R.~Tibshirani, and J.~Friedman.
\newblock \emph{The Elements of Statistical Learning}.
\newblock Springer, New York, NY, USA, second edition, 2009.

\bibitem[Heagerty and Pepe(1999)]{heagerty1999semiparametric}
P.~J. Heagerty and M.~S. Pepe.
\newblock Semiparametric estimation of regression quantiles with application to
  standardizing weight for height and age in us children.
\newblock \emph{Journal of the Royal Statistical Society: Series C (Applied
  Statistics)}, 48\penalty0 (4):\penalty0 533--551, 1999.

\bibitem[Hill(1975)]{hill1975}
B.~M. Hill.
\newblock A simple general approach to inference about the tail of a
  distribution.
\newblock \emph{Annals of Statistics}, 13, 1975.

\bibitem[Hsing(1991)]{hsing1991}
T.~Hsing.
\newblock On tail index estimation using dependent data.
\newblock \emph{Ann. Statist.}, 19\penalty0 (3):\penalty0 1547--1569, 1991.
\newblock ISSN 0090-5364.
\newblock \doi{10.1214/aos/1176348261}.
\newblock URL \url{https://doi.org/10.1214/aos/1176348261}.

\bibitem[Koenker(2011)]{koen2011}
R.~Koenker.
\newblock {Additive models for quantile regression: Model selection and
  confidence bandaids}.
\newblock \emph{Brazilian Journal of Probability and Statistics}, 25\penalty0
  (3):\penalty0 239 -- 262, 2011.
\newblock \doi{10.1214/10-BJPS131}.
\newblock URL \url{https://doi.org/10.1214/10-BJPS131}.

\bibitem[Koenker and Bassett(1978)]{koen1978}
R.~Koenker and G.~Bassett.
\newblock Regression quantiles.
\newblock \emph{Journal of the Econometric Society}, 46\penalty0 (1):\penalty0
  33--50, 1978.

\bibitem[Martins-Filho et~al.(2015)Martins-Filho, Yao, and Torero]{martins2015}
C.~Martins-Filho, F.~Yao, and M.~Torero.
\newblock High-order conditional quantile estimation based on nonparametric
  models of regression.
\newblock \emph{Econometric Reviews}, 34\penalty0 (6 - 10):\penalty0 907 --
  958, 2015.
\newblock \doi{10.1080/07474938.2014.956612}.

\bibitem[Meinshausen(2006)]{mein2006}
N.~Meinshausen.
\newblock Quantile regression forests.
\newblock \emph{Journal of Machine Learning Research}, 7:\penalty0 983–999,
  2006.

\bibitem[Pasche and Engelke(2022)]{Olivier2022}
O.~Pasche and S.~Engelke.
\newblock Neural networks for extreme quantile regression with an application
  to forecasting of flood risk.
\newblock \emph{arXiv preprint arXiv:2208.07590}, 2022.

\bibitem[Pickands(1975)]{pick1975}
J.~I. Pickands.
\newblock Statistical inference using extreme value order statistics.
\newblock \emph{Annals of Statististics}, 1975.

\bibitem[Scornet et~al.(2015)Scornet, Biau, and Vert]{scornet2015}
E.~Scornet, G.~Biau, and J.-P. Vert.
\newblock {Consistency of random forests}.
\newblock \emph{The Annals of Statistics}, 43\penalty0 (4):\penalty0 1716 --
  1741, 2015.
\newblock \doi{10.1214/15-AOS1321}.
\newblock URL \url{https://doi.org/10.1214/15-AOS1321}.

\bibitem[Smith(1985)]{smith1985}
R.~L. Smith.
\newblock Maximum likelihood estimation in a class of nonregular cases.
\newblock \emph{Biometrika}, 72\penalty0 (1):\penalty0 67--90, 1985.
\newblock ISSN 00063444.
\newblock URL \url{http://www.jstor.org/stable/2336336}.

\bibitem[Stone(1980)]{stone1980optimal}
C.~J. Stone.
\newblock {Optimal Rates of Convergence for Nonparametric Estimators}.
\newblock \emph{The Annals of Statistics}, 8\penalty0 (6):\penalty0 1348 --
  1360, 1980.
\newblock \doi{10.1214/aos/1176345206}.
\newblock URL \url{https://doi.org/10.1214/aos/1176345206}.

\bibitem[Stone(1982)]{stone1982optimal}
C.~J. Stone.
\newblock {Optimal Global Rates of Convergence for Nonparametric Regression}.
\newblock \emph{The Annals of Statistics}, 10\penalty0 (4):\penalty0 1040 --
  1053, 1982.
\newblock \doi{10.1214/aos/1176345969}.
\newblock URL \url{https://doi.org/10.1214/aos/1176345969}.

\bibitem[Taillardat et~al.(2019)Taillardat, Fougères, Naveau, and
  Mestre]{tail2017}
M.~Taillardat, A.-L. Fougères, P.~Naveau, and O.~Mestre.
\newblock Forest-based and semiparametric methods for the postprocessing of
  rainfall ensemble forecasting.
\newblock \emph{Weather and Forecasting}, 34\penalty0 (3):\penalty0 617 -- 634,
  2019.
\newblock \doi{10.1175/WAF-D-18-0149.1}.
\newblock URL
  \url{https://journals.ametsoc.org/view/journals/wefo/34/3/waf-d-18-0149_1.xml}.

\bibitem[Taylor(1999)]{taylor1999quantile}
J.~W. Taylor.
\newblock A quantile regression approach to estimating the distribution of
  multiperiod returns.
\newblock \emph{The Journal of Derivatives}, 7\penalty0 (1):\penalty0 64--78,
  1999.
\newblock ISSN 1074-1240.
\newblock \doi{10.3905/jod.1999.319106}.
\newblock URL \url{https://jod.pm-research.com/content/7/1/64}.

\bibitem[Taylor(2000)]{taylor2000quantile}
J.~W. Taylor.
\newblock A quantile regression neural network approach to estimating the
  conditional density of multiperiod returns.
\newblock \emph{Journal of Forecasting}, 19\penalty0 (4):\penalty0 299--311,
  2000.
\newblock
  \doi{https://doi.org/10.1002/1099-131X(200007)19:4<299::AID-FOR775>3.0.CO;2-V}.
\newblock URL
  \url{https://onlinelibrary.wiley.com/doi/abs/10.1002/1099-131X%28200007%2919%3A4%3C299%3A%3AAID-FOR775%3E3.0.CO%3B2-V}.

\bibitem[Tibshirani et~al.(2021)Tibshirani, Athey, Sverdrup, and Wager]{grf}
J.~Tibshirani, S.~Athey, E.~Sverdrup, and S.~Wager.
\newblock \emph{grf: Generalized Random Forests}, 2021.
\newblock URL \url{https://CRAN.R-project.org/package=grf}.
\newblock R package version 2.0.2.

\bibitem[Velthoen et~al.(2019)Velthoen, Cai, Jongbloed, and
  Schmeits]{Velthoenetal2019}
J.~Velthoen, J.-J. Cai, G.~Jongbloed, and M.~Schmeits.
\newblock Improving precipitation forecasts using extreme quantile regression.
\newblock \emph{Extremes}, 22\penalty0 (4):\penalty0 599--622, 2019.

\bibitem[Velthoen et~al.(2023)Velthoen, Dombry, Cai, and Engelke]{velthoen2021}
J.~Velthoen, C.~Dombry, J.-J. Cai, and S.~Engelke.
\newblock Gradient boosting for extreme quantile regression.
\newblock \emph{Extremes}, 26\penalty0 (4):\penalty0 639--667, 2023.
\newblock \doi{10.1007/s10687-023-00473-x}.

\bibitem[Wager and Athey(2018)]{wager2018estimation}
S.~Wager and S.~Athey.
\newblock Estimation and inference of heterogeneous treatment effects using
  random forests.
\newblock \emph{Journal of the American Statistical Association}, 113\penalty0
  (523):\penalty0 1228--1242, 2018.
\newblock \doi{10.1080/01621459.2017.1319839}.
\newblock URL \url{https://doi.org/10.1080/01621459.2017.1319839}.

\bibitem[Wang and Tsai(2009)]{WangTsai2009}
H.~Wang and C.-L. Tsai.
\newblock Tail index regression.
\newblock \emph{Journal of the American Statistical Association}, 104\penalty0
  (487):\penalty0 1233--1240, 2009.
\newblock \doi{10.1198/jasa.2009.tm08458}.
\newblock URL \url{https://doi.org/10.1198/jasa.2009.tm08458}.

\bibitem[Wang and Li(2013)]{huixia2013}
H.~J. Wang and D.~Li.
\newblock Estimation of extreme conditional quantiles through power
  transformation.
\newblock \emph{Journal of the American Statistical Association}, pages 1062 --
  1074, 2013.
\newblock \doi{10.1080/01621459.2013.820134}.
\newblock URL \url{https://doi.org/10.1080/01621459.2013.820134}.

\bibitem[Wang et~al.(2012)Wang, Li, and He]{huixia2012}
H.~J. Wang, D.~Li, and X.~He.
\newblock Estimation of high conditional quantiles for heavy-tailed
  distributions.
\newblock \emph{Journal of the American Statistical Association}, pages 1453 --
  1464, 2012.
\newblock \doi{10.1080/01621459.2012.716382}.
\newblock URL \url{https://doi.org/10.1080/01621459.2012.716382}.

\bibitem[Weissman(1978)]{wei1978}
I.~Weissman.
\newblock Estimation of parameters and large quantiles based on the k largest
  observations.
\newblock \emph{Journal of the American Statistical Association}, 73\penalty0
  (364):\penalty0 812--815, 1978.
\newblock \doi{10.1080/01621459.1978.10480104}.

\bibitem[Yang(1999)]{yang1999censored}
S.~Yang.
\newblock Censored median regression using weighted empirical survival and
  hazard functions.
\newblock \emph{Journal of the American Statistical Association}, 94\penalty0
  (445):\penalty0 137--145, 1999.
\newblock \doi{10.1080/01621459.1999.10473830}.
\newblock URL
  \url{https://www.tandfonline.com/doi/abs/10.1080/01621459.1999.10473830}.

\bibitem[Youngman(2019)]{benjamin2019}
B.~D. Youngman.
\newblock Generalized additive models for exceedances of high thresholds with
  an application to return level estimation for u.s. wind gusts.
\newblock \emph{Journal of the American Statistical Association}, 114\penalty0
  (528):\penalty0 1865--1879, 2019.
\newblock \doi{10.1080/01621459.2018.1529596}.
\newblock URL \url{https://doi.org/10.1080/01621459.2018.1529596}.

\bibitem[Yu and Jones(1998)]{yu1998}
K.~Yu and M.~C. Jones.
\newblock Local linear quantile regression.
\newblock \emph{Journal of the American Statistical Association}, 93\penalty0
  (441):\penalty0 228--237, 1998.
\newblock \doi{10.1080/01621459.1998.10474104}.
\newblock URL
  \url{https://www.tandfonline.com/doi/abs/10.1080/01621459.1998.10474104}.

\bibitem[Yu et~al.(2003)Yu, Lu, and Stander]{keming2003}
K.~Yu, Z.~Lu, and J.~Stander.
\newblock Quantile regression: Applications and current research areas.
\newblock \emph{Journal of the Royal Statistical Society. Series D (The
  Statistician)}, 52\penalty0 (3):\penalty0 331--350, 2003.
\newblock ISSN 00390526, 14679884.
\newblock URL \url{http://www.jstor.org/stable/4128208}.

\bibitem[Zhou(2009)]{zhou2009}
C.~Zhou.
\newblock Existence and consistency of the maximum likelihood estimator for the
  extreme value index.
\newblock \emph{J. Multivariate Anal.}, 100\penalty0 (4):\penalty0 794--815,
  2009.
\newblock ISSN 0047-259X.
\newblock \doi{10.1016/j.jmva.2008.08.009}.
\newblock URL \url{https://doi.org/10.1016/j.jmva.2008.08.009}.

\end{thebibliography}
\bibliographystyle{abbrvnat}
\end{document}